\documentclass[11pt]{article}

\usepackage[T1]{fontenc}
\usepackage[utf8]{inputenc}
\usepackage{mathpazo}
\usepackage{courier}
\usepackage{microtype}
\usepackage{geometry}
\usepackage{amsmath,amssymb,amsthm,mathtools}
\usepackage{mathrsfs}
\usepackage{bm}
\usepackage{booktabs}
\usepackage{array}
\usepackage{enumitem}
\usepackage{tikz-cd}
\usepackage[numbers,sort&compress]{natbib}
\usepackage{xcolor}
\usepackage{hyperref}
\usepackage[nameinlink,noabbrev,capitalise]{cleveref}

\hypersetup{
  colorlinks=true,
  linkcolor=blue!45!black,
  citecolor=green!35!black,
  urlcolor=blue!50!black,
  pdfauthor={Steven Rayan},
  pdftitle={Beyond K-Theory: Geometry and Holomorphy in Hyperbolic Band Theory},
  pdfsubject={Mathematical physics of hyperbolic band theory},
  pdfkeywords={K-theory, hyperbolic band theory, Jacobian, Higgs bundle, quantum geometry, thermodynamic Bloch limit},
  bookmarksnumbered=true,
  pdfdisplaydoctitle=true
}

\setlist{itemsep=0.25em,topsep=0.45em}
\allowdisplaybreaks

\newtheorem{theorem}{Theorem}[section]
\newtheorem{proposition}[theorem]{Proposition}

\newtheorem{corollary}[theorem]{Corollary}
\newtheorem{criterion}[theorem]{Criterion}

\theoremstyle{definition}
\newtheorem{definition}[theorem]{Definition}
\newtheorem{example}[theorem]{Example}

\theoremstyle{remark}
\newtheorem{remark}[theorem]{Remark}

\newcommand{\HH}{\mathbb H}
\newcommand{\CC}{\mathbb C}
\newcommand{\RR}{\mathbb R}
\newcommand{\ZZ}{\mathbb Z}

\newcommand{\UU}{\mathrm U}
\newcommand{\SU}{\mathrm{SU}}
\newcommand{\Jac}{\operatorname{Jac}}
\newcommand{\Pic}{\operatorname{Pic}}
\newcommand{\Hom}{\operatorname{Hom}}
\newcommand{\End}{\operatorname{End}}
\newcommand{\Tr}{\operatorname{Tr}}
\newcommand{\tr}{\operatorname{tr}}
\newcommand{\spec}{\operatorname{Spec}}

\newcommand{\ch}{\operatorname{ch}}
\newcommand{\Ind}{\operatorname{Ind}}
\newcommand{\cR}{\mathcal R}
\newcommand{\cM}{\mathcal M}
\newcommand{\cH}{\mathcal H}
\newcommand{\cP}{\mathcal P}
\newcommand{\cT}{\mathcal T}
\newcommand{\cE}{\mathcal E}
\newcommand{\cO}{\mathcal O}
\newcommand{\cG}{\mathcal G}

\newcommand{\dd}{\mathrm d}
\newcommand{\ii}{\mathrm i}
\newcommand{\e}{\mathrm e}
\newcommand{\eps}{\varepsilon}
\newcommand{\ket}[1]{\lvert #1\rangle}
\newcommand{\bra}[1]{\langle #1\rvert}
\newcommand{\braket}[2]{\langle #1\,|\,#2\rangle}

\title{\bfseries Beyond K-Theory:\\
Geometry and Holomorphy in Hyperbolic Band Theory}

\author{
Steven Rayan\\[0.35em]
\small Centre for Quantum Topology and Its Applications (quanTA)\\
\small Department of Mathematics and Statistics\\
\small University of Saskatchewan, Saskatoon, Saskatchewan S7N 5E6, Canada\\
\small \texttt{rayan@math.usask.ca}
}

\date{\today}

\begin{document}
\maketitle

\begin{abstract}
Topological \(K\)-theory is indispensable in band theory: it keeps track of
stable classes of occupied states and organizes robust phases.  For gapped
free-fermion phases in Euclidean crystals, it supplies the decisive stable
classification principle once symmetry and stabilization are fixed, although
it does not determine nonquantized physics within a phase.  In hyperbolic band
theory, by contrast, \(K\)-theory is less decisive for the full band problem: ordinary
\(K\)-classes collapse geometric and holomorphic variations already present in
generalized momentum sectors. The article has two intertwined purposes: to
explain why these \(K\)-invisible variations are physically important and to
formulate the distinction as an observable-factorization problem that yields new
mathematical results.

For a compact hyperbolic surface \(X\), we distinguish \emph{kinematical
holomorphy}, in which complex geometry organizes the sector spaces, from
\emph{dynamical holomorphy}, in which that geometry informs a Hamiltonian or
projector. We prove nonfactorization in three settings:
fibrewise \(K^0(X)\), occupied-state \(K^0(B)\), and operator-algebraic \(K\)-theory.
The affected quantities are spectra and Higgs spectral curves, the Berry holonomy and the quantum
metric, the partially filled Hall response, and the Fermi surface and nodal geometries.
We also isolate the quantized pairings and local charges retained by topology.
Up to an area factor, the Kotani--Sunada bottom-band Hessian is the Hodge
inner product on \(H^1(X;\RR)\).  Together with the integral intersection
form, this recovers the homology-marked principally polarized Jacobian and hence, by Torelli and
uniformization, the underlying complex and hyperbolic surface, albeit without a
full Teichm\"uller marking.  We derive the relevant hyperbolic-band
normalization and place this reconstruction in the information hierarchy.

Our framework rigorously separates finite-rank sectors from the
thermodynamic bulk.  Locally faithful covers reproduce polynomial traces
exactly and control continuous spectral observables.  For arithmetic
congruence towers, logarithmic systole growth yields power-law convergence in
quotient size for analytic observables and an \(O((\log |G_n|)^{-s})\) rate
for \(C^s\) observables, while established coherent large-rank limits recover
bulk density-of-states moments.  Together these results separate the stable
information retained by topology from the geometric, holomorphic, and
Hamiltonian data required by the remaining physics.
\end{abstract}

\noindent\textbf{Keywords.}
Hyperbolic band theory; Bloch theorem; \(K\)-theory; character variety;
Jacobian; Higgs bundle; quantum metric; Berry curvature; pseudoholomorphic
map; Hall conductance; semimetal; spectral geometry.

\medskip
\noindent\textbf{Mathematics Subject Classification (2020).}
81Q70, 81R60, 19L50, 14H60, 53C07, 58J50.

\tableofcontents

\section{Introduction}

\subsection{The point of this article}

Stable topological classification is one of the decisive
organizing ideas of modern condensed-matter theory
\cite{KitaevPeriodicTable,ProdanSchulzBaldes}.  It identifies information that
survives gap-preserving deformation, stabilization by trivial bands, and the
appropriate symmetry relations.  Those are exactly the identifications that
a theory of phases is meant to impose.

A phase classification and a determination of the physics within a phase are
different problems.  Two Hamiltonians in the same \(K\)-class may have
different dispersions, densities of states, localization lengths, optical
matrix elements, Berry-curvature profiles, quantum metrics, and responses away
from a quantized limit.  In Euclidean band theory this difference is familiar,
but the simplicity of the Brillouin torus can make it appear secondary.  In
hyperbolic band theory the distinction occurs at the first step: the objects
used as momenta already vary through moduli spaces that ordinary topological
\(K\)-theory collapses to a single class.

The article therefore has two closely related aims.  The first is expository:
to identify the geometric and holomorphic structures native to hyperbolic
band theory, show where they enter the Hamiltonian and its observables, and
explain why they cannot be replaced by a stable phase label.  The second is
mathematical: to make that explanation exact.  For each observable we specify
the information being forgotten, ask whether the observable descends through
the corresponding forgetful map, and prove either factorization or
nonfactorization.  The geometry must first be shown to be dynamically active,
rather than merely an elegant parametrization, before its loss can have
physical content.

Let
\[
    X=\HH^2/\Gamma
\]
be a closed hyperbolic surface of genus \(g\geq 2\), with \(\Gamma\) a
torsion-free cocompact Fuchsian group.  In the abelian sector, a hyperbolic
quasimomentum is a character
\[
   \chi:\Gamma\longrightarrow \UU(1).
\]
It defines a degree-zero flat line bundle \(L_\chi\), and the set of such
characters is the underlying real torus of the Jacobian:
\[
   \Hom(\Gamma,\UU(1))
   \cong H^1(X;\RR)/2\pi H^1(X;\ZZ)
   \cong \Jac(X).
\]
Yet every \(L_\chi\) is topologically trivial.  In particular,
\[
             [L_\chi]=[\cO_X]\qquad\text{in }K^0(X)
             \tag{1.1}\label{eq:intro-line-collapse}
\]
for every \(\chi\), where \(\cO_X\) denotes the trivial holomorphic line
bundle.  If a rank-one hyperbolic band energy
\(E_n(\chi)\) is nonconstant, then it cannot possibly be a function of the
class in \eqref{eq:intro-line-collapse}.

The same phenomenon persists nonabelianly.  An irreducible unitary
representation
\[
       \rho:\Gamma\longrightarrow \UU(r)
\]
corresponds, through the Narasimhan--Seshadri theorem, to a stable
degree-zero holomorphic bundle \(E_\rho\) on \(X\)
\cite{NarasimhanSeshadri,DonaldsonNS}.  At fixed rank,
\[
        [E_\rho]=r[\cO_X]\qquad\text{in }K^0(X),
        \tag{1.2}\label{eq:intro-rank-collapse}
\]
while the stable bundles move in a moduli space of complex dimension
\[
                  r^2(g-1)+1
\]
when the determinant is allowed to vary.  Thus the nonabelian Bloch sectors
live in information that the fibrewise \(K\)-class necessarily forgets.

Equations \eqref{eq:intro-line-collapse} and
\eqref{eq:intro-rank-collapse} are elementary.  They are the topological
starting point, not the conclusion.  Their physical force depends on showing
that the forgotten flat and holomorphic structures change an observable.
Conversely, a list of changing observables would not by itself identify the
precise loss of information.  Observable factorization supplies the bridge:
across three distinct uses of \(K\)-theory, we ask whether spectral and
geometric observables descend to the relevant \(K\)-class.  For discrete
Hamiltonians the basic mechanism is the closed-walk expansion
\[
   \Tr(H_\rho^n)
      =\sum_{\substack{w\ {\rm closed}\\ |w|=n}}
          A_w\,\tr\rho(\gamma_w),
          \tag{1.3}\label{eq:intro-wilson}
\]
and for continuum Hamiltonians by an equivariant heat-kernel formula.  The
coefficients \(A_w\) contain hopping and orbital information, while the Wilson
functions \(\tr\rho(\gamma_w)\) vary on the character variety.  In the
continuum formula the analogous term is weighted by the displacement, and in
constant curvature by the length, of the closed geodesic represented by
\(\gamma_w\).  Geometry and holonomy therefore enter the spectrum before one
asks for its stable topological class.

The continuum theory provides an especially sharp complex-geometric example.
Write
\(\Delta_a=(\dd+\ii a)^*(\dd+\ii a)\) for the scalar Laplacian twisted by a
harmonic real one-form \(a\), and let \(\lambda_0(a)\) denote its lowest
eigenvalue near \(a=0\).  If \(\alpha,\beta\) are harmonic one-forms, we prove
in the normalization used here that
\[
 \operatorname{Hess}_0(\lambda_0)(\alpha,\beta)
   =\frac{2}{\operatorname{Area}(X)}
      \int_X\alpha\wedge *\beta .
\tag{1.4}\label{eq:intro-hodge-hessian}
\]
The identity is an instance of the twisted-ground-state Hessian calculation
used by Kotani and Sunada in their analysis of heat kernels on abelian
coverings \cite{KotaniSunada2000}.  We include a self-contained derivation in
\Cref{thm:hodge-hessian} because its hyperbolic-band interpretation is central
here.  The right-hand side is the Hodge metric on \(H^1(X;\RR)\).  Together
with the integral intersection form, it determines the Hodge star, the period
matrix, and the principally polarized Jacobian with its induced homology
marking.  Torelli and uniformization then recover the underlying complex and
hyperbolic surface, but not the full Teichm\"uller marking: the Torelli group
acts trivially on integral homology \cite{FarbMargalit}.  This reconstruction
mechanism is closely related to the magnetic ground-state determination of a
surface's conformal class proved by Colbois, Provenzano, and Savo
\cite{ColboisProvenzanoSavo2025}.  Thus both the Hessian formula and the
Torelli implication have established antecedents.  Here they are interpreted
within hyperbolic band theory and placed in the \(K\)-theoretic information
hierarchy.  The class of the associated flat line bundle remains
\([\cO_X]\).  This example captures the two roles of the manuscript in a
single calculation: it explains concretely why \(K\)-invisible holomorphic
data matter to the physics, while the surrounding factorization framework
determines exactly what is forgotten and what the measured band data recover.

\subsection{Kinematical and dynamical holomorphy}

To see when holomorphy affects the physics, one must first distinguish two
roles of the complex structure.  It may organize generalized momenta without
affecting a particular Hamiltonian, or it may enter the kinetic operator,
couplings, spectral projection, and response relations.  The former role is
kinematical; the latter is dynamical.  Hyperbolic band theory supplies both
the geometric parameter spaces and the mechanisms by which their geometry can
become observable.

The foundational construction of hyperbolic band theory identifies abelian
Bloch states with automorphic states parametrized by a higher-dimensional
Jacobian \cite{MaciejkoRayan2021}.  Hyperbolic crystallography supplies the
unit cells, translation groups, side pairings, and finite Bloch matrices that
make this construction calculable \cite{BoettcherEtAl2022}.  The automorphic
Bloch theorems show how periodic boundary conditions on finite covers select
both one-dimensional and higher-dimensional irreducible representations
\cite{MaciejkoRayan2022}.  The higher-rank picture naturally leads to stable
bundles and their moduli.  Higgs bundles then enlarge the proposal by
introducing spectral curves and a complex form of momentum
\cite{KienzleRayan2022}.  Finally, the hyperbolic Bloch transform places the
representation-theoretic construction in a noncommutative
harmonic-analytic framework, organizes finite-rank transforms into an
asymptotic construction, and sends the geometric Laplacian to covariant
Laplacians on flat bundles \cite{NagyRayan2024}.

These constructions also belong to a broader analytic tradition.
Koc\'abov\'a and \v{S}\v{t}ov\'i\v{c}ek developed generalized Bloch
decompositions for Riemannian manifolds with a countable discrete symmetry
group, while Gruber formulated periodic elliptic theory through Hilbert
modules over group \(C^*\)-algebras
\cite{KocabovaStovicek2008,Gruber2001}.  These general frameworks complement
the character-space and algebro-geometric constructions used here.

The phrase \emph{noncommutative Bloch theory} also has an earlier,
operator-algebraic meaning.  Marcolli and Mathai studied elliptic operators
on good-orbifold covers that are invariant under a projective action of a
cocompact Fuchsian group.  Their twisted index theory, together with the
\(K\)-theory and canonical trace of twisted group \(C^*\)-algebras, yields
constraints on spectra and spectral gaps in real and complex hyperbolic
settings \cite{MarcolliMathai1999}.  This is not identical to the
representation-space transform of \cite{NagyRayan2024}: the former packages
projectively covariant observables in a noncommutative algebra, whereas the
latter uses finite-dimensional representation sectors as geometric fibres
in an asymptotic Bloch construction.  The two viewpoints are complementary,
and their distinction will matter below when we separate an observable
algebra from its representation-dependent spectral data.  In the second
part of their study, Marcolli and Mathai compute the
higher cyclic pairing underlying the hyperbolic Connes--Kubo formula and
obtain Hall values governed by the orbifold Euler characteristic
\cite{MarcolliMathai2001}.  Their later survey makes explicit the passage from
the Euclidean magnetic noncommutative torus to the hyperbolic good-orbifold
model \cite{MarcolliMathai2006}.  Taken together, these works form an
important operator-algebraic precursor to the questions considered here.

The same programme was presented, from the outset, as part of a broader
algebraic-geometric approach to quantum matter in
\cite{RayanQuantumMatter2026}.  There the Abel--Jacobi map, moduli of bundles,
and arithmetic structure are not decorative analogies: they provide the
natural passage between higher-genus configuration space and generalized
momentum data.  The present paper develops the complementary point that this
passage contains physically relevant information strictly finer than stable
topology.

These works do not say that an abelian Jacobian, or any fixed finite-rank
character space, exhausts the thermodynamic spectrum of a hyperbolic lattice.
Open hyperbolic patches have an extensive boundary, and different limiting
procedures probe different parts of the spectral problem.  Mosseri and Vidal
showed directly that rank-one Bloch-like states occupy a vanishing fraction of
the thermodynamic spectrum \cite{MosseriVidal2023}.  Lux and Prodan explained
more generally why finite-dimensional representations can produce spurious
spectra and why the reduced, regular-representation spectrum must instead be
recovered from coherent limiting sequences
\cite{LuxProdanCayley2024,LuxProdan2023}.  Shankar and Maciejko then proved
that the normalized \(\UU(r)\) hyperbolic-Bloch density-of-states moments
converge, for every fixed moment order, to the exact infinite-lattice moments
as \(r\to\infty\) \cite{ShankarMaciejko2024}.  Thus finite-rank character
spaces remain meaningful sector-resolved probes and approximation spaces,
but no fixed finite-rank sector carries nonzero weight in the thermodynamic
bulk density of states.

Supercell constructions recover nonabelian sectors systematically
\cite{LenggenhagerEtAl2023}.  Magnetic and topological models further show how
generalized momentum and negative curvature affect Dirac cones and topological
phases \cite{IkedaAokiMatsuki,UrwylerEtAl2022}, while nonabelian sectors
support phenomena with no direct two-dimensional Euclidean analogue
\cite{TummuruEtAl2024}.  This broader picture also determines how the
nonfactorization results below should be read.  They apply exactly within each
representation sector and to finite periodic devices.  Thermodynamic bulk
averages, by contrast, arise only after passage to the regular group trace
through a coherent finite-cover or large-rank limit.

One further point is essential.  A Jacobian has a complex structure and a
principal polarization, but a finite matrix \(H_\chi\) built only from group
labels and fixed hopping constants can be independent of that complex
structure after the character torus is identified by a marking.  In such a
model, complex geometry organizes the parameters but does not yet change an
observable.  It becomes dynamically active when the kinetic operator,
coupling strengths, parallel transport, orbital alignment, spectral
projection, or response protocol depends on the hyperbolic metric or on the
associated complex structure.  This distinction between
\emph{kinematical holomorphy} and \emph{dynamical holomorphy} will be maintained
throughout.

The distinction is experimentally relevant.  Hyperbolic and
topological-hyperbolic networks have been implemented in circuit,
photonic, and electrical platforms
\cite{KollarFitzpatrickHouck,YuPiaoPark,ZhangEtAl2022}.  In a recent
superconducting-circuit construction, the hyperbolic metric is encoded
directly into capacitive couplings, and spectral and localization properties
are measured for genus-two and genus-three designs \cite{XuEtAl2025}.  Such an
implementation changes neither the abstract genus nor the relevant
\(K\)-groups when the couplings are varied.  It nevertheless changes the
Hamiltonian.  This is precisely the regime in which geometry can be isolated
from topology in the laboratory.  This gives the paper's explanatory thesis
a direct experimental form: hold the stable topological data fixed, vary a
geometric or holomorphic input through a specified coupling, and measure an
observable that the ensuing sections prove cannot descend to the
\(K\)-class.

\subsection{Principal contributions and relation to prior work}

This subsection separates established mechanisms from the results developed
here.  The first principal mathematical contribution is a unified
nonfactorization framework.  Given a forgetful map from geometric or
Hamiltonian data to a \(K\)-class and an observable \(\mathcal O\), we ask
whether \(\mathcal O\) descends through that map.  In several settings the
answer is negative, under explicit hypotheses and with explicit witnesses.
The resulting statements concern spectra and Wilson moments, Higgs spectral
curves, Berry holonomy and the quantum metric, the partially filled Hall response,
and Fermi surface and nodal geometries.  The point is not the familiar slogan
that topology omits dispersion; it is the common theorem-level formulation
across fibrewise \(K^0(X)\), occupied-family \(K^0(B)\), and
operator-algebraic \(K_*(\mathcal A)\), together with proofs identifying the
data responsible for each failure of descent.

The Hodge--Torelli mechanism is the sharpest continuum illustration of this
first contribution, and its analytic core comes from earlier work.  Kotani
and Sunada computed the twisted-ground-state Hessian in their study of
abelian-cover heat-kernel asymptotics \cite{KotaniSunada2000}.  Colbois,
Provenzano, and Savo subsequently used weak magnetic ground-state data, the
Hodge Gram matrix, and Torelli theory to recover the conformal class of a
surface \cite{ColboisProvenzanoSavo2025}.  We give the calculation in the
normalization of hyperbolic Bloch characters, extract the homologically
marked infinitesimal reconstruction statement needed here, and connect it
with the discrete Wilson, Higgs, and quantum-geometric nonfactorization
results.

The second principal mathematical contribution is a systematic framework
separating finite-rank sector geometry from thermodynamic bulk spectral data.
The vanishing thermodynamic weight of fixed finite-dimensional sectors and
the necessity of large-rank or coherent limits are known from the work of
Mosseri--Vidal, Lux--Prodan, Shankar--Maciejko, and Nagy and the author
\cite{MosseriVidal2023,LuxProdanCayley2024,LuxProdan2023,
ShankarMaciejko2024,NagyRayan2024}.  The finite-cover part lies in the
classical tradition of approximation along residual towers and local spectral
convergence \cite{Luck1994,AbertThomVirag2014}; the exclusion of a fixed
finite-dimensional representation from the reduced problem follows from
standard weak-containment and amenability theory
\cite{Hulanicki1964}.  We derive the exact finite-propagation specialization
needed for finite-range Hamiltonians and its quantitative
continuous-functional-calculus corollary.  For arithmetic principal
congruence towers, we combine the latter with the logarithmic systole bound of
Katz, Schaps, and Vishne to obtain explicit rates in the quotient size
\cite{KatzSchapsVishne2007}.  We also record the short weak-containment
argument in the notation of hyperbolic band theory.  These results connect
sector-resolved calculations, finite periodic devices, and the canonical
regular trace within one framework.

Accordingly, the paper's own contribution is not the underlying approximation
or amenability principles, nor the established flat-bundle \(K\)-collapse,
closed-walk and heat-kernel expansions, bottom-band Hessian, or large-rank
thermodynamic limit.  It lies in the factorization criteria and
counterexamples across the discrete, continuum, holomorphic, and
operator-algebraic settings, and in their synthesis with the sector--bulk
distinction in the strict information hierarchy of
\Cref{thm:strict-hierarchy}.  The finite-cover identities and
weak-containment proposition make the interfaces of that hierarchy exact.
Together, these ingredients permit a direct comparison of the information
retained by geometric sectors, stable topology, finite periodic models, and
the thermodynamic bulk.

The analysis proceeds through four layers, in increasing order of
physical content.  Each layer first identifies what geometric or holomorphic
information is present and then tests, by a theorem or an explicit witness,
whether the relevant observable survives its removal.

\begin{enumerate}[label=\textup{(\Roman*)},leftmargin=2.8em]
\item \textbf{Foundational \(K\)-theoretic collapse.}
  The map from rank-\(r\) unitary Bloch sectors to the \(K\)-class of the
  corresponding flat bundle is constant.  This is proved in
  \Cref{thm:k-collapse}.  We also explain why the universal Poincar\'e bundle
  over \(X\times\Jac(X)\) is not topologically trivial, even though each of its
  fibres over \(\Jac(X)\) has the same class.  Families \(K\)-theory retains
  mixed topological information, but not the pointwise holomorphic or spectral
  data.  The first strict loss occurs when the flat connection or Dolbeault
  operator is forgotten, which locates precisely why holomorphic information
  is finer than the fibrewise class.

\item \textbf{Spectral nonfactorization.}
  For group-labelled finite Hamiltonians, \Cref{thm:walk-formula} proves
  \eqref{eq:intro-wilson}.  Since finitely many moments determine a finite
  spectrum, any nonconstant Wilson contribution obstructs factorization of
  the spectrum through \(K^0(X)\); see
  \Cref{thm:spectral-nonfactorization}.  \Cref{prop:heat-trace} gives the
  continuum heat-kernel counterpart.  \Cref{thm:abelian-blindness} shows
  exactly what is lost by restricting to rank-one momentum: labels in the
  commutator subgroup disappear, although rank-two sectors can recover their
  spectral effect.  In \Cref{thm:metric-nonfactorization} we vary the
  hyperbolic metric on a fixed topological surface and obtain different
  Laplace spectra with identical \(K\)-data.

\item \textbf{Geometric reconstruction and holomorphic refinement.}
  The missing variation is organized by flat connections, Wilson functions,
  geodesic lengths, period matrices, Higgs fields, spectral curves, Berry
  connections, and quantum metrics.  The infinitesimal metric response is
  given in \Cref{prop:metric-variation}, while \Cref{thm:hodge-hessian}
  restates and derives the Kotani--Sunada Hessian identity in the present
  normalization, and \Cref{cor:torelli-from-band} records its homologically
  marked Hodge--Torelli consequence.  These results provide the continuum
  model against which the surrounding nonfactorization framework is developed.
  The underlying \(K\)-class of a Higgs bundle is constant on fixed-rank and
  fixed-degree moduli, while its spectral curve varies; this is made precise
  in \Cref{prop:higgs-collapse}.  For an isolated band,
  \Cref{thm:qgt-not-k} constructs \(K\)-equivalent families with different
  dispersion, Berry holonomy, curvature distribution, and quantum metric.
  The trace and determinant bounds, together with the ideal-band saturation
  condition, have antecedents in the Chern-band literature
  \cite{Roy2014,MeraOzawaKahler,MeraOzawaFlat}.
  \Cref{prop:pseudoholomorphic-bound} gives their intrinsic formulation on a
  parameter Riemann surface and identifies saturation with local
  pseudoholomorphy.  These results and examples explain not merely that
  geometric data have been discarded, but how they reappear as measurable
  spectral or response data.

\item \textbf{Condensed-matter consequences and explicit witnesses.}
  A fully occupied gapped band has quantized Hall response determined by a
  Chern pairing, and a stable Fermi or nodal manifold may carry a local
  \(K\)-theoretic charge.  These are positive results for topology, not
  exceptions to the thesis.  We show in \Cref{prop:hall-dichotomy} that
  partially filled Hall response still depends on dispersion and the local
  Berry-curvature profile.  \Cref{prop:fermi-coarea} relates the density of
  states to the metric geometry of Fermi level sets, and
  \Cref{prop:clifford-nodes} separates the charge of a Dirac or Weyl-type node
  from its position, velocity tensor, tilt, and nodal embedding.  This
  distinction is applied to the hyperbolic nonabelian semimetal.
\end{enumerate}

Two later results sharpen the interfaces with operator algebras and finite
computations.  \Cref{prop:gap-label-scope} separates the deformation-invariant
label of a gap from its location, width, and surrounding spectral measure.
\Cref{prop:differential-k-not-band} shows that even a differential
\(K\)-class can be held fixed while dispersion or quantum metric varies.
Finally, \Cref{prop:finite-cover-moments} records the exact
finite-propagation identity for every fixed spectral moment and identifies
the representation-theoretic weighting implicit in compact periodic
quotients.  \Cref{cor:finite-cover-functional-calculus} turns it into a
quantitative trace estimate for continuous spectral observables in terms of
polynomial approximation and the local injectivity length of the cover, while
\Cref{cor:arithmetic-congruence-rate} gives explicit quotient-size rates for
arithmetic principal congruence towers.
\Cref{prop:no-finite-dimensional-bulk} then states, in the present setting,
the complementary classical weak-containment consequence: no fixed
finite-dimensional representation is a sector of the reduced
regular-representation problem, even though coherent finite-dimensional
families can recover its moments.

It is important that three different uses of \(K\)-theory appear in this
subject:
\[
\begin{array}{ccl}
K^0(X) &:& \text{the topological class of a flat or holomorphic bundle on }X,\\
K^0(B) &:& \text{the occupied-state bundle over a Bloch or moduli space }B,\\
K_*(\mathcal A) &:& \text{the stable topology of an observable algebra }\mathcal A.
\end{array}
\tag{1.5}\label{eq:three-k}
\]
They should not be conflated.  We treat all three.  The conclusion is the same
in each case, for different reasons: \(K\)-theory records robust stable data,
not a metric, a connection, a spectrum, or a Hamiltonian.

There is also a categorical distinction behind \eqref{eq:three-k}.  A Bloch
sector is naturally an object of a representation groupoid, an occupied band
is an object of the symmetric monoidal category of vector bundles on \(B\),
and operator-algebraic \(K\)-theory is built from projective modules or
\(K\)-cycles over \(\mathcal A\).  Passing to orbit spaces, isomorphism
classes, and Grothendieck groups are different operations.  We use categorical
language below where it makes these operations precise: gauge transformations
are morphisms, quotient stacks retain stabilizers, forgetting structure is a
functor, and \(K\)-theory is a stabilized, group-completed target.  The
nonfactorization results can then be read as statements that specified
observables do not descend along those functors.

Three logical levels will remain separate throughout.  The closed-walk,
flat-bundle, Higgs, and quantum-geometric results are exact statements about
a specified finite-dimensional sector or control family; they do not require
that family to carry thermodynamic spectral weight.  Statements about the
infinite lattice instead use the canonical trace of the regular
representation and require a coherent finite-cover or large-rank
approximation whenever they are inferred from finite matrices.  Finally, a
complex or holomorphic parametrization becomes physically restrictive only
after the Hamiltonian or spectral projector is coupled to it.  Keeping these
three levels distinct prevents sectorwise geometry, bulk averaging, and
dynamical holomorphy from being used as substitutes for one another.

\subsection{Organization}

The order of the paper follows a loss-and-recovery argument.  We first retain
the full geometric datum and then forget structure in controlled stages.
\Cref{sec:bloch-data} separates marked topology, hyperbolic and complex
geometry, quotient combinatorics, Hamiltonian coefficients, and
representation sectors.  \Cref{sec:k-collapse} applies the first forgetful
map, proves the fibrewise collapse, and uses the Poincar\'e family to explain
what families \(K\)-theory still retains.  \Cref{sec:walks} then establishes
that the discarded connection data remain spectrally active through
closed-walk Wilson functions, equivariant heat traces, and flux response.

The next three sections develop the geometric and holomorphic information
that is recovered from, or imposed on, the band data.
\Cref{sec:metric-complex} separates topology from metric and complex
structure and presents the Hodge--Torelli reconstruction example.
\Cref{sec:holomorphy} treats stable bundles, Higgs fields, spectral curves,
and the moduli-stack issue.  \Cref{sec:quantum-geometry} passes from the
holomorphic geometry of sector spaces to the differential geometry of
spectral projectors, including pseudoholomorphic bands.
\Cref{sec:fermi-hall} then tests the distinction against physical response:
it identifies the quantized regimes in which topology is sufficient and the
partially filled or gapless regimes in which further geometry is required.
It also marks the limit of the single-particle framework by distinguishing
topological semimetals from interacting metallic regimes that require
frequency-dependent Green-function and correlation data.

The final part tests whether more sophisticated topological or limiting
constructions repair the loss.  \Cref{sec:operator-k} treats
operator-algebraic and differential \(K\)-theory, including noncommutative
tori, twisted surface-group algebras, gap labels, and differential
refinements.  \Cref{sec:quasicrystals-experiments} compares the argument with
quasicrystals, separates finite systems from the regular-trace bulk, proves
the finite-cover approximation results, and proposes experiments.
\Cref{sec:completion} then recombines the preceding analysis into an
observable-dependent information hierarchy and practical geometric records
for the observables under study.  Two appendices supply the surface
\(K\)-theory computation and explicit genus-two models.

\section{Hyperbolic Bloch theory as a hierarchy of data}
\label{sec:bloch-data}

\subsection{The surface group and a marked hyperbolic crystal}

Let \(S\) be a closed, connected, oriented topological surface of genus
\(g\geq2\).  A marking identifies its fundamental group with
\[
 \Gamma_g=
 \left\langle
 a_1,b_1,\ldots,a_g,b_g\ \middle|\
 \prod_{j=1}^{g}[a_j,b_j]=1
 \right\rangle.
 \tag{2.1}\label{eq:surface-group}
\]
A point of Teichm\"uller space \(\cT_g\) equips \(S\) with a complex structure
\(J\), or equivalently with the unique compatible metric \(h_J\) of curvature
\(-1\).  We write \(X_J=(S,J)\) and, after choosing a Fuchsian realization,
\[
               X_J\cong \HH^2/\Gamma_J .
\]
The abstract marked group remains \(\Gamma_g\), while its embedding
\(\Gamma_J\subset\operatorname{PSL}(2,\RR)\) varies with \(J\).
More explicitly, the marking supplies an isomorphism
\(\iota_J:\Gamma_g\to\Gamma_J\).  Whenever the abstract group
\(\Gamma_g\) is written as acting on \(\HH^2\), this action is understood
through \(\iota_J\).

We use \(S\) for the underlying smooth oriented surface and \(X_J\) for the
same surface equipped with its complex structure.  Once \(J\) has been
fixed, we often abbreviate
\[
                    X=X_J,\qquad h=h_J,\qquad \Gamma=\Gamma_J.
                    \tag{2.1a}\label{eq:standing-abbreviations}
\]
Thus \(K^0(S)\) and \(K^0(X)\) denote the same topological \(K\)-group, while
the notation \(X\) signals that holomorphic data are in use.  We write
\(\operatorname{Herm}(N)\) for the Hermitian \(N\times N\) matrices and
\(\operatorname{Gr}(q,N)\) for the Grassmannian of \(q\)-planes in
\(\CC^N\).  Capital \(\Tr\) denotes a full Hilbert-space trace; lower-case
\(\tr\), with a subscript when needed, denotes a trace over an internal
orbital or representation fibre.

A hyperbolic crystal contains more information than \(S\) and \(J\).  One may
start with a \(\Gamma_J\)-invariant graph \(\widetilde{\cG}\subset\HH^2\), with
a finite quotient graph
\[
                   \mathcal Q=\widetilde{\cG}/\Gamma_J .
\]
Each quotient edge remembers which translate of a chosen fundamental domain
contains its endpoint.  Once lifts of quotient vertices are chosen, this is
recorded by a label \(\gamma_e\in\Gamma_g\) on every oriented edge \(e\), with
\(\gamma_{\bar e}=\gamma_e^{-1}\).  Changing the chosen lifts performs a
vertex gauge transformation on the labels.  Hyperbolic crystallography
systematizes precisely this passage from a regular tessellation to a
translation group, fundamental cell, and finite Bloch matrix
\cite{BoettcherEtAl2022}.

The labels determine how a representation of \(\Gamma_g\) enters a boundary
condition.  They do not determine hopping strengths.  Those may depend on
hyperbolic distances, edge angles, orbital orientations, magnetic transport,
or fabrication parameters.  We therefore separate:
\[
\boxed{
\begin{minipage}{0.88\linewidth}
\begin{enumerate}[label=\arabic*.,leftmargin=1.8em]
\item the marked topology \((S,\Gamma_g)\);
\item the hyperbolic and complex geometry \((h_J,J)\);
\item the quotient combinatorics and group labels \((\mathcal Q,\gamma)\);
\item the Hamiltonian data, including hoppings, on-site terms, and transport;
\item the representation or flat connection specifying a Bloch sector.
\end{enumerate}
\end{minipage}}
\tag{2.2}\label{eq:input-layers}
\]
Only after these choices does one obtain spectral projectors and their
topological classes.

\subsection{The abelian Brillouin torus and its complex structure}

Every character of \(\Gamma_g\) factors through the abelianization
\[
     \Gamma_g^{\mathrm{ab}}\cong H_1(S;\ZZ)\cong\ZZ^{2g}.
\]
Hence
\[
 B_{\mathrm{ab}}
 :=\Hom(\Gamma_g,\UU(1))
 \cong \Hom(H_1(S;\ZZ),\UU(1))
 \cong (\RR/2\pi\ZZ)^{2g}.
 \tag{2.3}\label{eq:abelian-bz}
\]
If \(\{[a_j],[b_j]\}_{j=1}^g\) is the symplectic homology basis induced by
\eqref{eq:surface-group}, a character is described by its flux angles
\[
  \chi_{\bm\theta}(a_j)=\e^{\ii\theta_{a_j}},
  \qquad
  \chi_{\bm\theta}(b_j)=\e^{\ii\theta_{b_j}}.
\]
This real torus is defined by topology and a marking.

The complex structure \(J\) adds nontrivial information.  Hodge star on
harmonic one-forms satisfies \(*_{h_J}^2=-1\), and therefore makes
\[
 H^1(S;\RR)/2\pi H^1(S;\ZZ)
\]
into a complex torus.  Dividing harmonic representatives by \(2\pi\) gives
the standard algebro-geometric normalization.  After normalizing a basis of
holomorphic one-forms, one then obtains a period matrix \(\Omega_J\) in the
Siegel upper half-space and
\[
               \Jac(X_J)\cong
               \CC^g/(\ZZ^g+\Omega_J\ZZ^g).
               \tag{2.4}\label{eq:period-jac}
\]
The intersection form determines the principal polarization.  Thus the
underlying real torus in \eqref{eq:abelian-bz} is independent of \(J\), while
its complex structure and polarized isomorphism class need not be
\cite{FarkasKra,BirkenhakeLange}.

For \(\chi\in B_{\mathrm{ab}}\), define
\[
 L_\chi=(\HH^2\times\CC)/\Gamma_g,
 \qquad
 \gamma\cdot(z,v)=(\gamma z,\chi(\gamma)v).
 \tag{2.5}\label{eq:flat-line}
\]
The flat unitary connection makes \(L_\chi\) a holomorphic line bundle of
degree zero.  This identifies the real character torus with
\(\Pic^0(X_J)=\Jac(X_J)\).  The identification is canonical once \(J\) is
fixed, but its holomorphic content depends on \(J\).

\subsection{Nonabelian sectors}

For \(r\geq1\), let
\[
 \cR_r(S):=\Hom(\Gamma_g,\UU(r))/\UU(r)
 \tag{2.6}\label{eq:unitary-character}
\]
be the unitary character space, where the quotient is by simultaneous
conjugation.  It is compact and stratified by representation type.  Its
irreducible locus \(\cR_r^{\mathrm{irr}}(S)\) is smooth after dividing by
\(\operatorname{PU}(r)\), and carries the Atiyah--Bott--Goldman symplectic
structure \cite{AtiyahBott,GoldmanSymplectic}.

A representation \(\rho\) defines the flat bundle
\[
 E_\rho=(\HH^2\times\CC^r)/\Gamma_g,
 \qquad
 \gamma\cdot(z,v)=(\gamma z,\rho(\gamma)v).
 \tag{2.7}\label{eq:flat-vector}
\]
For fixed \(J\), the Narasimhan--Seshadri correspondence identifies the
irreducible representations with stable holomorphic vector bundles of rank
\(r\) and degree zero.  Allowing polystable bundles completes the
correspondence across the reducible strata.  The stable moduli space
\(\cM_X^s(r,0)\) has complex dimension
\[
       \dim_\CC\cM_X^s(r,0)=r^2(g-1)+1,
       \tag{2.8}\label{eq:moduli-dim}
\]
and the fixed-determinant locus has dimension
\((r^2-1)(g-1)\).

Unlike \(B_{\mathrm{ab}}\), the space \(\cR_r(S)\) is not a group for
\(r>1\), and no single global set of momentum coordinates linearizes it.
Wilson functions
\[
          W_\gamma([\rho])=\tr\rho(\gamma),
          \qquad \gamma\in\Gamma_g,
          \tag{2.9}\label{eq:wilson-functions}
\]
provide conjugation-invariant functions and will play the role of
nonabelian Fourier modes.  They separate semisimple complex representations
when all words are allowed.  A finite Hamiltonian samples only finitely many
of them at any fixed spectral moment.

\subsection{Three bases and two different bundles}

Two bundles enter discussions of topology and are easily confused.

\begin{enumerate}[label=\textup{(\alph*)},leftmargin=2.5em]
\item The bundle \(E_\rho\to X\) belongs to a single representation sector.
  Its base is position space.  Its \(K\)-class is the object in
  \eqref{eq:intro-rank-collapse}.

\item If \(B\subset\cR_r(S)\) is a smooth region on which a collection of
  eigenvalues is separated by a gap, the corresponding spectral projectors
  may define an occupied-state bundle \(\mathscr E_{\mathrm{occ}}\to B\), or
  more naturally an equivariant or stack-theoretic bundle.  Its base is
  momentum or representation space, and its class lies in \(K^0(B)\), not in
  \(K^0(X)\).
\end{enumerate}

The first \(K\)-class cannot distinguish representation sectors at fixed
rank.  The second may carry Chern classes and classify stable band topology,
but it does not determine eigenvalues or differential geometry.  The
operator-algebraic \(K\)-groups in \eqref{eq:three-k} are a third object and
will be treated in \Cref{sec:operator-k}.

\begin{definition}[Geometric hyperbolic band datum]
\label{def:geometric-band-datum}
A \emph{geometric hyperbolic band datum} consists of a marked surface
\((S,\Gamma_g)\), a hyperbolic complex structure \((h_J,J)\), a finite
group-labelled quotient graph \((\mathcal Q,\gamma)\) or a continuum kinetic
operator, Hamiltonian coefficients \(\mathsf t\), and a smooth sector map
\(\rho_B:B\to\cR_r(S)\) from a parameter space \(B\).  In a finite model,
\(\mathsf t\) denotes the collection of on-site and hopping matrices,
including any metric-dependent transport rule.  Where a rank-\(q\) spectral
cluster is separated by a gap, the datum also includes its projector
\(P:B\to\operatorname{Gr}(q,N)\).
\end{definition}

There are forgetful maps from this datum to flat or occupied bundles, from
bundles with connection to topological bundles, and from topological bundles
to \(K\)-classes.  This definition is the common source for both strands of
the paper: it displays the geometry and holomorphy that the physical model
may use, and it supplies the domain of the forgetful maps tested by the new
nonfactorization results.  The next section locates the first strict loss;
\Cref{sec:walks} then proves that the discarded structure remains visible in
the spectrum.

\section{The \texorpdfstring{\(K\)}{K}-theoretic collapse of Bloch sectors}
\label{sec:k-collapse}

\subsection{Complex \texorpdfstring{\(K\)}{K}-theory of a closed surface}

The complex \(K\)-theory of a closed oriented surface is especially simple:
\[
       K^0(S)\cong \ZZ\oplus\ZZ,
       \qquad
       K^1(S)\cong\ZZ^{2g}.
       \tag{3.1}\label{eq:k-surface}
\]
The two components of \(K^0(S)\) may be taken to be rank and first Chern
number.  More precisely,
\[
 \widetilde K^0(S)\xrightarrow[\cong]{\ c_1\ }H^2(S;\ZZ)\cong\ZZ .
 \tag{3.2}\label{eq:reduced-k}
\]
This follows either from obstruction theory for stable unitary bundles or
from the Atiyah--Hirzebruch spectral sequence; details are recalled in
\Cref{app:k-surface}.  Consequently two complex vector bundles over \(S\)
have the same stable class exactly when they have the same rank and degree.

\begin{theorem}[\(K\)-theoretic collapse]
\label{thm:k-collapse}
Let \(S\) be a closed oriented surface of genus \(g\geq2\).
For every unitary representation \(\rho:\Gamma_g\to\UU(r)\), the associated
flat bundle satisfies
\[
                    [E_\rho]=r[\underline{\CC}]
                    \quad\text{in }K^0(S).
                    \tag{3.3}\label{eq:k-collapse}
\]
Here \(\underline{\CC}=S\times\CC\) is the trivial smooth complex line
bundle.
After a complex structure is chosen, the same statement holds for every
polystable holomorphic bundle of rank \(r\) and degree zero.  Therefore the
map
\[
  \kappa_r:\cR_r(S)\longrightarrow K^0(S),
  \qquad [\rho]\longmapsto[E_\rho],
  \tag{3.4}\label{eq:kappa-map}
\]
is constant.
\end{theorem}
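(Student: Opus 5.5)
The plan is to reduce everything to the computation \eqref{eq:k-surface}--\eqref{eq:reduced-k}. By that computation, a class in \(K^0(S)\) is determined by its rank and its first Chern number. So it is enough to show that \(E_\rho\) has rank \(r\) and \(c_1(E_\rho)=0\) in \(H^2(S;\ZZ)\).

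The rank is immediate from \eqref{eq:flat-vector}: the fibre is \(\CC^r\).

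For the Chern class I will use Chern--Weil theory. The bundle \(E_\rho\) carries the flat unitary connection induced by the trivial connection on \(\HH^2\times\CC^r\), since \(\rho(\gamma)\) is constant. The curvature \(F\) of this connection vanishes. Hence the de Rham representative \(\frac{\ii}{2\pi}\tr F\) of \(c_1(E_\rho)\) vanishes, so \(c_1(E_\rho)\) is zero in \(H^2(S;\RR)\).

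\textbf{Main point of care.} Chern--Weil only sees real cohomology, and the statement needs vanishing in \(H^2(S;\ZZ)\). Here this costs nothing, because \(H^2(S;\ZZ)\cong\ZZ\) is torsion-free, so the map \(H^2(S;\ZZ)\to H^2(S;\RR)\) is injective. Equivalently, \(\deg E_\rho=\int_S c_1=0\).

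A second route avoids differential forms. \(\det E_\rho\) is the flat line bundle attached to \(\det\rho:\Gamma_g\to\UU(1)\), and \(c_1(E_\rho)=c_1(\det E_\rho)\). Since \(\UU(1)\) is connected, the torus \(\Hom(\Gamma_g,\UU(1))\) in \eqref{eq:abelian-bz} is path-connected. Its elements therefore give homotopic classifying maps \(S\to B\UU(1)\), and each is homotopic to the trivial character. So \(\det E_\rho\) is topologically trivial. I would state the Chern--Weil argument as the main proof and mention this one as a check.

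For the holomorphic statement, a polystable bundle of rank \(r\) and degree \(0\) has underlying smooth bundle of rank \(r\) and \(c_1=\deg=0\). The same rank/degree criterion then gives \([E]=r[\underline{\CC}]\). Alternatively, Narasimhan--Seshadri (already quoted in the paper) realizes such a bundle as some \(E_\rho\), reducing this case to the first one.

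Constancy of \(\kappa_r\) follows at once, because the value \(r[\underline{\CC}]\) does not depend on \([\rho]\). The map is also well defined on conjugacy classes: if \(\rho'=g\rho g^{-1}\), then \(v\mapsto gv\) is an isomorphism \(E_\rho\cong E_{\rho'}\).
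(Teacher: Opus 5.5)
Your proof is correct and is essentially the paper's argument: Chern--Weil vanishing of \(c_1(E_\rho)\) in \(H^2(S;\RR)\), torsion-freeness of \(H^2(S;\ZZ)\), the rank/first-Chern-class description of \(K^0(S)\) in \eqref{eq:reduced-k}, and the degree-zero observation for the holomorphic case. Your determinant and path-connectedness cross-check, and your verification that \(\kappa_r\) is well defined on conjugacy classes, are both correct additions that the paper does not spell out.
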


\begin{proof}
The flat unitary connection on \(E_\rho\) has zero curvature.  Chern--Weil
theory therefore gives \(c_1(E_\rho)=0\) in \(H^2(S;\RR)\).  Since
\(H^2(S;\ZZ)\cong\ZZ\) is torsion-free, the integral first Chern class also
vanishes.  The class of a complex vector bundle on a two-dimensional CW
complex is determined stably by its rank and first Chern class, by
\eqref{eq:reduced-k}.  Hence \([E_\rho]\) is the class of a trivial rank-\(r\)
bundle.  A degree-zero holomorphic bundle has the same rank and vanishing
first Chern number, so the second assertion follows as well.
\end{proof}

\begin{corollary}[No fibrewise \(K\)-complete observable]
\label{cor:k-observable-constant}
Let \(Y\) be any set and let
\(\mathcal O:\cR_r(S)\to Y\) be an observable that factors through
\(\kappa_r\).  Then \(\mathcal O\) is constant.  In particular, a spectral,
dynamical, or response quantity that varies with \(\rho\) is not determined
by the \(K^0(S)\)-class of \(E_\rho\).
\end{corollary}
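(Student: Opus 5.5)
The corollary follows directly from \Cref{thm:k-collapse}. Suppose \(\mathcal O\) factors through \(\kappa_r\), so that
\[
\mathcal O=\phi\circ\kappa_r
\]
for some map of sets \(\phi:K^0(S)\to Y\). By \eqref{eq:k-collapse}, \(\kappa_r\) sends every class \([\rho]\in\cR_r(S)\) to the single element \(r[\underline{\CC}]\). Hence
\[
\mathcal O([\rho])=\phi\bigl(r[\underline{\CC}]\bigr)
\]
for all \([\rho]\), and \(\mathcal O\) is constant. The only point to confirm is that ``factors through'' is meant set-theoretically. No continuity or algebraic structure on \(\phi\) is needed, which is why \(Y\) is allowed to be an arbitrary set.

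The second sentence is the contrapositive. If a spectral, dynamical, or response quantity \(\mathcal O(\rho)\) takes two distinct values on sectors \(\rho_1,\rho_2\), then no \(\phi\) can satisfy \(\mathcal O=\phi\circ\kappa_r\). Indeed, such a \(\phi\) would have to assign two different values to the one class \(\kappa_r(\rho_1)=\kappa_r(\rho_2)=r[\underline{\CC}]\). So \(\mathcal O\) is not a function of the \(K^0(S)\)-class of \(E_\rho\).

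If a Hamiltonian-dependent observable is only defined on a subset \(B\subset\cR_r(S)\), for instance through a sector map \(\rho_B\), the same argument applies verbatim. We use the restriction of \(\kappa_r\) to \(B\), which is still constant. I would include this remark so the corollary applies directly to the sector families of \Cref{def:geometric-band-datum}.

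There is no real obstacle. The only care needed is to state that the conclusion is purely set-theoretic, and that the substantive input, the constancy of \(\kappa_r\), was already established in \Cref{thm:k-collapse} via Chern--Weil theory and \eqref{eq:reduced-k}. The existence of genuinely varying observables, which gives the corollary content, is deferred to the later nonfactorization results and is not needed here.
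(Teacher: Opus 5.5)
Your proposal is correct and matches the paper's proof: writing \(\mathcal O=F\circ\kappa_r\) and using the constancy of \(\kappa_r\) from \Cref{thm:k-collapse} gives \(\mathcal O([\rho])=F(r[\underline{\CC}])\) for every \([\rho]\), and the second sentence is the contrapositive. Your remark about restricting to a sector family \(B\) is a harmless extension that the paper leaves implicit.
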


\begin{proof}
If \(\mathcal O=F\circ\kappa_r\), then
\(\mathcal O([\rho])=F(r[\underline{\CC}])\) for every \([\rho]\), by
\Cref{thm:k-collapse}.
\end{proof}

For \(r=1\), \Cref{thm:k-collapse} says that the entire Jacobian is sent to
one point of \(K^0(S)\).  For \(r>1\), it collapses every stable moduli space
in \eqref{eq:moduli-dim}, as well as all of its polystable strata.  No
subtle property of \(K\)-theory is involved: the collapse is forced by the
classification of stable complex bundles on a surface.

\subsection{What the collapse does not imply}

There are three reasons not to overstate \Cref{thm:k-collapse}.

First, equality in \(K^0(S)\) is stable equality.  It does not assert that two
flat connections are gauge equivalent, that two holomorphic bundles are
holomorphically isomorphic, or that their spaces of sections are canonically
identified.  On a surface, a degree-zero smooth bundle is topologically
trivial, but a flat or Dolbeault operator on it can be highly nontrivial.

Second, the representation \(\rho\) is additional structure on the smooth
bundle.  Parallel transport around a loop \(\gamma\) gives
\(\rho(\gamma)\), and the class functions \(W_\gamma\) in
\eqref{eq:wilson-functions} vary even while \([E_\rho]\) does not.  Forgetting
the flat connection before applying \(K\)-theory is precisely what removes
this information.

Third, a family of fibrewise trivial bundles need not be trivial as a family.
This is already visible in rank one.

\subsection{The Poincar\'e family: fibres and families}

Let \(\mathcal J_X=\Jac(X)\), and choose a base point \(x_0\in X\).  The
symbol \(\mathcal J_X\) is used here to avoid confusing the Jacobian with the
complex structure \(J\).  There is a normalized Poincar\'e line bundle
\[
                   \cP\longrightarrow X\times\mathcal J_X
\]
whose restriction to \(X\times\{\chi\}\) is \(L_\chi\), and whose restriction
to \(\{x_0\}\times\mathcal J_X\) is trivial.  Its first Chern class lies in
the mixed
K\"unneth summand
\[
 c_1(\cP)\in H^1(X;\ZZ)\otimes H^1(\mathcal J_X;\ZZ)
          \subset H^2(X\times\mathcal J_X;\ZZ).
 \tag{3.5}\label{eq:poincare-c1-space}
\]
Under the natural identification
\(H^1(\mathcal J_X;\ZZ)\cong H_1(X;\ZZ)\), it is the canonical evaluation
tensor.  We now fix the conventions used below.  Choose a symplectic
integral basis
\[
 \alpha_1,\ldots,\alpha_g,\beta_1,\ldots,\beta_g
       \quad\text{of }H^1(X;\ZZ),
 \qquad
 \int_X\alpha_i\wedge\beta_j=\delta_{ij},
\]
and corresponding integral flux classes
\(u_1,\ldots,u_g,v_1,\ldots,v_g\in H^1(\mathcal J_X;\ZZ)\).
We normalize these classes so that
\[
 c_1(\cP)
   =\sum_{i=1}^g
       \bigl(\alpha_i\otimes u_i+\beta_i\otimes v_i\bigr),
 \qquad
 \theta:=\sum_{i=1}^g u_i\wedge v_i,
 \tag{3.6}\label{eq:poincare-c1}
\]
where \(\theta\) is the theta class of the principal polarization.  This
choice fixes the orientation of the flux coordinates and the sign in the
index formula.

\begin{proposition}[Fibrewise collapse versus family topology]
\label{prop:poincare-family}
The restriction of \([\cP]\) to \(K^0(X\times\{\chi\})\) equals
\([\cO_X]\) for every \(\chi\in\mathcal J_X\), but
\[
                  [\cP]\neq[\cO_{X\times\mathcal J_X}]
                  \quad\text{in }K^0(X\times\mathcal J_X).
                  \tag{3.7}\label{eq:poincare-nontrivial}
\]
Moreover, the families Dolbeault index
\[
       \Ind(\bar\partial_{\cP})
       :=(p_{\mathcal J})_![\cP]\in K^0(\mathcal J_X)
       \tag{3.8}\label{eq:family-index}
\]
has
\[
 \ch_1\Ind(\bar\partial_{\cP})=-\theta,
 \tag{3.8a}\label{eq:family-index-theta}
\]
where \(p_{\mathcal J}:X\times\mathcal J_X\to\mathcal J_X\) is the
projection.
\end{proposition}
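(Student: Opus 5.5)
The plan has three parts: the fibrewise collapse, the nontriviality of \([\cP]\), and the index computation.

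\emph{Fibrewise collapse.} This follows directly from \Cref{thm:k-collapse} in rank one. The restriction of \(\cP\) to \(X\times\{\chi\}\) is \(L_\chi\), which is flat of degree zero. Hence \([L_\chi]=[\underline{\CC}]=[\cO_X]\) in \(K^0(X)\).

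\emph{Nontriviality of \([\cP]\).} We detect it with the Chern character. The map \(\ch:K^0(X\times\mathcal J_X)\to H^{\mathrm{even}}(X\times\mathcal J_X;\mathbb Q)\) is a homomorphism. Since \(\cP\) is a line bundle,
\[
\ch(\cP)=\e^{c_1(\cP)}=1+c_1(\cP)+\tfrac12 c_1(\cP)^2+\cdots,
\]
while \(\ch(\cO)=1\). By \eqref{eq:poincare-c1}, \(c_1(\cP)\) is the nonzero mixed Künneth class \(\sum_i(\alpha_i\otimes u_i+\beta_i\otimes v_i)\). Its nonvanishing uses only the Künneth theorem and the fact that the \(\alpha_i,\beta_i\) and \(u_i,v_i\) are parts of bases of torsion-free cohomology groups. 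Therefore \(\ch(\cP)\neq\ch(\cO)\), which gives \eqref{eq:poincare-nontrivial}.

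\emph{Index via Grothendieck--Riemann--Roch.} We compute the index using the families Atiyah--Singer (equivalently Grothendieck--Riemann--Roch) formula for the projection \(p_{\mathcal J}\), which has fibre \(X\):
\[
\ch\Ind(\bar\partial_{\cP})=(p_{\mathcal J})_*\bigl(\ch(\cP)\,\mathrm{Td}(X)\bigr),
\]
where \((p_{\mathcal J})_*\) is integration over the fibre \(X\). Here \(\mathrm{Td}(X)=1+\tfrac12 c_1(TX)\). Only terms of total \(X\)-degree \(2\) survive fibre integration. The degree-\(2\) part of \(\ch_1\) on \(\mathcal J_X\) therefore comes only from the term of \(X\)-degree \(2\) and \(\mathcal J\)-degree \(2\), namely \(\tfrac12 c_1(\cP)^2\). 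Terms such as \(c_1(\cP)\cdot\tfrac12c_1(TX)\) have \(X\)-degree \(3\) and vanish. (The degree-\(0\) part is \(\tfrac12\int c_1(TX)=1-g\), the expected index.)

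We then expand \(c_1(\cP)^2\) with graded signs. Cross terms such as \((\alpha_i\otimes u_i)(\beta_j\otimes v_j)\) equal \(-(\alpha_i\beta_j)\otimes(u_iv_j)\), because moving \(u_i\) past \(\beta_j\), both of degree one, introduces a sign. Squares like \((\alpha_i\otimes u_i)^2\) vanish. Using \(\int_X\alpha_i\beta_j=\delta_{ij}\), \(\int_X\beta_j\alpha_i=-\delta_{ij}\), and \(v_iu_i=-u_iv_i\), the two orderings \((i,j)\) and \((j,i)\) each contribute \(-u_i\wedge v_i\). Hence
\[
\tfrac12\,(p_{\mathcal J})_*\,c_1(\cP)^2=-\sum_i u_i\wedge v_i=-\theta.
\]

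\emph{Main obstacle.} The only delicate point is this sign bookkeeping. It depends on the Koszul sign convention for the cross product and on the orientation convention for fibre integration; we fix the latter so that \(\int_X\) is applied to the left \(X\)-factor. These conventions are exactly what the normalization \eqref{eq:poincare-c1} was chosen to fix. As a sanity check, the result \(-\theta\) agrees with the classical statement that \(R^1p_*\cP\) has first Chern class \(\theta\) for the Poincaré bundle, which is consistent with \(\Ind=R^0-R^1\).
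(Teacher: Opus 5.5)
Your proposal is correct and takes essentially the same route as the paper. The fibrewise collapse comes from the restricted class being that of \(L_\chi\), which has zero first Chern class; you obtain this via \Cref{thm:k-collapse}, and the paper by noting that restriction kills the \(H^1(\mathcal J_X;\ZZ)\)-factor. Nontriviality comes from the nonzero mixed class \(c_1(\cP)\), which you detect through \(\ch\), whose degree-two part is exactly that class. Finally, \(\ch_1\Ind(\bar\partial_{\cP})=\tfrac12(p_{\mathcal J})_*\bigl(c_1(\cP)^2\bigr)=-\theta\) follows from the families index formula. Your explicit Koszul-sign expansion spells out the step the paper compresses into ``graded commutativity gives.''
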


\begin{proof}
Restriction to \(X\times\{\chi\}\) annihilates the
\(H^1(\mathcal J_X;\ZZ)\)-factor in \eqref{eq:poincare-c1-space}, so each
restricted
line bundle has zero first Chern class.  The mixed class
\eqref{eq:poincare-c1} is nonzero on \(X\times\mathcal J_X\), proving
\eqref{eq:poincare-nontrivial}.  The families index theorem gives
\[
 \ch\Ind(\bar\partial_{\cP})
   =(p_{\mathcal J})_*\!\left(
       \e^{c_1(\cP)}\operatorname{Td}(TX)
     \right).
 \tag{3.9}\label{eq:family-index-ch}
\]
Here \(\operatorname{Td}(TX)\) is the Todd class of the tangent bundle.
The degree-zero and degree-two components on \(\mathcal J_X\) are
\[
 \ch_0\Ind(\bar\partial_{\cP})=1-g,
 \qquad
 \ch_1\Ind(\bar\partial_{\cP})
     =\frac12(p_{\mathcal J})_*\!\left(c_1(\cP)^2\right).
 \tag{3.9a}\label{eq:family-index-low-components}
\]
The first identity is Riemann--Roch for a degree-zero line bundle.  In the
second, the two factors of the mixed class \(c_1(\cP)\) provide precisely the
two degrees along \(X\) that survive fibre integration.  With the convention
in \eqref{eq:poincare-c1}, graded commutativity gives
\[
 (p_{\mathcal J})_*\!\left(c_1(\cP)^2\right)
      =-2\sum_{i=1}^g u_i\wedge v_i=-2\theta.
\]
Thus \(\ch_1\Ind(\bar\partial_{\cP})=-\theta\), proving
\eqref{eq:family-index-theta}.  This is the usual index-theoretic
construction of the principal polarization
\cite{AtiyahSingerFamilies,BirkenhakeLange}.
\end{proof}

\begin{remark}
\Cref{prop:poincare-family} refines the preceding conclusion.  The
\(K\)-theory of a \emph{family} can retain global mixed topology that
fibrewise \(K^0(X)\) loses.  It still does not determine the flat connection
on each \(L_\chi\), the period matrix of \(X\), a twisted Hamiltonian
\(H_\chi\), or the functions \(\chi\mapsto E_n(\chi)\).  Families
\(K\)-theory is therefore richer than the map \(\kappa_1\), while remaining a
stable topological quotient of the geometric band datum.
\end{remark}

\subsection{Where the first strict loss occurs}

It is useful to locate the collapse precisely.  Let
\(\operatorname{Vect}^{\mathrm{top}}_{r,0}(S)\) denote the set of smooth
isomorphism classes of rank-\(r\), degree-zero complex vector bundles on
\(S\), and let \(\cM_X^{\mathrm{ps}}(r,0)\) denote the moduli space of
polystable holomorphic bundles of that rank and degree.  For fixed \(J\), the
relevant maps are
\[
\begin{tikzcd}[row sep=1.15em]
\Hom(\Gamma_g,\UU(r))
  \arrow[d, "\text{conjugacy quotient}"]\\
\cR_r(S)
  \arrow[d, "\mathrm{NS}_J"', "\cong"]\\
\cM_X^{\mathrm{ps}}(r,0)
  \arrow[d, "\text{forget }\bar\partial_E"]\\
\operatorname{Vect}^{\mathrm{top}}_{r,0}(S)
  \arrow[d, "\text{stable class}"]\\
K^0(S).
\end{tikzcd}
\tag{3.10}\label{eq:forgetful-flat-holo-k}
\]
The first arrow removes a choice of fibre framing.  The middle arrow is the
Narasimhan--Seshadri correspondence, not a loss of information: a polystable
degree-zero holomorphic bundle determines its compatible flat unitary
connection up to unitary gauge.  The strict geometric loss occurs when the
flat or holomorphic structure is forgotten.  Indeed,
\(\operatorname{Vect}^{\mathrm{top}}_{r,0}(S)\) has a single element, because
rank and first Chern class classify smooth complex bundles on a surface.

At the last arrow the situation is simpler.  On a surface and at fixed rank,
passage from a smooth bundle to its \(K\)-class introduces no further
ambiguity: the \(K\)-class already determines rank and degree.  Thus the
sectorwise collapse in \Cref{thm:k-collapse} occurs when the connection or
Dolbeault operator is forgotten, rather than through stabilization alone.  On
a higher-dimensional parameter space \(B\), by contrast, stabilization can
introduce additional losses.  In later sections we place the Hamiltonian,
spectral projector, Berry connection, and quantum metric to the left of
\eqref{eq:forgetful-flat-holo-k}; the expanded hierarchy in
\Cref{sec:completion} records exactly which forgetful step is responsible for
each missing observable.

\section{Closed walks, Wilson functions, and spectral nonfactorization}
\label{sec:walks}

\subsection{A finite group-labelled Hamiltonian}

We now give a model-independent reason that flat-bundle data affect the
spectrum.  Let \(\mathcal Q=(V,\cE)\) be a finite directed graph.  Every
geometric edge
occurs with both orientations \(e\) and \(\bar e\), where
\[
 s(\bar e)=t(e),\qquad t(\bar e)=s(e).
\]
Choose labels \(\gamma_e\in\Gamma_g\) and hopping matrices
\(T_e\in\operatorname{Mat}_m(\CC)\) such that
\[
              \gamma_{\bar e}=\gamma_e^{-1},
              \qquad T_{\bar e}=T_e^*.
              \tag{4.1}\label{eq:reverse-edge}
\]
The integer \(m\) counts internal orbitals at each quotient vertex.  We also
allow an on-site Hermitian matrix \(V_v\in\operatorname{Herm}(m)\) at every
vertex.

For a unitary representation \(\rho:\Gamma_g\to\UU(r)\), define the twisted
Hamiltonian on
\[
       \cH_\rho=\CC^V\otimes\CC^m\otimes\CC^r
\]
by
\begin{align}
H_\rho
  ={}&\sum_{v\in V}\ket v\bra v\otimes V_v\otimes I_r \notag\\
   &+\sum_{e\in\cE^+}
      \left(
       \ket{t(e)}\bra{s(e)}\otimes T_e\otimes\rho(\gamma_e)
       +
       \ket{s(e)}\bra{t(e)}\otimes T_e^*\otimes\rho(\gamma_e)^*
      \right),
      \tag{4.2}\label{eq:twisted-H}
\end{align}
where \(\cE^+\) contains one orientation of each geometric edge.  This is the
usual finite Bloch matrix, with scalar phases replaced by matrix-valued
holonomies.  It includes the abelian matrices of
\cite{MaciejkoRayan2021,BoettcherEtAl2022} and the higher-dimensional sectors
of \cite{MaciejkoRayan2022,LenggenhagerEtAl2023}.

If \(\rho'=u\rho u^{-1}\), then
\[
 H_{\rho'}=(I\otimes I\otimes u)H_\rho(I\otimes I\otimes u^{-1}).
 \tag{4.3}\label{eq:rho-conjugacy}
\]
The spectrum therefore descends from the representation space to
\(\cR_r(S)\).  A change of lifts of quotient vertices also conjugates
\eqref{eq:twisted-H} by a block-diagonal unitary.  The spectrum is independent
of this vertex gauge, as it must be.

It is convenient to regard an on-site term as a labelled loop of group label
\(1\).  A directed walk
\[
             w=(e_1,e_2,\ldots,e_n),
             \qquad t(e_j)=s(e_{j+1}),
\]
has ordered group element and orbital amplitude
\[
   \gamma(w)=\gamma_{e_n}\cdots\gamma_{e_2}\gamma_{e_1},
   \qquad
   T(w)=T_{e_n}\cdots T_{e_2}T_{e_1}.
   \tag{4.4}\label{eq:walk-data}
\]
The order convention is chosen to match operator composition.  For scalar
hoppings, \(T(w)\) is simply the product of hopping amplitudes along \(w\).

\begin{theorem}[Closed-walk formula]
\label{thm:walk-formula}
For every positive integer \(n\),
\[
  \Tr_{\cH_\rho}(H_\rho^n)
   =
  \sum_{\substack{w\ {\rm directed\ closed\ walk}\\ |w|=n}}
        \tr_{\CC^m}T(w)\,
        \tr_{\CC^r}\rho(\gamma(w)).
  \tag{4.5}\label{eq:walk-trace}
\]
The sum includes the on-site loops described above.  In particular,
\(\Tr(H_\rho^n)\) is a finite linear combination of Wilson functions on
\(\cR_r(S)\).
\end{theorem}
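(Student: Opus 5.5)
The plan is to write \(H_\rho\) as a single sum over all oriented edges, on-site loops included, and expand the power directly. Regard each on-site term as an oriented loop \(\ell_v\) at \(v\) with \(s(\ell_v)=t(\ell_v)=v\), \(T_{\ell_v}=V_v\) and \(\gamma_{\ell_v}=1\), so that \(\rho(\gamma_{\ell_v})=I_r\). Using \eqref{eq:reverse-edge} together with unitarity, \(\rho(\gamma_e)^*=\rho(\gamma_e^{-1})=\rho(\gamma_{\bar e})\). Hence \eqref{eq:twisted-H} becomes
\[
 H_\rho=\sum_{e}\ket{t(e)}\bra{s(e)}\otimes T_e\otimes\rho(\gamma_e),
\]
where the sum runs over every orientation of every geometric edge together with the on-site loops. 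The term \(\ket{s(e)}\bra{t(e)}\otimes T_e^*\otimes\rho(\gamma_e)^*\) is exactly the summand indexed by \(\bar e\). This rewriting is the only step needing care: one must check that each oriented edge appears exactly once, so that \(\cE^+\) together with its reversals partitions the oriented edges, and that the loops are counted once.

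Next expand \(H_\rho^n\) multilinearly as a sum over ordered tuples \((e_n,\dots,e_1)\), taking the product in that order to match operator composition. Each summand is a tensor product of three factors. The first is \(\ket{t(e_n)}\bra{s(e_n)}\cdots\ket{t(e_1)}\bra{s(e_1)}\), which vanishes unless \(t(e_j)=s(e_{j+1})\) for all \(j\); in that case it equals \(\ket{t(e_n)}\bra{s(e_1)}\). The second is \(T_{e_n}\cdots T_{e_1}=T(w)\). The third is \(\rho(\gamma_{e_n})\cdots\rho(\gamma_{e_1})=\rho(\gamma(w))\), because \(\rho\) is a homomorphism and \eqref{eq:walk-data} uses the same ordering. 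The surviving tuples are therefore exactly the directed walks \(w\) of length \(n\).

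Finally take the trace, using multiplicativity of the trace on tensor products, \(\Tr(A\otimes B\otimes C)=\tr A\,\tr B\,\tr C\). The position factor has trace \(\braket{s(e_1)}{t(e_n)}\), which equals \(1\) if the walk is closed and \(0\) otherwise. This yields \eqref{eq:walk-trace}. Since \(\mathcal Q\) is finite, only finitely many walks of length \(n\) exist. Each \(\tr\rho(\gamma(w))=W_{\gamma(w)}([\rho])\) is a Wilson function \eqref{eq:wilson-functions}, and it is well defined on \(\cR_r(S)\) by conjugation invariance of the trace, consistent with \eqref{eq:rho-conjugacy}. This gives the final assertion. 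I expect no genuine obstacle; the bookkeeping of orientation and ordering conventions is the main point to verify.
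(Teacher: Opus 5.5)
Your proposal is correct and follows the paper's proof. You expand \(H_\rho^n\), keep only the products of vertex matrix units along concatenated edges (these have trace one exactly for closed walks), and factor the trace over the tensor product into \(\tr T(w)\) and \(\tr\rho(\gamma(w))\). Your rewriting of \eqref{eq:twisted-H} as a single sum over all oriented edges, using \(\rho(\gamma_e)^*=\rho(\gamma_{\bar e})\) and \(T_e^*=T_{\bar e}\), only spells out a bookkeeping step that the paper leaves implicit.
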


\begin{proof}
Expand \(H_\rho^n\) using \eqref{eq:twisted-H}.  A product of matrix units
\[
 \ket{t(e_n)}\bra{s(e_n)}
 \cdots
 \ket{t(e_1)}\bra{s(e_1)}
\]
is nonzero exactly when the edges concatenate.  Its trace over \(\CC^V\)
vanishes unless the resulting walk is closed, in which case it is one.
The trace over a tensor product factors.  The internal-orbital product is
\(T(w)\), and the representation product is
\[
 \rho(\gamma_{e_n})\cdots\rho(\gamma_{e_1})
 =\rho(\gamma(w)).
\]
Summing the surviving terms gives \eqref{eq:walk-trace}.
\end{proof}

\begin{remark}
Formula \eqref{eq:walk-trace} separates two kinds of geometry.  The conjugacy
class of \(\gamma(w)\) records how the walk closes through the side pairings
of the fundamental cell.  The coefficient \(\tr T(w)\) records orbital,
metric, and fabrication data along the same walk.  The spectrum mixes them;
the class \([E_\rho]\in K^0(S)\) records neither.
\end{remark}

\subsection{A nonfactorization theorem}

Write \(N=|V|mr\).  The unordered spectrum of \(H_\rho\), including
multiplicity, is determined by the first \(N\) power sums
\[
          M_n(\rho):=\Tr(H_\rho^n),\qquad 1\leq n\leq N,
\]
through the Newton identities.  This turns the closed-walk formula into a
sharp obstruction.

\begin{theorem}[Spectral nonfactorization]
\label{thm:spectral-nonfactorization}
Suppose that \(M_n:\cR_r(S)\to\CC\) is nonconstant for some
\(1\leq n\leq N\).  Then there is no map
\[
           F:K^0(S)\longrightarrow
             \{\text{unordered \(N\)-point spectra in \(\RR\)}\}
\]
such that
\[
                 \spec(H_\rho)=F([E_\rho])
                 \qquad\text{for every }\rho.
                 \tag{4.6}\label{eq:no-spectrum-factor}
\]
More generally, neither the empirical density of states nor any observable
that determines \(M_n\) can factor through \([E_\rho]\).
\end{theorem}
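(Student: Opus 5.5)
The argument is by contradiction, combining the constancy of \(\kappa_r\) from \Cref{thm:k-collapse} with the Newton identities. Suppose a map \(F\) satisfying \eqref{eq:no-spectrum-factor} exists. By \Cref{thm:k-collapse}, \([E_\rho]=r[\underline{\CC}]\) for every \(\rho\). Hence \(\spec(H_\rho)=F(r[\underline{\CC}])\) would be one fixed unordered \(N\)-point multiset \(\{\lambda_1,\dots,\lambda_N\}\), independent of \([\rho]\in\cR_r(S)\). This is just \Cref{cor:k-observable-constant} applied with \(Y\) the set of unordered \(N\)-point spectra.

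Next we pass from the spectrum to the moments. \(H_\rho\) is Hermitian on \(\cH_\rho\), which has dimension \(N=|V|mr\). Diagonalizing gives
\[
M_n(\rho)=\Tr(H_\rho^n)=\sum_{k=1}^N\lambda_k^n ,
\]
so each \(M_n\) is a function of the unordered spectrum alone. Under the assumption, every \(M_n\) with \(1\le n\le N\) would therefore be constant on \(\cR_r(S)\). This contradicts the hypothesis that some \(M_n\) is nonconstant. Note that \(M_n\) is indeed well defined on \(\cR_r(S)\), by the conjugacy invariance \eqref{eq:rho-conjugacy}, or directly from \Cref{thm:walk-formula}, which writes it as a combination of Wilson functions. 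The Newton identities give the converse direction, that power sums determine the spectrum, but we do not need it here. Only the trivial implication "spectrum determines moments" is used.

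For the general clause, let \(\mathcal O\) be any observable on \(\cR_r(S)\) that determines \(M_n\), meaning \(M_n=G\circ\mathcal O\) for some map \(G\). If \(\mathcal O=F\circ\kappa_r\), then \(M_n=G\circ F\circ\kappa_r\) would be constant, again a contradiction. The empirical density of states \(N^{-1}\sum_k\delta_{\lambda_k}\) is such an observable, since \(M_n\) is \(N\) times its \(n\)-th moment.

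There is no genuine obstacle here; the content lies in \Cref{thm:k-collapse}. The only points to check are that \(M_n\) descends to \(\cR_r(S)\), so that "nonconstant" makes sense, and that the factorization is tested over all \(\rho\) of the fixed rank \(r\), which is the domain on which \(\kappa_r\) is constant.
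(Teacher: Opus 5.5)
Your proof is correct and follows the paper's argument essentially verbatim. \Cref{thm:k-collapse} forces the putative factorization to make the spectrum constant, hence every power sum constant, contradicting the nonconstancy of $M_n$; the same reasoning handles any observable from which $M_n$ can be recovered, and your added remarks (only the direction ``spectrum determines moments'' is needed, and the empirical density of states recovers $M_n$ as $N$ times its $n$-th moment) are accurate refinements of the paper's one-line treatment.
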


\begin{proof}
By \Cref{thm:k-collapse}, \([E_\rho]=r[\underline{\CC}]\) for every
\(\rho\).  If \eqref{eq:no-spectrum-factor} held, the spectrum would be
constant, and therefore every power sum of its eigenvalues would be constant.
This contradicts the hypothesis.  The same argument applies to any invariant
from which \(M_n\) can be recovered.
\end{proof}

The criterion is intentionally easy to check.  One does not need to solve the
eigenvalue problem.  It is enough to find a closed-walk coefficient in
\eqref{eq:walk-trace} that produces a nonconstant Wilson combination.

\begin{example}[One labelled loop]
\label{ex:one-loop}
Take one quotient vertex, one orbital, and an oriented loop labelled by
\(\gamma\in\Gamma_g\).  Let the hopping be \(t\neq0\) and the on-site energy
be \(v\in\RR\).  For a character \(\chi\),
\[
             H_\chi
              =v+t\chi(\gamma)+\overline t\,\chi(\gamma)^{-1}.
              \tag{4.7}\label{eq:one-loop-energy}
\]
If the image of \(\gamma\) in \(H_1(S;\ZZ)\) is nonzero, evaluation at
\(\gamma\) maps the character torus onto \(\UU(1)\).  Consequently
\[
 \{H_\chi:\chi\in\Jac(X)\}
      =[v-2|t|,\,v+2|t|].
 \tag{4.8}\label{eq:one-loop-range}
\]
Every \(L_\chi\) has the same \(K^0(S)\)-class, while even this one-level
spectrum varies through an interval.
\end{example}

\begin{example}[Abelian Fourier expansion]
\label{ex:abelian-fourier}
Fix the marking in \eqref{eq:surface-group}.  For
\([\gamma]\in H_1(S;\ZZ)\cong\ZZ^{2g}\), write
\[
 \chi_{\bm\theta}(\gamma)
    =\exp\!\bigl(\ii\langle[\gamma],\bm\theta\rangle\bigr).
\]
Grouping the closed walks in \eqref{eq:walk-trace} by homology class gives
\[
       M_n(\bm\theta)
          =\sum_{\bm q\in\ZZ^{2g}}A_{n,\bm q}
             \e^{\ii\langle\bm q,\bm\theta\rangle},
          \tag{4.9}\label{eq:moment-fourier}
\]
with only finitely many nonzero coefficients for fixed \(n\).  Thus the
ordinary Fourier modes of the higher-dimensional Brillouin torus are weighted
closed-walk homology classes.  \(M_n\) is constant exactly when all
coefficients with \(\bm q\neq0\) cancel.
\end{example}

\begin{criterion}[Closed-walk test]
\label{crit:closed-walk}
For an abelian hyperbolic Bloch Hamiltonian, a nonzero coefficient
\(A_{n,\bm q}\) with \(\bm q\neq0\) in \eqref{eq:moment-fourier} proves that
the spectrum is not determined by \(K^0(S)\).  For a nonabelian Hamiltonian,
it is enough that the Wilson combination on the right-hand side of
\eqref{eq:walk-trace} be nonconstant on the representation locus under
consideration.
\end{criterion}

\begin{proposition}[Positive hoppings force spectral variation]
\label{prop:positive-hopping-variation}
Consider a connected one-orbital quotient graph with real, strictly positive
hopping amplitudes on its geometric edges.  On-site terms may be arbitrary
real numbers.  Suppose that the graph contains a closed walk whose group
label has nonzero image in \(H_1(S;\ZZ)\).  Then the abelian spectrum is
nonconstant on \(B_{\mathrm{ab}}\), and hence does not factor through
\(K^0(S)\).
\end{proposition}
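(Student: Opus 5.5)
The plan is to use the closed-walk test (\Cref{crit:closed-walk}) with positivity to rule out cancellation. Since the hoppings are positive reals and on-site terms are real, every closed walk has real amplitude \(T(w)\). Walks that avoid on-site loops have strictly positive amplitude, but on-site factors \(V_v\) may be negative. I will therefore first remove the on-site terms. Write \(H_\chi=D+A_\chi\), where \(D\) is the diagonal on-site matrix. I expect positivity of all relevant terms only for \(A_\chi\). The aim is to show that the spectrum of \(H_\chi\) is nonconstant by contradiction. If \(\spec(H_\chi)\) were independent of \(\chi\), then \(\Tr(H_\chi^n)\) would be constant for every \(n\). Replacing \(D\) by \(D+c\) shifts the spectrum, so constancy is unaffected, but this still does not remove \(D\). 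Instead I will argue directly, as follows.

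Choose a closed walk \(w_0\) of minimal length \(n_0\) among walks with nonzero homology image, using only graph edges and no on-site loops. Such a walk exists by hypothesis, since deleting on-site loops from a walk does not change its label. Then in \(M_{n_0}(\bm\theta)\), expand via \eqref{eq:moment-fourier}. A walk of total length \(n_0\) that contains \(k\ge1\) on-site loops has an edge-part of length \(n_0-k<n_0\), so by minimality that edge-part has zero homology class. Hence every Fourier coefficient \(A_{n_0,\bm q}\) with \(\bm q\neq0\) receives contributions only from pure edge walks of length \(n_0\). These coefficients are sums of products of positive hoppings, and the coefficient at \(\bm q=[\gamma(w_0)]\) includes the cyclic rotations of \(w_0\). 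It is therefore strictly positive. By \Cref{crit:closed-walk}, equivalently by uniqueness of Fourier coefficients on the torus \eqref{eq:abelian-bz}, \(M_{n_0}\) is nonconstant on \(B_{\mathrm{ab}}\).

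The key check is that different walks landing in the same class \(\bm q\) cannot cancel. This holds because all their weights are positive, and on-site contributions have been excluded from nonzero classes by minimality. This is the step I expect to need care with. If the minimality bookkeeping with on-site loops is awkward, I will instead differentiate in the flux direction, or note that \(\bm q\) and \(-\bm q\) both appear with positive weight, which avoids any need for a phase argument. Finally, nonconstancy of \(M_{n_0}\), with \(n_0\le N\) or recovered from the spectrum in any case, together with \Cref{thm:spectral-nonfactorization}, shows that the spectrum does not factor through \(K^0(S)\). Connectedness of the graph is used only to guarantee that the relevant walk exists in the single twisted Hamiltonian.
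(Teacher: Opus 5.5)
Your proposal is correct and follows essentially the paper's own argument: choose a closed walk of minimal length $n_0$ with nonzero homology class, use minimality to exclude on-site steps from every walk contributing to $A_{n_0,\bm q}$ with $\bm q\neq0$, conclude $A_{n_0,\bm q}>0$ from positivity of the hoppings, and then apply \Cref{crit:closed-walk} and \Cref{thm:spectral-nonfactorization}. Your remark that a constant spectrum forces every power sum to be constant, so that no bound $n_0\le N$ is needed, is a harmless extra, and the exploratory detour through $H_\chi=D+A_\chi$ can simply be dropped.
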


\begin{proof}
Among all closed walks with nonzero abelianized label, choose a minimal
length \(n_0\), and let \(\bm q\neq0\) be the homology class of one such walk.
No length-\(n_0\) walk contributing to \(A_{n_0,\bm q}\) can contain an
on-site step: deleting that step would leave a shorter closed walk with the
same homology class.  Every contribution to \(A_{n_0,\bm q}\) is therefore a
product of strictly positive edge hoppings.  At least one such contribution
exists, so
\[
                         A_{n_0,\bm q}>0.
                         \tag{4.9a}\label{eq:positive-fourier-coefficient}
\]
The Fourier polynomial \(M_{n_0}\) is nonconstant by
\Cref{crit:closed-walk}, and \Cref{thm:spectral-nonfactorization} applies.
\end{proof}

\Cref{prop:positive-hopping-variation} removes the nonconstancy hypothesis
from a broad class of scalar models.  Spectral variation is not a delicate
effect requiring complex hopping phases: a homologically nontrivial closed
trajectory and positive hopping are already sufficient.  Matrix-valued
orbitals can introduce trace cancellations, but such cancellations are
additional algebraic constraints rather than a consequence of \(K\)-theory.

\subsection{Band velocities and perturbations}

On a smooth parameter region \(B\subset\cR_r(S)\), let \(x^\mu\) be local
coordinates and suppose \(E_j(x)\) is a simple eigenvalue with normalized
eigenvector \(u_j(x)\).  The Feynman--Hellmann formula gives
\[
       \frac{\partial E_j}{\partial x^\mu}
          =\bra{u_j}\frac{\partial H}{\partial x^\mu}\ket{u_j}.
          \tag{4.10}\label{eq:feynman-hellmann}
\]
For abelian flux coordinates, differentiating a holonomy factor gives
\[
 \frac{\partial}{\partial\theta_\mu}\chi_{\bm\theta}(\gamma)
       =\ii[\gamma]_\mu\,\chi_{\bm\theta}(\gamma).
 \tag{4.11}\label{eq:holonomy-derivative}
\]
The velocity in a flux direction therefore measures weighted winding of the
hopping terms around the corresponding surface cycle.  It vanishes only when
symmetry or cancellation forces it to vanish.  The \(K^0(S)\)-class is locally
and globally constant and contains no derivative information of this kind.

If a control parameter \(s\) changes the metric-dependent hoppings rather
than the representation, then
\[
    \dot E_j(s)=\bra{u_j(s)}\dot H(s)\ket{u_j(s)}
    \tag{4.12}\label{eq:metric-fh}
\]
at a simple eigenvalue.  This gives the most direct experimental separation
between topology and geometry: vary capacitances or couplings while preserving
the quotient graph and its \(K\)-theoretic phase, and measure the resulting
resonance shifts.

\subsection{Continuum Hamiltonians and the equivariant heat kernel}

The same structure is present without a tight-binding approximation.  Let
\((X,h)=(\HH^2/\Gamma,h)\), let \(E_\rho\) carry its flat unitary connection
\(\nabla_\rho\), and consider
\[
       \cH_{h,\rho}
        =\nabla_\rho^*\nabla_\rho+V
        \tag{4.13}\label{eq:continuum-H}
\]
on \(L^2(X,E_\rho)\), with \(V\) a smooth \(\Gamma\)-periodic Hermitian
potential.  Let \(\widetilde K_t(\widetilde x,\widetilde y)\) be the heat
kernel of the lifted scalar or orbital operator on \(\HH^2\), and let
\(\mathfrak F\subset\HH^2\) be a fundamental domain.

\begin{proposition}[Twisted heat-trace formula]
\label{prop:heat-trace}
For \(t>0\), under the standard heat-kernel convergence hypotheses,
\[
 \Tr\!\left(\e^{-t\cH_{h,\rho}}\right)
 =
 \sum_{\gamma\in\Gamma}
   \tr\rho(\gamma)
   \int_{\mathfrak F}
      \tr_{\mathrm{orb}}
      \widetilde K_t(\widetilde x,\gamma\widetilde x)
      \,\dd\mathrm{vol}_h(\widetilde x).
 \tag{4.14}\label{eq:twisted-heat}
\]
The identity contribution is local.  The nonidentity contributions couple
Wilson functions to the geometry of the deck transformations.
\end{proposition}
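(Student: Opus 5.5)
The plan is to identify sections of \(E_\rho\) with \(\rho\)-equivariant functions on \(\HH^2\), and to write the heat kernel of \(\cH_{h,\rho}\) as an equivariant sum of the lifted kernel over \(\Gamma\). The trace is then the integral of the diagonal of that kernel over \(\mathfrak F\), which rearranges directly into \eqref{eq:twisted-heat}.

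Concretely, a section of \(E_\rho\) is a function \(f:\HH^2\to\CC^m\otimes\CC^r\) with \(f(\gamma\widetilde x)=\rho(\gamma)f(\widetilde x)\). The connection \(\nabla_\rho\) lifts to the trivial connection \(\dd\) on \(\HH^2\). The periodic potential \(V\) lifts to \(\widetilde V\). So \(\cH_{h,\rho}\) is the restriction of \(\widetilde\cH=\dd^*\dd+\widetilde V\), acting on \(\CC^r\)-valued functions as \(\widetilde\cH\otimes I_r\), to the equivariant subspace.

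I then define
\[
 K^\rho_t(\widetilde x,\widetilde y)=\sum_{\gamma\in\Gamma}\widetilde K_t(\widetilde x,\gamma\widetilde y)\otimes\rho(\gamma).
\]
The key steps are as follows.

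\begin{enumerate}[label=(\roman*)]
\item Check the equivariance \(K^\rho_t(\gamma'\widetilde x,\widetilde y)=\rho(\gamma')K^\rho_t(\widetilde x,\widetilde y)\). This uses the \(\Gamma\)-invariance of the lifted kernel, \(\widetilde K_t(\gamma'\widetilde x,\gamma'\widetilde z)=\widetilde K_t(\widetilde x,\widetilde z)\), followed by reindexing \(\gamma\mapsto\gamma'\gamma\).
\item Show that \(K^\rho_t\) satisfies the heat equation in \(\widetilde x\). This holds termwise.
\item Show that the integral operator \(f\mapsto\int_{\mathfrak F}K^\rho_t(\widetilde x,\widetilde y)f(\widetilde y)\,\dd\mathrm{vol}\) tends to the identity as \(t\to0\). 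For this, unfold \(\int_{\mathfrak F}\sum_\gamma\) into \(\int_{\HH^2}\) using \(f(\gamma\widetilde y)=\rho(\gamma)f(\widetilde y)\). This gives \((\e^{-t\widetilde\cH}f)(\widetilde x)\), the free heat semigroup applied to the equivariant function.
\item Conclude from uniqueness of the heat semigroup on the compact quotient that \(K^\rho_t\) is the kernel of \(\e^{-t\cH_{h,\rho}}\).
\item Write \(\Tr\,\e^{-t\cH_{h,\rho}}\) as the integral over \(\mathfrak F\) of the fibre trace of the kernel on the diagonal, \(\tr K^\rho_t(\widetilde x,\widetilde x)\). This is justified because \(\e^{-t\cH_{h,\rho}}\) is trace class with a smooth kernel on the compact surface \(X\).
\item The fibre trace factorizes as \(\tr_{\mathrm{orb}}\widetilde K_t\cdot\tr\rho(\gamma)\). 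Interchanging the sum and the integral then gives \eqref{eq:twisted-heat}.
\end{enumerate}

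For the concluding remarks of the statement: the \(\gamma=1\) term is \(r\int_{\mathfrak F}\tr\widetilde K_t(\widetilde x,\widetilde x)\), which depends only on local geometry near the diagonal. Each term with \(\gamma\neq1\) pairs \(\tr\rho(\gamma)\) with an integral that depends on the displacement \(d(\widetilde x,\gamma\widetilde x)\).

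The main obstacle is analytic: justifying convergence of the \(\Gamma\)-sum and its interchange with the integral and derivatives. This is what the phrase ``standard heat-kernel convergence hypotheses'' refers to. I would first invoke Gaussian upper bounds \(|\widetilde K_t(\widetilde x,\widetilde y)|\le C_t\e^{-c\,d(\widetilde x,\widetilde y)^2/t}\), available for bounded-geometry operators with bounded potential, together with analogous bounds for derivatives. On the counting side, \(\#\{\gamma:d(\widetilde x,\gamma\widetilde x)\le R\}=O(\e^{R})\) uniformly for \(\widetilde x\in\overline{\mathfrak F}\), since \(\mathfrak F\) is relatively compact and hyperbolic volume grows like \(\e^R\). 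Unitarity of \(\rho\) gives \(|\tr\rho(\gamma)|\le r\). Gaussian decay beats exponential growth, so the series converges absolutely and uniformly on compacta, and dominated convergence allows all the interchanges.
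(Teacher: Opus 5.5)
Your proposal is correct and follows the paper's argument. Both use the method-of-images kernel \(\sum_\gamma\widetilde K_t(\widetilde x,\gamma\widetilde y)\rho(\gamma)\), restrict it to the diagonal, factor the fibre trace, and justify interchanging sum and integral by Gaussian bounds. You add detail the paper omits: you check equivariance, the heat equation and the initial condition via unfolding, and you replace the appeal to proper discontinuity with an explicit \(O(\e^{R})\) orbit count.
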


\begin{proof}
The heat kernel on \(E_\rho\) is obtained by the method of images:
\[
 K_{\rho,t}(x,y)
   =\sum_{\gamma\in\Gamma}
      \widetilde K_t(\widetilde x,\gamma\widetilde y)\rho(\gamma),
 \]
with an inverse on \(\gamma\) if the opposite equivariance convention is
chosen.  Restricting \(x=y\), tracing over the fibre, and integrating over a
fundamental domain gives \eqref{eq:twisted-heat}.  Gaussian heat-kernel bounds
and proper discontinuity justify the exchange of sum and integral.
\end{proof}

For \(V=0\) and constant curvature, regrouping the nonidentity terms by
conjugacy class yields the twisted Selberg trace formula.  Its coefficients
depend on the lengths \(\ell_h(\gamma)\) of closed geodesics and on
\(\tr\rho(\gamma)\) \cite{Selberg,Buser}.  Attar and Boettcher give a direct
application of the Selberg trace formula to hyperbolic band theory
\cite{AttarBoettcher2022}.  This is the continuum counterpart of
\eqref{eq:walk-trace}:
\[
\begin{array}{c|c|c}
 &\text{discrete model}&\text{continuum model}\\
\hline
\text{closed object}&\text{labelled walk }w&\text{closed geodesic }[\gamma]\\
\text{geometry}&\tr T(w)&\ell_h(\gamma)\text{ and orbital transport}\\
\text{Bloch sector}&\tr\rho(\gamma(w))&\tr\rho(\gamma).
\end{array}
\tag{4.15}\label{eq:trace-dictionary}
\]

For the scalar Laplacian this coupling can be written explicitly.  Let
\(\mathscr P_{\mathrm{or}}\) be a set of primitive oriented conjugacy classes
that is closed under inversion.  With \(\gamma_p\) representing
\(p\in\mathscr P_{\mathrm{or}}\), the heat-kernel specialization of the
twisted Selberg trace formula is
\begin{align}
\Tr\!\left(\e^{-t\Delta_\rho}\right)
={}&
 \frac{r\,\operatorname{Area}(X)}{4\pi}
 \int_{-\infty}^{\infty}
     \e^{-t(\xi^2+1/4)}\,
     \xi\tanh(\pi\xi)\,\dd\xi \notag\\
&+
 \frac{\e^{-t/4}}{\sqrt{4\pi t}}
 \sum_{p\in\mathscr P_{\mathrm{or}}}
 \sum_{k=1}^{\infty}
 \frac{\ell(p)\,\tr\rho(\gamma_p^k)}
      {2\sinh(k\ell(p)/2)}
 \exp\!\left(-\frac{k^2\ell(p)^2}{4t}\right).
 \tag{4.15a}\label{eq:twisted-selberg-heat}
\end{align}
Pairing \(p\) with its reverse replaces the two Wilson factors by twice their
real part.
The identity term knows only the rank and area.  Every nonidentity term
multiplies a length-dependent weight by a Wilson trace.  Thus the full
formula, not only its schematic structure, separates the metric and flat
connection data that collapse in \(K^0(S)\)
\cite{Selberg,Buser}.

The local small-\(t\) heat coefficients do not see a flat holonomy, because
its curvature vanishes.  Holonomy enters through global closed trajectories.
This is another reason a local characteristic class cannot reconstruct the
full spectrum: the distinction between two flat connections is intrinsically
nonlocal.

\begin{corollary}
\label{cor:heat-nonfactor}
If the heat trace in \eqref{eq:twisted-heat} is nonconstant as a function of
\([\rho]\), then the spectrum of \(\cH_{h,\rho}\) does not factor through
\([E_\rho]\in K^0(S)\).
\end{corollary}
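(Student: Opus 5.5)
The plan follows the proof of \Cref{thm:spectral-nonfactorization}, with the heat trace taking the place of the power sums. Suppose, for contradiction, that there is a map \(F\) from \(K^0(S)\) to spectra of self-adjoint operators (discrete multisets in \(\RR\), bounded below, with finite multiplicities) such that \(\spec(\cH_{h,\rho})=F([E_\rho])\) for every \([\rho]\). By \Cref{thm:k-collapse}, \([E_\rho]=r[\underline{\CC}]\) for every unitary representation of rank \(r\). Hence the spectrum, counted with multiplicity, would be the same multiset \(\{\lambda_k\}\) for all \([\rho]\).

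The second step is to show that the heat trace is a function of the spectrum alone. Because \(X\) is compact, \(\cH_{h,\rho}\) is an elliptic, self-adjoint, semibounded operator on the finite-rank bundle \(E_\rho\). It therefore has discrete spectrum with finite multiplicities, and \(\e^{-t\cH_{h,\rho}}\) is trace class for \(t>0\). This gives
\[
\Tr\!\left(\e^{-t\cH_{h,\rho}}\right)=\sum_k \e^{-t\lambda_k},
\]
where the sum converges absolutely. Its convergence is guaranteed by Weyl's law, or equivalently by the standard heat-kernel hypotheses already assumed in \Cref{prop:heat-trace}. The right-hand side depends only on the multiset \(\{\lambda_k\}\). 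If that multiset does not depend on \([\rho]\), then for each \(t>0\) the heat trace is constant on \(\cR_r(S)\), which contradicts the hypothesis. By \Cref{prop:heat-trace}, the same heat trace is the Wilson-weighted sum \eqref{eq:twisted-heat}, so the hypothesis is a checkable condition on the Wilson coefficients, just as in \Cref{crit:closed-walk}.

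I expect the only delicate point to be well-posedness. We have to make sure the heat trace is a genuine spectral invariant, meaning the trace-class property and the eigenvalue expansion. We also have to fix how \(F\) is compared across different \(\rho\). The bundles \(E_\rho\) are different Hilbert spaces, so the statement should be read as a comparison of multisets of eigenvalues, not of operators. Once that reading is fixed, the argument is the same as in the discrete case. In the same way as before, it extends to any observable that determines \(t\mapsto\Tr(\e^{-t\cH_{h,\rho}})\) for some \(t>0\), such as the spectral zeta function or the eigenvalue counting function.
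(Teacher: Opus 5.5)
Your argument is correct and is essentially the paper's own: the heat trace is the Laplace transform of the discrete spectral measure, so a spectrum factoring through the constant class \(r[\underline{\CC}]\) of \Cref{thm:k-collapse} would force a constant heat trace. Your trace-class and Weyl-law remarks only make the spectral-invariance step explicit, and one small caveat applies to your closing aside: the usual spectral zeta function omits the eigenvalue zero (and is not defined as stated if \(V\) produces negative eigenvalues), so on its own it recovers the heat trace only up to the zero-eigenvalue multiplicity, whereas the eigenvalue counting function recovers it fully.
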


\begin{proof}
The heat trace is the Laplace transform of the discrete spectral measure on
the compact surface.  Equal spectra give equal heat traces for every \(t>0\).
Apply \Cref{thm:k-collapse}.
\end{proof}

\subsection{Abelianization and information lost at rank one}

The abelian Brillouin torus does not merely simplify the nonabelian theory:
it applies a definite quotient to the real-space labels.  Let
\[
 \operatorname{ab}:\Gamma_g\longrightarrow
 \Lambda_g:=H_1(S;\ZZ)\cong\ZZ^{2g}
\tag{4.16}\label{eq:abelianization}
\]
be the abelianization map.  Every character factors uniquely through it:
\[
 \chi=\widehat\chi\circ\operatorname{ab},
 \qquad \widehat\chi\in\Hom(\Lambda_g,\UU(1)).
\]
Hence an abelian Bloch matrix depends on an edge label \(\gamma_e\) only
through the displacement
\[
                       q_e=\operatorname{ab}(\gamma_e)\in\Lambda_g.
                       \tag{4.17}\label{eq:abelian-edge-displacement}
\]
Words in the commutator subgroup are invisible to every rank-one sector.
This is, in fact, a loss of geometric organization before any \(K\)-theory is applied.

\begin{theorem}[Abelian blindness and nonabelian recovery]
\label{thm:abelian-blindness}
Consider two group-labelled Hamiltonians with the same quotient graph,
on-site matrices, and hopping matrices, but with edge labels
\(\gamma_e\) and \(\gamma'_e\).

\begin{enumerate}[label=\textup{(\alph*)},leftmargin=2.5em]
\item If
  \(\operatorname{ab}(\gamma_e)=\operatorname{ab}(\gamma'_e)\) for every
  edge, then
  \[
                         H_\chi=H'_\chi
  \]
  for every character \(\chi:\Gamma_g\to\UU(1)\).
\item For every \(g\geq2\), there are labelings satisfying the hypothesis in
  \textup{(a)} and a representation
  \(\rho:\Gamma_g\to\SU(2)\) for which
  \(\spec(H_\rho)\neq\spec(H'_\rho)\).
\end{enumerate}
\end{theorem}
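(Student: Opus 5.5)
Part (a) follows directly from the definition \eqref{eq:twisted-H}. Every character factors as \(\chi=\widehat\chi\circ\operatorname{ab}\). If the two labelings have the same abelianizations, then \(\chi(\gamma_e)=\chi(\gamma'_e)\) for every oriented edge. The two Bloch matrices therefore agree entry by entry, including the on-site terms, which carry label \(1\).

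For part (b) I would build the simplest possible witness. Take one quotient vertex, one orbital, and a single geometric loop edge with real hopping \(t\neq0\) and no on-site term. Label the edge \(\gamma_e=1\) in the first model and \(\gamma'_e=[a_1,b_1]\) in the second. Both labels have abelianization \(0\), so the hypothesis of (a) holds. In a rank-two sector the two Hamiltonians are
\[
 H_\rho=t(I+I)=2tI_2,
 \qquad
 H'_\rho=t\bigl(\rho(c)+\rho(c)^{*}\bigr),
 \qquad c=[a_1,b_1].
\]
Since \(\rho(c)\in\SU(2)\), the operator \(\rho(c)+\rho(c)^*\) equals \(\tr\rho(c)\cdot I_2\). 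The spectrum of \(H'_\rho\) is therefore \(\{t\tr\rho(c)\}\) with multiplicity two. It differs from \(\{2t\}\) exactly when \(\tr\rho(c)\neq 2\), that is, whenever \(\rho(c)\neq I\).

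Alternatively, one can bypass the exact spectrum and compare first moments using \Cref{thm:walk-formula}: \(M_1=2t\,\tr\rho(\gamma_e)\), which again isolates the Wilson function \(W_c\). This shows the difference is spectral, as the theorem claims.

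The remaining step, which I expect to be the only real content, is to construct \(\rho:\Gamma_g\to\SU(2)\) with \(\rho([a_1,b_1])\neq I\) for every \(g\geq2\). I would first pick \(A,B\in\SU(2)\) with \([A,B]\neq I\), for example \(A=\operatorname{diag}(\ii,-\ii)\) and \(B=\begin{pmatrix}0&1\\-1&0\end{pmatrix}\). These anticommute, so \([A,B]=-I\) and \(\tr=-2\). The difficulty is satisfying the surface relation \(\prod_j[a_j,b_j]=1\).

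\begin{itemize}
\item Set \(\rho(a_1)=A\), \(\rho(b_1)=B\), \(\rho(a_2)=B\), \(\rho(b_2)=A\). Then \([B,A]=[A,B]^{-1}\), so the first two commutators multiply to \(I\).
\item Send all other generators to \(I\).
\end{itemize}

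This is a valid representation because the relation holds, and it has \(\rho([a_1,b_1])=-I\). The spectra are then \(\{2t,2t\}\) and \(\{-2t,-2t\}\), which differ since \(t\neq0\).

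If one also wants \(\rho\) irreducible, perturbing within the character variety would generically preserve \(\tr\rho(c)\neq2\). This is not needed for the statement.
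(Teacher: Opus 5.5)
Your proof is correct and follows the paper's argument essentially step for step. You use the same one-vertex, one-loop witness labelled by \(1\) and \(c=[a_1,b_1]\), the same representation \(\rho(a_1)=A,\ \rho(b_1)=B,\ \rho(a_2)=B,\ \rho(b_2)=A\) built on \([B,A]=[A,B]^{-1}\), and the same comparison of \(2tI_2\) with \(t(C+C^*)\); your explicit anticommuting pair with \([A,B]=-I\) and the identity \(U+U^*=(\tr U)I_2\) for \(U\in\SU(2)\) just make concrete the paper's choice of noncommuting \(A,B\) and its eigenvalue computation \(H'_\rho=2t\cos\varphi\,I_2\).
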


\begin{proof}
For part \textup{(a)}, factor \(\chi\) through
\eqref{eq:abelianization}.  Equality of the abelianized labels gives
\(\chi(\gamma_e)=\chi(\gamma'_e)\) term by term in
\eqref{eq:twisted-H}.

For part \textup{(b)}, take a one-vertex, one-orbital graph with one loop.
Label it first by \(1\) and then by the commutator
\(c=[a_1,b_1]\).  These labels have the same image under
\(\operatorname{ab}\).  Choose noncommuting \(A,B\in\SU(2)\) and define
\[
 \rho(a_1)=A,\quad \rho(b_1)=B,\quad
 \rho(a_2)=B,\quad \rho(b_2)=A,
 \tag{4.18}\label{eq:commutator-representation}
\]
with all remaining generators sent to the identity.  Because
\([B,A]=[A,B]^{-1}\), the surface relation is satisfied.  Moreover,
\(C:=\rho(c)=[A,B]\neq I\).

With a nonzero real hopping \(t\) and zero on-site energy, the two rank-two
Hamiltonians are
\[
             H_\rho=2tI_2,\qquad
             H'_\rho=t(C+C^*).
             \tag{4.19}\label{eq:commutator-loop-H}
\]
If the eigenvalues of \(C\) are \(\e^{\pm\ii\varphi}\), then
\(H'_\rho=2t\cos\varphi\,I_2\).  Since \(C\neq I\), one has
\(\cos\varphi\neq1\), so the spectra differ.
\end{proof}

The theorem separates two successive losses of information.  Passing to
rank-one momentum replaces group words by their homology classes.  Passing
next to fibrewise \(K^0(S)\) identifies every degree-zero character line bundle
with the same stable class.  Thus, if one retains only topological invariants
of rank-one bands, both losses have already occurred.  Nonabelian Bloch
sectors can recover information lost in the first passage, but not in the
second: their Wilson functions can respond to
commutator words even though all \([E_\rho]\) remain equal.

This does not mean that a single fixed rank reconstructs every microscopic
edge label.  Trace identities and vertex gauge can identify different
presentations.  The physically relevant statement is more modest and more
useful: nonabelian sectors supply observables on conjugacy data that cannot be
functions of abelian momentum.

\subsection{The abelian cover and position--momentum duality}

There is a corresponding real-space statement.  The rank-one theory naturally
uses the maximal abelian cover
\[
           S_{\mathrm{ab}}
             =\widetilde S/[\Gamma_g,\Gamma_g]
             \longrightarrow S,
           \qquad \operatorname{Deck}(S_{\mathrm{ab}}/S)=\Lambda_g.
           \tag{4.20}\label{eq:maximal-abelian-cover}
\]
After a choice of quotient cell, a finite-range tight-binding Hilbert space
on this cover has the form
\(\ell^2(\Lambda_g)\otimes\CC^d\).  Its Fourier transform is
\[
 (\mathscr F_{\mathrm{ab}}\psi)(\bm\theta)
   =\frac{1}{(2\pi)^g}
      \sum_{\bm q\in\Lambda_g}
       \e^{-\ii\langle\bm q,\bm\theta\rangle}\psi(\bm q),
 \qquad \bm\theta\in(\RR/2\pi\ZZ)^{2g}.
 \tag{4.21}\label{eq:abelian-bloch-transform}
\]
This is the rank-one part of the hyperbolic Bloch transform developed in
\cite{MaciejkoRayan2022,NagyRayan2024}.  It is an ordinary Fourier transform
on the deck group of the abelian cover, not on the nonabelian group
\(\Gamma_g\).

For \(1\leq\mu\leq2g\), let \(N_\mu\) be multiplication by \(q_\mu\) on its
natural dense domain:
\[
                       (N_\mu\psi)(\bm q)=q_\mu\psi(\bm q).
                       \tag{4.22}\label{eq:abelian-position}
\]
These are valid position operators for homological displacement.  They are
not a complete coordinate system on the universal cover: every displacement
in \([\Gamma_g,\Gamma_g]\) has all \(2g\) coordinates equal to zero.

\begin{proposition}[Position and flux derivative]
\label{prop:position-flux}
Under \(\mathscr F_{\mathrm{ab}}\),
\[
        \mathscr F_{\mathrm{ab}}N_\mu\mathscr F_{\mathrm{ab}}^{-1}
             =\ii\frac{\partial}{\partial\theta_\mu}.
        \tag{4.23}\label{eq:position-derivative}
\]
If a bounded, finite-range, \(\Lambda_g\)-equivariant Hamiltonian \(H\)
transforms into the matrix-valued multiplier \(H(\bm\theta)\), then on smooth
Bloch sections
\[
    \mathscr F_{\mathrm{ab}}\,\ii[H,N_\mu]\,
       \mathscr F_{\mathrm{ab}}^{-1}
             =\frac{\partial H(\bm\theta)}{\partial\theta_\mu}.
    \tag{4.24}\label{eq:current-derivative}
\]
\end{proposition}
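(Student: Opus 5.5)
Both identities will be checked by direct computation on finitely supported vectors $\psi\in c_c(\Lambda_g)\otimes\CC^d$, which form a core, and then extended by closure. The transform $\mathscr F_{\mathrm{ab}}$ sends such vectors to trigonometric polynomials, so every derivative is a finite sum. I treat $\mathscr F_{\mathrm{ab}}$ as unitary onto $L^2$ of the torus after the normalization in \eqref{eq:abelian-bloch-transform}, and I fix the sign conventions as written there. The inverse transform is $\psi(\bm q)=c\int \e^{\ii\langle\bm q,\bm\theta\rangle}\hat\psi(\bm\theta)\,\dd\bm\theta$ for the appropriate constant $c$.

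For \eqref{eq:position-derivative}, I differentiate $\hat\psi(\bm\theta)=\sum_{\bm q}\e^{-\ii\langle\bm q,\bm\theta\rangle}\psi(\bm q)$ term by term in $\theta_\mu$. This gives $\partial_{\theta_\mu}\hat\psi=-\ii\,\mathscr F_{\mathrm{ab}}(N_\mu\psi)$, so $\ii\partial_{\theta_\mu}\hat\psi=\mathscr F_{\mathrm{ab}}N_\mu\psi$. The identity therefore holds on the core. Passing to closures, the maximal multiplication operator $N_\mu$ corresponds to the self-adjoint operator $\ii\partial_{\theta_\mu}$ on $H^1$ in the $\theta_\mu$ variable. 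This is an essential self-adjointness statement that follows from the unitary equivalence itself.

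For \eqref{eq:current-derivative}, I write the equivariant finite-range Hamiltonian as a finite convolution $(H\psi)(\bm q)=\sum_{\bm p}h_{\bm p}\psi(\bm q-\bm p)$ with matrices $h_{\bm p}\in\operatorname{Mat}_d(\CC)$. Its Fourier transform is then the multiplier $H(\bm\theta)=\sum_{\bm p}h_{\bm p}\e^{-\ii\langle\bm p,\bm\theta\rangle}$. This is consistent with the paper's phase convention up to the choice of sign in the displacement $q_e$, and I will state that choice explicitly.

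The commutator is computed directly: $([H,N_\mu]\psi)(\bm q)=\sum_{\bm p}h_{\bm p}(q_\mu-p_\mu-q_\mu)\psi(\bm q-\bm p)=-\sum_{\bm p}p_\mu h_{\bm p}\psi(\bm q-\bm p)$. So $\ii[H,N_\mu]$ is again a finite convolution, now with kernel $-\ii p_\mu h_{\bm p}$. Its multiplier is $\sum_{\bm p}(-\ii p_\mu)h_{\bm p}\e^{-\ii\langle\bm p,\bm\theta\rangle}=\partial_{\theta_\mu}H(\bm\theta)$, by \eqref{eq:holonomy-derivative} applied termwise. Alternatively, the same result follows abstractly from \eqref{eq:position-derivative}: $\ii[H(\bm\theta),\ii\partial_\mu]=\partial_\mu H$ acting on smooth sections. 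This is the Leibniz rule.

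There is no deep obstacle here. The only delicate points are domain issues and sign conventions. Because $H$ is finite range, $H$ preserves the core, and $[H,N_\mu]$ is bounded and extends uniquely. The main care needed is keeping the order convention for labels, $\chi(\gamma)=\e^{\ii\langle[\gamma],\bm\theta\rangle}$ versus the $\e^{-\ii}$ in $\mathscr F_{\mathrm{ab}}$, consistent, so that both \eqref{eq:position-derivative} and \eqref{eq:current-derivative} hold with the stated signs rather than with opposite signs.
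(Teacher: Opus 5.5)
Your proof is correct and is essentially the paper's argument: \eqref{eq:position-derivative} is obtained there exactly as you do it, by termwise differentiation of \(\mathscr F_{\mathrm{ab}}\) on finitely supported vectors followed by passage to the natural domains. For \eqref{eq:current-derivative} the paper uses your ``alternative'' Leibniz-rule route, \([H(\bm\theta),\ii\partial_{\theta_\mu}]f=-\ii(\partial_{\theta_\mu}H)f\). Your primary position-space computation, writing \([H,N_\mu]\) as the convolution with kernel \(-p_\mu h_{\bm p}\), is an equivalent and slightly more explicit check that also makes the boundedness of the commutator visible.
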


\begin{proof}
Termwise differentiation of \eqref{eq:abelian-bloch-transform} gives
\[
 \ii\partial_{\theta_\mu}
  \e^{-\ii\langle\bm q,\bm\theta\rangle}
   =q_\mu\e^{-\ii\langle\bm q,\bm\theta\rangle},
\]
which proves \eqref{eq:position-derivative} on finitely supported vectors and
then on the natural domains.  In Bloch representation, \(H\) is
multiplication by \(H(\bm\theta)\).  Therefore
\[
 [H(\bm\theta),\ii\partial_{\theta_\mu}]f
       =-\ii(\partial_{\theta_\mu}H)f.
\]
Multiplication by \(\ii\) yields \eqref{eq:current-derivative}.
\end{proof}

The proposition gives a precise meaning to velocity in a hyperbolic
rank-one flux direction.  It is dual to displacement around a chosen
homology cycle.  It also states the limitation precisely.  No collection of
derivatives on the Jacobian detects a loop whose group label lies in the
commutator subgroup, because that loop acts trivially on the maximal abelian
cover.  To probe such directions one needs matrix-valued holonomies, Wilson
observables, or the full nonabelian transform.

\subsection{First- and second-order spectral response}

Flux dependence contains more information than a list of energies at one
point.  Put
\[
       J_\mu(\bm\theta)=\partial_\mu H(\bm\theta),\qquad
       D_{\mu\nu}(\bm\theta)=\partial_\mu\partial_\nu H(\bm\theta).
       \tag{4.25}\label{eq:current-diamagnetic}
\]
By \Cref{prop:position-flux}, \(J_\mu\) is the Bloch representation of a
current operator defined by homological position.  If \(E_j\) is simple, the
Feynman--Hellmann formula \eqref{eq:feynman-hellmann} gives its first
response.  A second differentiation gives
\begin{align}
 \partial_\mu\partial_\nu E_j
   ={}&\langle u_j,D_{\mu\nu}u_j\rangle \notag\\
    &+2\Re\sum_{k\neq j}
       \frac{
        \langle u_j,J_\mu u_k\rangle
        \langle u_k,J_\nu u_j\rangle}
       {E_j-E_k}.
       \tag{4.26}\label{eq:second-order-response}
\end{align}
The first line is the direct or diamagnetic contribution.  The second is an
interband contribution controlled by current matrix elements and energy
denominators.  Thus band velocity, effective-mass tensors, and optical
oscillator strengths retain the detailed hopping labels and eigenvectors.
None can be reconstructed from the constant class
\([L_\chi]=[\cO_X]\).

At a band crossing, individual eigenvalue derivatives need not exist.  The
appropriate object is then the Riesz projection onto an isolated cluster,
\[
            P(\bm\theta)
              =\frac{1}{2\pi\ii}\oint_\mathcal C
                 (z-H(\bm\theta))^{-1}\,\dd z,
                 \tag{4.27}\label{eq:riesz-cluster}
\]
where \(\mathcal C\) encloses the cluster and no other spectrum.  As long as
the cluster gap remains open, \(P\), the cluster energy
\(\Tr(PH)\), and their derivatives are smooth even when internal
degeneracies occur.  This distinction is important experimentally: a
singular choice of eigenvector is not itself a singularity of the occupied
subspace.

The response formulas also distinguish topology from geometry without
changing a phase.  Multiplying the hopping amplitudes by a smooth positive
control \(s\mapsto t_e(s)\), or changing them through distances and angles,
changes \(J_\mu\), \(D_{\mu\nu}\), and the right-hand side of
\eqref{eq:second-order-response}.  If the Fermi gap stays open, the
occupied \(K\)-class stays fixed throughout the deformation.

\subsection{What momentum spectroscopy can reconstruct}

The Fourier expansion \eqref{eq:moment-fourier} has a useful inverse.  With
\(\dd^{2g}\bm\theta\) the Haar measure in the marked flux coordinates,
\[
 A_{n,\bm q}
   =\frac{1}{(2\pi)^{2g}}
     \int_{[0,2\pi]^{2g}}
       M_n(\bm\theta)
       \e^{-\ii\langle\bm q,\bm\theta\rangle}
       \,\dd^{2g}\bm\theta.
       \tag{4.28}\label{eq:moment-fourier-inverse}
\]
Measuring the eigenvalues across the Jacobian therefore determines the
aggregate weight of length-\(n\) closed walks in every homology class.  All
walks with the same \(\bm q\), however, contribute to the same coefficient.

\begin{proposition}[Homological tomography and its limit]
\label{prop:homological-tomography}
For a finite abelian Bloch Hamiltonian, knowledge of the spectral moments
\(M_n(\bm\theta)\) for all \(\bm\theta\) and
\(1\leq n\leq N\) determines the unordered spectrum at every momentum and all
coefficients \(A_{n,\bm q}\).  It does not, in general, determine how
\(A_{n,\bm q}\) decomposes among closed group words with the same
abelianization.  In particular, rank-one momentum spectroscopy cannot
distinguish insertion of commutator labels as in
\Cref{thm:abelian-blindness}.
\end{proposition}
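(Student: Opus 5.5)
The plan has three parts: a positive reconstruction, a negative statement, and a specialization of the negative statement to \Cref{thm:abelian-blindness}.

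\textbf{Positive part.} For each \(\bm\theta\), the finite Hermitian matrix \(H_{\bm\theta}\) has size \(N\). Its power sums \(M_n(\bm\theta)\), \(1\leq n\leq N\), fix the elementary symmetric functions of its eigenvalues through the Newton identities, as already noted before \Cref{thm:spectral-nonfactorization}. Hence they fix the characteristic polynomial, and therefore the unordered spectrum at every momentum. For the coefficients, I would use \Cref{thm:walk-formula} together with the grouping of \Cref{ex:abelian-fourier}. These show that each \(M_n\) is a trigonometric polynomial on the torus \((\RR/2\pi\ZZ)^{2g}\) with finitely many nonzero coefficients \(A_{n,\bm q}\). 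Orthogonality of the characters \(\e^{\ii\langle\bm q,\bm\theta\rangle}\) in \(L^2\) of Haar measure then gives the inversion formula \eqref{eq:moment-fourier-inverse}, which recovers every \(A_{n,\bm q}\) with \(n\leq N\).

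A small point arises for \(n>N\): the statement's "all coefficients" should include those too. The spectrum at each \(\bm\theta\) determines \(M_n(\bm\theta)\) for every \(n\) as a power sum, and the same Fourier inversion then recovers \(A_{n,\bm q}\) for all \(n\).

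\textbf{Negative part.} A single explicit pair of Hamiltonians suffices, and the specific claim comes directly from \Cref{thm:abelian-blindness}(a). If two labelings have \(\operatorname{ab}(\gamma_e)=\operatorname{ab}(\gamma'_e)\) on every edge, then \(H_\chi=H'_\chi\) for every character \(\chi\). Consequently all rank-one moments \(M_n(\bm\theta)\) agree, and so do all \(A_{n,\bm q}\). Take the one-loop example from part (b) of that theorem, with the loop labelled by \(1\) in one model and by \([a_1,b_1]\) in the other. The only closed walk of length \(n\) is the \(n\)-fold traversal of the loop, which has group word \(1^n\) in the first model and \(c^n\) in the second. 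The rank-one data are identical, so the coefficient \(A_{n,0}\) cannot tell whether it comes from the word \(1^n\) or from the distinct group element \(c^n\neq1\).

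That \(c^n\neq1\) follows because \(\Gamma_g\) is torsion-free. To make sure this difference is genuine information, I would cite part (b) of \Cref{thm:abelian-blindness}: the \(\SU(2)\) representation constructed there distinguishes the two models spectrally. So the decomposition among group words carries real content that rank-one spectroscopy cannot see.

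\textbf{Main obstacle.} The only delicate step is to phrase "determine the decomposition" precisely. I would formalize it as the map from a labelled Hamiltonian to its family of rank-one moment functions, and show that this map is not injective on words modulo vertex gauge and conjugacy. The pair above must not be identified by a vertex gauge transformation. On a single vertex, a gauge change conjugates the loop label, and \(1\) is conjugate only to itself. This rules it out, so the two models are genuinely distinct, which completes the argument.
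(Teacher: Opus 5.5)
Your proposal is correct and follows the paper's own argument. It uses the Newton identities and the inversion formula \eqref{eq:moment-fourier-inverse} for the positive part, and \Cref{thm:abelian-blindness}(a) with its one-loop commutator example for the negative part; your remarks on $n>N$, torsion-freeness, and vertex gauge are sound additions. One slip to correct: under the convention \eqref{eq:reverse-edge} the loop occurs with both orientations $e,\bar e$, so the closed walks of length $n$ are all $2^n$ words in $e,\bar e$, with group labels $c^{k}$ for $k=\#e-\#\bar e$, not just the $n$-fold traversal. The conclusion is unaffected, since for instance the walk $e^n$ carries $c^n\neq1$ in the commutator model while every label equals $1$ in the other.
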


\begin{proof}
Newton identities reconstruct the spectrum from the first \(N\) moments, and
\eqref{eq:moment-fourier-inverse} reconstructs the coefficients.  Every
character takes the same value on two words with the same abelianization, so
their contributions enter only through their sum.  The example in
\Cref{thm:abelian-blindness} shows that this failure of word-level
reconstruction occurs even for a one-loop quotient graph.
\end{proof}

The hierarchy of probes can be summarized as follows.
\begin{center}
\small
\begin{tabular}{@{}
  >{\raggedright\arraybackslash}p{0.19\textwidth}
  >{\raggedright\arraybackslash}p{0.31\textwidth}
  >{\raggedright\arraybackslash}p{0.40\textwidth}@{}}
\toprule
\textbf{probe} & \textbf{resolved datum} & \textbf{unresolved datum}\\
\midrule
rank-one flux scan
  & closed-walk homology and abelian dispersion
  & commutator words and nonabelian holonomy\\
rank-\(r\) sector
  & rank-\(r\) Wilson combinations
  & metric and orbital coefficients not separately varied\\
metric or coupling scan
  & variation of closed-walk amplitudes
  & holonomy unless momentum is also scanned\\
\(K\)-class
  & stable rank, Chern data, and index pairings
  & all of the preceding spectral resolution\\
\bottomrule
\end{tabular}
\end{center}

The first two rows distinguish abelian and nonabelian geometry.  The third
suggests a separation protocol.  Let \(s\) be a metric or fabrication
parameter and write
\[
 M_n(\bm\theta;s)
   =\sum_{\bm q}A_{n,\bm q}(s)
      \e^{\ii\langle\bm q,\bm\theta\rangle}.
 \tag{4.29}\label{eq:mixed-response}
\]
A flux scan at fixed \(s\) recovers the homological sectors, while varying
\(s\) recovers \(\partial_s A_{n,\bm q}\).  The mixed derivative
\[
 \partial_s\partial_{\theta_\mu}M_n
   =\sum_{\bm q}\ii q_\mu\,
      \partial_sA_{n,\bm q}(s)
      \e^{\ii\langle\bm q,\bm\theta\rangle}
 \tag{4.30}\label{eq:mixed-response-derivative}
\]
measures how a geometric deformation reweights paths winding around a chosen
surface cycle.  This is a response coefficient with an immediate closed-walk
interpretation and no fibrewise \(K\)-theoretic reconstruction.

In the continuum, the analogous operation differentiates the geometric
coefficient in \eqref{eq:twisted-heat} while leaving the Wilson factor
explicit.  Suppose that a family of metrics \(h_s\) is smoothly trivialized
on the fixed marked surface and that uniform heat-kernel bounds justify
differentiation under the sum and integral.  Then
\[
 \partial_s\Tr\!\left(\e^{-t\cH_{h_s,\rho}}\right)
   =\sum_{\gamma\in\Gamma}
      \tr\rho(\gamma)\,
      \partial_s
      \int_{\mathfrak F_s}
       \tr_{\mathrm{orb}}
       \widetilde K_{t,s}(\widetilde x,\gamma\widetilde x)
       \,\dd\mathrm{vol}_{h_s}.
 \tag{4.31}\label{eq:mixed-heat-response}
\]
For the hyperbolic Laplacian, the nonidentity terms can be reorganized by
closed geodesics, so their variation includes the derivatives of the length
functions \(\ell_{h_s}(\gamma)\).  Discrete and continuum response therefore
tell the same story: holonomy selects a closed trajectory, while geometry
sets its weight.

These reconstruction statements concern the real torus underlying the
Jacobian.  The period matrix equips that torus with a polarization and a
complex structure, but ordinary Fourier inversion does not use them.  Complex
geometry becomes dynamically restrictive only when the Hamiltonian or its
projector obeys relations adapted to that complex structure.  This distinction
between kinematical and dynamical holomorphy is the subject of
\Cref{sec:metric-complex,sec:quantum-geometry}.

\section{Metric and complex structure beyond topology}
\label{sec:metric-complex}

\subsection{Teichm\"uller variation at fixed
  \texorpdfstring{\(K\)}{K}-theory}

The topology of \(S\) fixes \(\Gamma_g\), \(H^*(S;\ZZ)\), and
\eqref{eq:k-surface}.  It does not fix a point of Teichm\"uller space.
For \(g\geq2\),
\[
              \dim_\RR\cT_g=6g-6.
              \tag{5.1}\label{eq:teich-dim}
\]
The geodesic length spectrum, Laplace spectrum, Hodge star, period matrix, and
polarized Jacobian vary over this space.

The following standard spectral-geometric fact is enough to disprove any
factorization of the continuum spectrum through the topology of the surface.

\begin{theorem}[Metric nonfactorization]
\label{thm:metric-nonfactorization}
For every \(g\geq2\), there is a family of curvature-\(-1\) metrics \(h_s\) on
the fixed marked surface \(S\) such that
\[
                   \lambda_1(S,h_s)\longrightarrow0,
                   \tag{5.2}\label{eq:lambda-pinch}
\]
where \(\lambda_1\) is the first positive scalar Laplace eigenvalue.  In
particular, the map \(h\mapsto\spec(\Delta_h)\) is not constant on
\(\cT_g\), whereas \(K^*(S)\) and the class of the trivial line bundle remain
constant.
\end{theorem}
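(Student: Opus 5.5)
The plan is to pinch a separating simple closed geodesic in Fenchel--Nielsen coordinates and bound \(\lambda_1\) from above by a Rayleigh quotient. Fix a pants decomposition of \(S\) compatible with the marking that contains a separating simple closed curve \(c\). Such a curve exists for every \(g\geq2\); for example, one can take the curve cutting off a one-holed torus. Let \(h_s\) be the hyperbolic metric with Fenchel--Nielsen length \(\ell_{h_s}(c)=s\) and all other length and twist parameters held fixed. This is a smooth path in \(\cT_g\) on the fixed marked surface, and we let \(s\to0\). By the collar lemma (see \cite{Buser}), the geodesic \(c\) has an embedded collar of width \(w(s)=\operatorname{arcsinh}(1/\sinh(s/2))\sim\log(1/s)\). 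Write \(S\setminus c=Y_1\sqcup Y_2\). Each \(Y_i\) has area \(2\pi(2g_i-1)\) by Gauss--Bonnet, which is independent of \(s\).

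Next I would build a test function \(f_s\). Set \(f_s\) equal to a constant \(a_1\) on \(Y_1\) minus the collar and to a constant \(a_2\) on \(Y_2\) minus the collar. Across the collar, let \(f_s\) interpolate linearly in the signed distance \(\rho\in[-w,w]\) from \(c\). Choose \(a_1,a_2\) with \(a_1 a_2<0\) so that \(\int_S f_s\,\dd\mathrm{vol}_{h_s}=0\). This is possible by continuity because the two regions have areas bounded below uniformly in \(s\). The collar carries the metric \(\dd\rho^2+s^2\cosh^2\rho\,\dd t^2\) with \(t\in\RR/\ZZ\), so the Dirichlet energy is
\[
\int_S|\nabla f_s|^2\le\frac{(a_1-a_2)^2}{4w^2}\int_{-w}^{w}s\cosh\rho\,\dd\rho=\frac{(a_1-a_2)^2\,s\sinh w}{2w^2}.
\]
Since \(s\sinh w\to2\) as \(s\to0\), this energy is \(O(1/\log^2(1/s))\). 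The \(L^2\) norm of \(f_s\) stays bounded below, because the thick parts \(Y_i\) minus the collar have area bounded below. By the min--max characterization, \(\lambda_1(S,h_s)\le\int|\nabla f_s|^2/\int f_s^2\to0\). The remaining claims then follow directly. First, \(\lambda_1\) is not constant on \(\cT_g\), so \(h\mapsto\spec(\Delta_h)\) is not constant. Second, \(K^*(S)\) from \eqref{eq:k-surface} and the class \([\underline{\CC}]\) depend only on the underlying topological surface, which does not change along the path.

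The main obstacle is bookkeeping in the collar estimate. I need \(s\cosh w\) to stay bounded and the area outside the collar to stay bounded below, both uniformly in \(s\). If this becomes awkward, a cleaner fallback is to cite the Schoen--Wolpert--Yau bound \(\lambda_1\le C\,\ell(c)\) for a separating geodesic, or Buser's small-eigenvalue results \cite{Buser}. That bound gives \eqref{eq:lambda-pinch} immediately, since \(\ell(c)=s\to0\).
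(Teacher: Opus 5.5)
Your proposal is correct and follows essentially the same route as the paper: pinch a separating geodesic, invoke the collar lemma, take a mean-zero test function that is constant with opposite signs on the two sides of \(c\), and apply min--max to the Rayleigh quotient. The only difference is that you interpolate linearly in the collar distance \(\rho\), which gives Dirichlet energy \(O\bigl((\log(1/s))^{-2}\bigr)\). The paper instead interpolates harmonically across a cylinder of diverging conformal modulus, which would give the sharper \(O(s)\) rate; either suffices for \eqref{eq:lambda-pinch}.
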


\begin{proof}
Choose a separating simple closed curve \(c\subset S\), and take a family of
hyperbolic metrics for which its geodesic representative has length
\(\ell_s\to0\).  The collar lemma supplies an embedded cylinder around \(c\)
whose conformal modulus tends to infinity.  Define a function \(f_s\) that is
approximately constant with opposite signs on the two components of
\(S\setminus c\), and interpolate harmonically across the collar.  Subtract
its mean.  The \(L^2\)-norm of \(f_s\) stays bounded below because both
components have positive limiting area, while the Dirichlet energy of the
interpolation tends to zero as the collar modulus tends to infinity.  The
Rayleigh quotient therefore tends to zero.  The min--max principle gives
\eqref{eq:lambda-pinch}.  See \cite{Buser} for the collar estimates and their
spectral consequences.
\end{proof}

\begin{remark}
The theorem does not assert that a spectrum determines a hyperbolic metric.
Isospectral nonisometric surfaces exist.  The required conclusion is weaker:
the metric-to-spectrum map is not constant, so topology alone cannot determine
it.
\end{remark}

\begin{proposition}[Infinitesimal metric response]
\label{prop:metric-variation}
Let \(h_s\) be a smooth family of Riemannian metrics with
\(h_0=h\) and \(\dot h_0=k\).  Suppose
\(\Delta_{h_s}u_s=\lambda_su_s\) is a smooth real-valued branch through a simple
eigenvalue, normalized by
\(\int_S|u_s|^2\,\dd\mathrm{vol}_{h_s}=1\).  Then
\begin{align}
\dot\lambda_0
={}&-\int_S
       \langle k,\dd u_0\otimes\dd u_0\rangle_h\,
       \dd\mathrm{vol}_h \notag\\
&+\frac12\int_S
       \operatorname{tr}_h(k)
       \left(
         |\dd u_0|_h^2-\lambda_0|u_0|^2
       \right)
       \dd\mathrm{vol}_h .
\tag{5.3}\label{eq:metric-variation}
\end{align}
For a transverse-traceless representative of a Teichm\"uller tangent vector,
the second line vanishes.
\end{proposition}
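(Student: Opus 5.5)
The plan is a Hellmann--Feynman argument for the Rayleigh quotient: differentiate the two quadratic forms that define \(\lambda_s\), and use the eigenvalue equation to cancel every term containing \(\dot u_0\).

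Write the Dirichlet and mass forms
\[
D_s(u)=\int_S |\dd u|_{h_s}^2\,\dd\mathrm{vol}_{h_s},
\qquad
N_s(u)=\int_S u^2\,\dd\mathrm{vol}_{h_s}.
\]
Since \(u_s\) is a real eigenfunction of \(\Delta_{h_s}=\dd^*\dd\), integrating by parts gives the identity \(\lambda_s=D_s(u_s)\) under the normalization \(N_s(u_s)=1\). More usefully, the weak form of the eigenvalue equation reads
\[
\int_S\langle \dd u_s,\dd\varphi\rangle_{h_s}\,\dd\mathrm{vol}_{h_s}
=\lambda_s\int_S u_s\varphi\,\dd\mathrm{vol}_{h_s}
\qquad\text{for all }\varphi\in C^\infty(S).
\]

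I would differentiate \(\lambda_s=D_s(u_s)\) at \(s=0\) by the product rule, splitting the result into two pieces.

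The first piece comes from the explicit \(s\)-dependence of the metric with \(u_0\) frozen. From \(\partial_s h_s^{ij}|_{0}=-k^{ij}\) one gets
\[
\partial_s|\dd u_0|^2_{h_s}\big|_{0}=-\langle k,\dd u_0\otimes\dd u_0\rangle_h .
\]
From \(\partial_s\,\dd\mathrm{vol}_{h_s}\big|_0=\tfrac12\operatorname{tr}_h(k)\,\dd\mathrm{vol}_h\) one gets the volume term. Together these contribute
\[
-\int_S\langle k,\dd u_0\otimes\dd u_0\rangle_h\,\dd\mathrm{vol}_h
+\tfrac12\int_S\operatorname{tr}_h(k)\,|\dd u_0|_h^2\,\dd\mathrm{vol}_h .
\]

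The second piece comes from the variation of the eigenfunction. It equals
\[
2\int_S\langle\dd u_0,\dd\dot u_0\rangle_h\,\dd\mathrm{vol}_h .
\]
By the weak equation with \(\varphi=\dot u_0\), this becomes \(2\lambda_0\int_S u_0\dot u_0\,\dd\mathrm{vol}_h\).

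To evaluate that last integral, I would differentiate the normalization \(N_s(u_s)=1\). This yields
\[
2\int_S u_0\dot u_0\,\dd\mathrm{vol}_h+\tfrac12\int_S\operatorname{tr}_h(k)\,u_0^2\,\dd\mathrm{vol}_h=0 .
\]
The \(\dot u_0\)-contribution is therefore \(-\tfrac12\lambda_0\int_S\operatorname{tr}_h(k)\,u_0^2\,\dd\mathrm{vol}_h\). Adding the two pieces gives exactly \eqref{eq:metric-variation}.

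For the final sentence, a transverse-traceless representative satisfies \(\operatorname{tr}_h(k)=0\) pointwise, so the second line vanishes identically. Equivalently, the first-order volume change is zero for such a deformation.

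I expect the only genuine obstacle to be justifying the differentiation. One needs the branch \(s\mapsto(\lambda_s,u_s)\) to be smooth with \(\dot u_0\in H^1(S)\), so that it is an admissible test function in the weak equation. This is exactly where simplicity of the eigenvalue is used. I would obtain it from analytic perturbation theory, following Kato, applied to the Riesz projection onto \(\lambda_s\). To avoid varying Hilbert spaces, I would first pull back to the fixed space \(L^2(S,\dd\mathrm{vol}_h)\) via multiplication by \((\dd\mathrm{vol}_{h_s}/\dd\mathrm{vol}_h)^{1/2}\), which makes the operator family smooth in \(s\). The hypotheses as stated already grant a smooth branch, so in practice only the admissibility check of \(\dot u_0\) needs to be recorded.
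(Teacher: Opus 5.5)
Your proposal is correct and is essentially the paper's own argument: you differentiate the Rayleigh identity \(\lambda_s=\int_S|\dd u_s|_{h_s}^2\,\dd\mathrm{vol}_{h_s}\), read off the explicit metric terms from \(\partial_s h_s^{-1}|_0=-h^{-1}kh^{-1}\) and \(\partial_s\,\dd\mathrm{vol}_{h_s}|_0=\tfrac12\operatorname{tr}_h(k)\,\dd\mathrm{vol}_h\), convert the \(\dot u_0\)-terms to \(2\lambda_0\int_S u_0\dot u_0\,\dd\mathrm{vol}_h\) via the eigenvalue equation, and eliminate that term with the differentiated normalization. Your remarks on the smoothness of the branch and on \(\dot u_0\) being an admissible test function are sound, and they only make explicit what the paper takes as part of the hypotheses.
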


\begin{proof}
Differentiate the Rayleigh identity
\[
 \lambda_s=\int_S|\dd u_s|_{h_s}^2\,
                   \dd\mathrm{vol}_{h_s}.
\]
The standard variations
\(\frac{\dd}{\dd s}h_s^{-1}|_0=-h^{-1}kh^{-1}\) and
\(\frac{\dd}{\dd s}\dd\mathrm{vol}_{h_s}|_0
 =\frac12\operatorname{tr}_h(k)\dd\mathrm{vol}_h\)
give the displayed metric terms.  Integration by parts converts the terms
containing \(\dot u_0\) to
\(2\lambda_0\operatorname{Re}\int_S\overline{u_0}\dot u_0\).
Differentiating the normalization cancels this expression and supplies the
\(-\frac12\lambda_0\operatorname{tr}_h(k)|u_0|^2\) term.
\end{proof}

\Cref{prop:metric-variation} identifies the observable paired with a metric
deformation: it is the stress tensor of the eigenstate.  The formula has a
twisted analogue with \(\dd u\) replaced by \(\nabla_\rho u\), together with
the variation of any connection or potential that is allowed to depend on
the metric.  It makes the nonfactorization in
\Cref{thm:metric-nonfactorization} infinitesimal and experimentally
accessible.

For a twisted Laplacian, \eqref{eq:twisted-heat} makes the two sources of
variation simultaneous.  One may change the length weight by moving in
\(\cT_g\), the Wilson factor by moving in \(\cR_r(S)\), or both.  Neither
motion changes the class in \eqref{eq:k-collapse}.

\subsection{The Jacobian as a real torus and as a polarized complex torus}

A marking gives a fixed real vector space and lattice
\[
       V_S:=H^1(S;\RR),\qquad
       \Lambda_S:=2\pi H^1(S;\ZZ)\subset V_S.
\]
We identify a cohomology class with its \(h_J\)-harmonic representative.  If
\(a\in V_S\), then
\[
       \chi_a(\gamma)
          =\exp\!\left(\ii\int_\gamma a\right),
          \qquad \gamma\in\Gamma_g,
\]
depends only on the class of \(a\) modulo \(\Lambda_S\).  This gives the
marked identification
\[
                  B_{\mathrm{ab}}=V_S/\Lambda_S.
\]
The factor \(2\pi\) reflects the convention that holonomy angles have period
\(2\pi\); dividing \(a\) by \(2\pi\) recovers the standard integral
normalization used in \eqref{eq:period-jac}.

The Hodge star of \(h_J\) defines an endomorphism
\[
             \mathbb J_J[\eta]=[*_{h_J}\eta]
             \quad\text{on }H^1(S;\RR),
             \qquad \mathbb J_J^2=-I.
             \tag{5.4}\label{eq:hodge-complex}
\]
It is compatible with the intersection form.  We use the unnormalized Hodge
pairing
\begin{align}
 \omega_S(\alpha,\beta)
    &:=\int_S\alpha\wedge\beta,\notag\\
 G_J(\alpha,\beta)
    &:=\omega_S(\alpha,\mathbb J_J\beta)
      =\int_S\alpha\wedge *_{h_J}\beta .
 \tag{5.4a}\label{eq:hodge-polarization-data}
\end{align}
The form
\[
                E_S:=\frac{1}{(2\pi)^2}\omega_S
\]
is integral and unimodular on \(\Lambda_S\), while \(G_J\) is positive
definite.  Moreover,
\[
 \omega_S(\mathbb J_J\alpha,\mathbb J_J\beta)
       =\omega_S(\alpha,\beta).
\]
The triple \((\Lambda_S,E_S,\mathbb J_J)\) therefore makes
\(B_{\mathrm{ab}}\) a principally polarized complex torus.  Topology and the
marking fix \((V_S,\Lambda_S,E_S)\); the point \(J\in\cT_g\) supplies
the varying objects \(\mathbb J_J\) and \(G_J\).

To express the same data in holomorphic coordinates, choose the symplectic
homology basis \(\{[a_j],[b_j]\}_{j=1}^g\) and holomorphic one-forms
\(\{\zeta_k\}_{k=1}^g\) normalized by
\[
       \int_{a_j}\zeta_k=\delta_{jk},\qquad
       (\Omega_J)_{jk}:=\int_{b_j}\zeta_k .
       \tag{5.4b}\label{eq:period-matrix-definition}
\]
The Riemann bilinear relations give
\(\Omega_J^{\mathsf T}=\Omega_J\) and
\(\operatorname{Im}\Omega_J>0\), and the resulting complex torus is
\[
             \Jac(X_J)
                =\CC^g/(\ZZ^g+\Omega_J\ZZ^g).
\]
Thus \(\mathbb J_J\), \(G_J\), and \(\Omega_J\) are three descriptions of the
same complex-geometric refinement of the marked real flux torus.  The
distinction is physically important: a topological Fourier model may use
only \(V_S/\Lambda_S\), whereas the continuum bottom band below recovers the
metric \(G_J\) and hence the full polarized complex structure.

\subsection{The bottom band recovers the Hodge metric}

The complex geometry of the Jacobian is already visible in the dispersion of
the simplest continuum Hamiltonian.  Represent a character near the identity
by a harmonic real one-form
\[
                a(\bm\theta)=\sum_{\mu=1}^{2g}
                    \theta^\mu\eta_\mu ,
                \tag{5.5}\label{eq:harmonic-flat-potential}
\]
where \(\{\eta_\mu\}_{\mu=1}^{2g}\) are the harmonic representatives of a
chosen integral cohomology basis and the periods of \(a\) give the logarithms
of the holonomies.  On the smoothly trivial line bundle, define
\[
              \Delta_{a}
                =(\dd+\ii a)^*(\dd+\ii a).
                \tag{5.6}\label{eq:magnetic-flat-laplacian}
\]
A local gauge transformation by \(\e^{\ii f}\) replaces \(a\) by
\(a+\dd f\).  Globally, a \(\UU(1)\)-valued gauge transformation adds a
closed form with periods in \(2\pi\ZZ\).  The spectrum therefore depends only
on the corresponding point of
\(H^1(S;\RR)/2\pi H^1(S;\ZZ)\).

The local calculation that follows belongs to the classical spectral theory
of abelian coverings.  For harmonic one-forms \(\omega,\nu\), in the
character coordinate
\(\chi_\omega(\gamma)=\exp(2\pi\ii\int_\gamma\omega)\), Kotani and Sunada
obtain the quadratic form
\[
  \operatorname{Hess}_0\lambda_0(\omega,\nu)
    =\frac{8\pi^2}{\operatorname{Area}(X)}
       \int_X\omega\wedge *\nu
\]
as the Gaussian term governing their long-time heat-kernel expansion
\cite{KotaniSunada2000}.  Setting \(a=2\pi\omega\) gives precisely the
normalization below.  We include the proof both to fix conventions and to
make its band-theoretic role transparent.

\begin{theorem}[Bottom-band Hessian; cf.\ Kotani--Sunada]
\label{thm:hodge-hessian}
Let \(\lambda_0(\bm\theta)\) be the eigenvalue of
\(\Delta_{a(\bm\theta)}\) issuing from the simple zero eigenvalue of
\(\Delta_0\).  In a neighbourhood of \(\bm\theta=0\) it is the lowest
eigenvalue and is real analytic.  Its first derivative vanishes, and
\[
 \left.
 \frac{\partial^2\lambda_0}
      {\partial\theta^\mu\partial\theta^\nu}
 \right|_{\bm\theta=0}
   =
 \frac{2}{\operatorname{Area}(X)}
 \int_X\eta_\mu\wedge *\eta_\nu .
 \tag{5.7}\label{eq:hodge-hessian}
\]
Consequently, the marked Hessian of the bottom band determines the Hodge
inner product on \(H^1(S;\RR)\).
\end{theorem}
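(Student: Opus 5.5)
Expand the operator family \(\Delta_{a}\) in \(a\) and apply analytic (Kato--Rellich) perturbation theory at the simple eigenvalue \(0\) of \(\Delta_0\), whose eigenfunction is the constant \(u_0=\operatorname{Area}(X)^{-1/2}\). Writing \(\Delta_a=\Delta_0+\Delta^{(1)}_a+\Delta^{(2)}_a\), one has
\[
\Delta^{(1)}_a f=\ii\,\dd^*(af)-\ii\langle a,\dd f\rangle ,
\qquad
\Delta^{(2)}_a f=|a|^2 f .
\]
This follows from \((\dd+\ii a)^*=\dd^*-\ii\iota_{a^\sharp}\), with the sign conventions fixed so that \(\Delta_0=\dd^*\dd\geq0\). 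Since \(\bm\theta\mapsto\Delta_{a(\bm\theta)}\) is a polynomial (hence holomorphic type (A)) family of self-adjoint operators on a fixed domain \(H^2(X)\), and \(0\) is an isolated simple eigenvalue of \(\Delta_0\) (the surface is connected), the branch \(\lambda_0(\bm\theta)\) is real analytic. It remains the lowest eigenvalue near \(\bm\theta=0\) by continuity of the spectrum, because the rest of the spectrum of \(\Delta_0\) lies in \([\lambda_1,\infty)\) with \(\lambda_1>0\).

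\textbf{First order.} The first-order coefficient is \(\langle u_0,\Delta^{(1)}_a u_0\rangle\). Because \(\dd u_0=0\), this equals \(\ii\int_X\overline{u_0}\,\dd^*(a u_0)=\ii u_0^2\int_X\dd^* a=0\). The same conclusion follows from \(\dd^*a=0\) for harmonic \(a\), or from the global bound \(\lambda_0\geq0\) together with \(\lambda_0(0)=0\). Hence the gradient vanishes.

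\textbf{Second order.} The Rayleigh--Schrödinger formula gives
\[
\lambda_0^{(2)}=\langle u_0,\Delta^{(2)}_a u_0\rangle-\bigl\langle \Delta^{(1)}_a u_0,\;\Delta_0^{-1}\Pi^\perp\Delta^{(1)}_a u_0\bigr\rangle ,
\]
where \(\Pi^\perp\) is the projection orthogonal to the constants. The key observation is that \(\Delta^{(1)}_a u_0=\ii u_0\,\dd^*a=0\) when \(a\) is harmonic, since \(\dd^* a=0\) is exactly coclosedness. So the interband term vanishes identically, the first-order eigenvector correction is zero, and
\[
\lambda_0(\bm\theta)=u_0^2\int_X|a(\bm\theta)|^2\,\dd\mathrm{vol}+O(|\bm\theta|^3)
=\frac{1}{\operatorname{Area}(X)}\int_X a\wedge *a+O(|\bm\theta|^3).
\]
Polarizing the quadratic form \(\sum\theta^\mu\theta^\nu\int\eta_\mu\wedge*\eta_\nu\) and differentiating twice produces the factor \(2\), which yields \eqref{eq:hodge-hessian}.

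As an independent check that avoids operator expansions, take the trial function \(u_0\) in the min--max principle. This gives the upper bound \(\lambda_0\leq\operatorname{Area}^{-1}\int|a|^2\). For the lower bound, write \(f=u_0+v\) with \(v\perp 1\) and expand \(\|(\dd+\ii a)f\|^2\). The cross term \(2\Re\langle \dd v,\ii a u_0\rangle\) vanishes because \(\dd^*a=0\), so the error is \(O(|a|^3)\).

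\textbf{Consequence.} The matrix \(\int\eta_\mu\wedge*\eta_\nu\) is the Gram matrix of \(G_J\) in the marked integral basis, so the marked Hessian determines \(G_J\) up to the known positive factor \(2/\operatorname{Area}(X)\). The area is \(4\pi(g-1)\) by Gauss--Bonnet, so this factor is fixed by topology.

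\textbf{Main obstacle.} The main obstacle is bookkeeping rather than substance. One must get the sign and adjoint conventions in \(\Delta^{(1)}_a\) right so that the vanishing of \(\Delta^{(1)}_a u_0\) is transparent. One must also justify that working with harmonic representatives is legitimate. This holds because gauge invariance makes \(\lambda_0\) depend only on the cohomology class, and choosing the coclosed representative is precisely what kills the interband contribution. For non-harmonic representatives that contribution would exactly compensate the extra \(\|\dd f\|^2\) in \(|a|^2\).
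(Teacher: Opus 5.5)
Your proposal is correct and follows the paper's proof. You expand \(\Delta_a\) and note that coclosedness \(\dd^*\eta_\mu=0\) makes the first-order operator annihilate the constant ground state \(u_0\); this kills both the gradient and the interband second-order sum, so the Hessian is the expectation \(2u_0^2\int_X\eta_\mu\wedge*\eta_\nu\) of the \(|a|^2\) term. Your min--max cross-check and the remark on non-harmonic representatives are correct extras that the argument does not need.
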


\begin{proof}
Let \(u_0=\operatorname{Area}(X)^{-1/2}\) be the normalized constant
eigenfunction.  For a real one-form \(a\),
\[
 \Delta_a f
   =\Delta f+\ii(\dd^*a)f
      -2\ii\langle a,\dd f\rangle+|a|^2f .
 \tag{5.8}\label{eq:magnetic-laplacian-expanded}
\]
Each \(\eta_\mu\) is harmonic, so \(\dd^*\eta_\mu=0\).  It follows that
\[
   \left(\partial_{\theta^\mu}\Delta_a\right)_0u_0=0,
 \qquad
   \left(
     \partial_{\theta^\mu}\partial_{\theta^\nu}\Delta_a
   \right)_0u_0
       =2\langle\eta_\mu,\eta_\nu\rangle u_0 .
 \tag{5.9}\label{eq:magnetic-operator-derivatives}
\]
Analytic perturbation theory applies because zero is a simple isolated
eigenvalue of \(\Delta_0\).  The first-order formula gives
\(\partial_\mu\lambda_0(0)=0\).  In the second-order formula, the usual
sum over excited states vanishes because the first operator derivative
annihilates \(u_0\).  Taking the expectation of the second derivative in
\eqref{eq:magnetic-operator-derivatives} gives
\eqref{eq:hodge-hessian}.
\end{proof}

In the notation of \eqref{eq:hodge-polarization-data}, the unnormalized
intersection form and Hodge metric are
\[
 \omega_S(\alpha,\beta)=\int_X\alpha\wedge\beta,
 \qquad
 G_J(\alpha,\beta)=\int_X\alpha\wedge *\beta .
 \tag{5.10}\label{eq:intersection-hodge-pair}
\]
They satisfy
\[
                    G_J(\alpha,\beta)
                       =\omega_S(\alpha,\mathbb J_J\beta).
                       \tag{5.11}\label{eq:hodge-from-symplectic}
\]
The integral lattice and \(\omega_S\) are fixed by the marked topology.
\Cref{thm:hodge-hessian} supplies \(G_J\), and
\eqref{eq:hodge-from-symplectic} therefore reconstructs
\(\mathbb J_J\).

\begin{corollary}[Homologically marked Hodge--Torelli consequence]
\label{cor:torelli-from-band}
For a marked closed hyperbolic surface, the Hessian in
\eqref{eq:hodge-hessian} determines the principally polarized Jacobian with
the homology marking induced by the surface marking.  By the classical
Torelli theorem it determines the underlying Riemann surface \(X_J\) up to
conformal isomorphism, and uniformization determines its curvature-\(-1\)
metric up to isometry.  The Hessian does not determine the full
Teichm\"uller marking.
\end{corollary}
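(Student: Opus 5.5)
The plan is to pass in four steps: from the Hessian to the Hodge data on the marked lattice, from there to the polarized Jacobian, then to the surface via Torelli and uniformization, and finally to show the marking cannot be recovered.

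\textbf{Step 1: recover the Hodge metric.} Since the basis \(\{\eta_\mu\}\) consists of harmonic representatives of a marked integral basis, the marked Hessian in \eqref{eq:hodge-hessian} gives the Gram matrix of \(G_J\) on \(H^1(S;\RR)\) up to the factor \(2/\operatorname{Area}(X)\). For a curvature \(-1\) metric, Gauss--Bonnet gives \(\operatorname{Area}(X)=4\pi(g-1)\), and this is fixed by topology. So the Hessian determines \(G_J\) exactly, not merely up to scale. Even without Gauss--Bonnet, a positive rescaling of \(G_J\) would be harmless: \(\mathbb J_J\) is forced by the normalization \(\mathbb J_J^2=-I\), and rescaling would break it.

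\textbf{Step 2: recover the complex structure and period matrix.} The marking fixes \(\Lambda_S\) and the intersection form \(\omega_S\), which is nondegenerate. Inverting \eqref{eq:hodge-from-symplectic} gives \(\mathbb J_J=\omega_S^{-1}G_J\) as an endomorphism of \(H^1(S;\RR)\). The triple \((\Lambda_S,E_S,\mathbb J_J)\) is then the principally polarized complex torus, with the marked lattice basis dual to \(\{[a_j],[b_j]\}\). To make the identification with \eqref{eq:period-jac} concrete, I would write the holomorphic forms as \((1-\ii\mathbb J_J)\)-images of real harmonic forms. I would then normalize their \(a\)-periods and read off \(\Omega_J\) from the \(b\)-periods. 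One sign convention needs care here, namely that the \((1,0)\)-forms are the \(-\ii\) eigenspace of \(*\) acting on \(1\)-forms. This yields the polarized Jacobian together with its symplectic homology marking.

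\textbf{Step 3: recover the surface.} The classical Torelli theorem says the principally polarized Jacobian determines \(X_J\) up to biholomorphism. This holds even after forgetting the marking. Uniformization then gives a unique curvature \(-1\) metric in the conformal class, and a biholomorphism between two such surfaces is an isometry of their hyperbolic metrics.

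\textbf{Step 4: non-recovery of the marking.} This is the step I expect to need the most care. The Torelli group \(\mathcal I_g\) acts trivially on \(H_1(S;\ZZ)\) and is infinite for \(g\ge2\) \cite{FarbMargalit}. Take \(\phi\in\mathcal I_g\). The marked surfaces \(J\) and \(\phi^*J\) have pullback-related harmonic forms, and since \(\phi\) acts trivially on the integral basis, their marked Gram matrices are equal, so their Hessians coincide. The remaining task is to show \(\phi^*J\neq J\) in \(\cT_g\) for some \(\phi\), which follows because the mapping class group acts properly discontinuously on \(\cT_g\) with finite stabilizers. For \(g\ge3\) the action on \(\cT_g\) is faithful, so any nontrivial \(\phi\) works. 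For \(g=2\), where the hyperelliptic involution acts trivially on \(\cT_2\), I would choose \(\phi\) of infinite order, such as a separating Dehn twist. This distinguishes the points of \(\cT_g\) while leaving the Hessian unchanged.
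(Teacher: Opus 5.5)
Your proposal is correct and follows essentially the same route as the paper: Gauss--Bonnet fixes the area, $\mathbb J_J=\omega_S^{-1}G_J$ gives the period point and polarization, Torelli and uniformization recover the surface, and Torelli-group mapping classes preserve the homology-marked data. Your Step~4 usefully supplies the proper-discontinuity argument that the paper leaves implicit when it says such a class ``can change'' the Teichm\"uller marking.

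Two small slips, neither affecting the conclusion, should be fixed.

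\begin{enumerate}
\item With your (correct) convention $*\,\dd z=-\ii\,\dd z$, the holomorphic forms are the $(1+\ii\mathbb J_J)$-images of real harmonic forms, not the $(1-\ii\mathbb J_J)$-images. The latter are the $(0,1)$-forms.
\item For $g\geq3$, faithfulness of the action alone does not show that a given $J$ with nontrivial automorphisms is moved by a nontrivial $\phi$. Your infinite-order argument repairs this uniformly in every genus $g\geq2$: a separating Dehn twist lies in the Torelli group and has infinite order, and more generally the Torelli group is torsion-free. Finite stabilizers then force $\phi^*J\neq J$.
\end{enumerate}
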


\begin{proof}
Gauss--Bonnet fixes
\(\operatorname{Area}(X)=4\pi(g-1)\), so the area factor in
\eqref{eq:hodge-hessian} is already known.  The preceding discussion
recovers the Hodge star on the marked integral cohomology lattice, hence the
period point and principal polarization.  Torelli recovers the Riemann
surface from its principally polarized Jacobian
\cite{FarkasKra,BirkenhakeLange}; uniformization recovers its hyperbolic
metric.  A mapping class in the Torelli group acts trivially on
\(H_1(S;\ZZ)\), and therefore leaves the homology-marked period data
unchanged, while it can change the Teichm\"uller marking
\cite{FarbMargalit}.
\end{proof}

Colbois, Provenzano, and Savo obtain the corresponding global reconstruction
statement from a countable magnetic ground-state spectrum: its weak-field
asymptotics recover the Hodge Gram matrix, and Torelli then recovers the
conformal class \cite{ColboisProvenzanoSavo2025}.  The preceding corollary is
the homologically marked infinitesimal form of the same mechanism.  These
earlier results establish the analytic and geometric reconstruction used
here; the present formulation places it in the continuum part of the
information hierarchy and makes explicit what the measured flux coordinates
do and do not retain.

The conclusion is stronger than mere spectral nonconstancy.  The class
\([L_\chi]\in K^0(S)\) is constant, but the quadratic term of one band near
the identity character reconstructs the complex geometry that distinguishes
the flat bundles as a holomorphic family.  This is the continuum analogue of
recovering an Albanese torus from the Bloch spectrum of a quantum graph
\cite{RueckriemenBloch}.

\begin{proposition}[Kinematical versus dynamical holomorphy]
\label{prop:kinematical-holomorphy}
Suppose a finite abelian Bloch matrix has the form
\[
 H_{\bm\theta}
   =H_0+\sum_{\nu=1}^M
      \left(T_\nu\e^{\ii\langle\bm q_\nu,\bm\theta\rangle}
      +T_\nu^*\e^{-\ii\langle\bm q_\nu,\bm\theta\rangle}\right),
 \tag{5.12}\label{eq:topological-bloch-matrix}
\]
where \(\bm q_\nu\in H_1(S;\ZZ)\), and \(H_0,T_\nu\) are independent of
\(J\).  After identifying the character tori by the marking, the matrix family
\eqref{eq:topological-bloch-matrix} is independent of the period matrix
\(\Omega_J\).  Hence the complex structure of \(\Jac(X_J)\) is kinematical
for this model.

If instead a coefficient \(T_\nu(J)\) or a kinetic operator varies with
\(J\), then a simple eigenvalue varies along any Teichm\"uller direction for
which the corresponding Feynman--Hellmann matrix element is nonzero.
Independently, if the spectral projector has a nonzero derivative in such a
direction, its quantum metric is nonzero there.  In either case the complex
structure is dynamically active.
\end{proposition}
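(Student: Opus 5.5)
The first assertion is essentially a bookkeeping statement, so I would prove it by making the dependence on \(J\) explicit.

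The marking identifies every character torus \(B_{\mathrm{ab}}\) with the fixed real torus \(V_S/\Lambda_S\) through the flux coordinates \(\bm\theta\) of \eqref{eq:abelian-bz}. In these coordinates each term of \eqref{eq:topological-bloch-matrix} depends only on three things: the integer vectors \(\bm q_\nu\in H_1(S;\ZZ)\), which are fixed by the marked topology; the matrices \(H_0,T_\nu\), which are \(J\)-independent by hypothesis; and the character evaluation \(\chi_{\bm\theta}(\gamma)=\e^{\ii\langle[\gamma],\bm\theta\rangle}\), which uses only the pairing of \(\Lambda_S\) with \(H_1\). None of these ingredients involves \(\mathbb J_J\), \(G_J\) or \(\Omega_J\). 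Hence the map \(\bm\theta\mapsto H_{\bm\theta}\) is literally the same function on \(V_S/\Lambda_S\) for every \(J\in\cT_g\). Its spectra, projectors and every derived observable are then \(\Omega_J\)-independent, and the complex structure of \(\Jac(X_J)\) enters only as an unused structure on the parameter space. This is what \emph{kinematical} means, and I would state it in exactly those terms.

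For the dynamical part I would fix \(\bm\theta\) and a smooth path \(s\mapsto J_s\) in \(\cT_g\) with tangent vector \(\dot J\). Along this path \(H_{\bm\theta}(s)\) is smooth, since each \(T_\nu(J_s)\), or the kinetic operator, is assumed to vary smoothly. At a simple eigenvalue \(E_j\), the metric Feynman--Hellmann identity \eqref{eq:metric-fh} gives \(\dot E_j=\bra{u_j}\dot H\ket{u_j}\). If this matrix element is nonzero, then \(E_j\) is nonconstant along the direction \(\dot J\). Since the character torus has been identified by the marking, this is variation of a band energy at a fixed marked momentum, so \(J\) is dynamically active. In the continuum case I would cite \Cref{prop:metric-variation} or \Cref{thm:hodge-hessian}: there \(G_J\) enters the Hessian directly.

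For the projector statement I would use the Riesz projection \eqref{eq:riesz-cluster}, which is smooth in \(s\) while the gap stays open. I would define the quantum metric in the \(s\)-direction as \(g_{ss}=\tr\bigl(P\,\dot P\,(1-P)\,\dot P\bigr)=\tfrac12\tr(\dot P^2)\), which follows from \(P^2=P\) because \(\dot P=P\dot P(1-P)+(1-P)\dot PP\). This quantity is the squared Hilbert--Schmidt norm of a Hermitian matrix, so it vanishes only when \(\dot P=0\). A nonzero derivative of the projector therefore forces \(g_{ss}>0\).

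The only delicate point is the word \emph{independently} in the statement. A projector can rotate while its energies stay fixed, and energies can shift while the projector stays fixed, so I would keep the two criteria logically separate rather than deriving one from the other. Each criterion alone certifies that the complex structure is dynamically active.
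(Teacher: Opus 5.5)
Your proposal is correct and follows the paper's proof essentially step for step. The first assertion is by inspection of the marked pairing \(\langle\bm q_\nu,\bm\theta\rangle\); the eigenvalue claim is by the Feynman--Hellmann identity \eqref{eq:metric-fh} along a Teichm\"uller tangent vector; and the projector claim is via \(\tfrac12\Tr\bigl((\partial P)^2\bigr)>0\) for a nonzero Hermitian derivative. Your explicit check that \(\tr\bigl(P\dot P(1-P)\dot P\bigr)=\tfrac12\tr(\dot P^2)\), and your continuum citation of \Cref{prop:metric-variation}, are harmless additions that the paper leaves implicit.
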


\begin{proof}
The first assertion follows by inspection: only the real character
\(\bm\theta\mapsto\e^{\ii\langle\bm q,\bm\theta\rangle}\) appears, and that
pairing is fixed by the marking.  For the second, differentiate the resulting
Hamiltonian with respect to a tangent vector in \(\cT_g\) and apply
\eqref{eq:metric-fh}.  If the projector varies, the projector formula
\eqref{eq:projector-qgt} gives
\(\tfrac12\Tr((\partial P)^2)>0\) for every nonzero Hermitian derivative
\(\partial P\).
\end{proof}

This proposition makes the physical point precise.  A complex-geometric
parametrization does not by itself change an observable; the Hamiltonian or
its states must couple to that geometry.

\subsection{Mechanisms that activate the geometry}

Several such mechanisms arise naturally.

\begin{enumerate}[label=\textup{(\roman*)},leftmargin=2.6em]
\item \textbf{Continuum kinetic energy.}
  The adjoint in \(\nabla_\rho^*\nabla_\rho\), the volume form, and the heat
  kernel all depend on \(h_J\).  The length factors in the trace formula are
  measurable spectral data.

\item \textbf{Distance- and angle-dependent hopping.}
  If
  \[
        T_e(J)=F_e\bigl(d_{h_J}(x_{s(e)},x_{t(e)}),
                 \angle_{h_J}(e),\ldots\bigr),
        \tag{5.13}\label{eq:metric-hopping}
  \]
  then a Teichm\"uller deformation changes \(H\) while leaving the abstract
  labelled graph fixed.  Resonator and capacitive implementations make this
  dependence explicit \cite{XuEtAl2025}.

\item \textbf{Parallel transport and orbital frames.}
  Spin connection, magnetic connection, or orientation-sensitive orbitals
  attach geometric transport matrices to edges.  Closed-walk coefficients
  then include both \(\rho(\gamma_w)\) and a metric connection holonomy.

\item \textbf{Holomorphic state dependence.}
  A family of eigenprojectors may be holomorphic or pseudoholomorphic with
  respect to the complex structure on a Jacobian or a two-dimensional slice
  of a character variety.  This imposes a pointwise relation between Berry
  curvature and quantum metric; see
  \Cref{prop:pseudoholomorphic-bound}.

\item \textbf{Higgs or spectral data.}
  A Higgs field changes a spectral curve and can enter a covariant kinetic
  operator or a matrix-valued potential.  The underlying bundle class stays
  fixed while the operator changes.
\end{enumerate}

These mechanisms do not compete with topological band theory.  They supply
the continuous data on which topological invariants impose constraints.  The
next section develops this statement on the sector side, where complex
structures organize flat and Higgs bundles; \Cref{sec:quantum-geometry}
develops it on the eigenstate side, where holomorphy constrains the
differential of the spectral projector.

\section{Holomorphic bundles, Higgs fields, and spectral curves}
\label{sec:holomorphy}

\subsection{The Narasimhan--Seshadri complex structure}

The stratified real-analytic character space \(\cR_r(S)\) is defined without
choosing \(J\); its smooth irreducible locus carries the symplectic structure
used below.
The identification
\[
        \cR_r^{\mathrm{irr}}(S)
           \xrightarrow{\ \mathrm{NS}_J\ }
        \cM_{X_J}^s(r,0)
        \tag{6.1}\label{eq:NS-J}
\]
depends on \(J\).  The right-hand side is a complex algebraic variety, and
its K\"ahler metric may be described through gauge theory
\cite{NarasimhanSeshadri,DonaldsonNS,AtiyahBott,GarciaRabosoRayan}.  As \(J\)
varies, the same topological representation space receives a varying
complex-geometric interpretation.

The dependence can be seen directly in the tangent geometry.  At a smooth
irreducible point \([\rho]\), deformation theory identifies
\[
       T_{[\rho]}\cR_r^{\mathrm{irr}}(S)
          \cong H^1(S;\operatorname{ad}\rho),
       \tag{6.1a}\label{eq:character-tangent}
\]
where \(\operatorname{ad}\rho\) is the flat bundle with fibre
\(\mathfrak u(r)\) and monodromy induced by conjugation.  Choose an
\(\operatorname{Ad}\)-invariant inner product on \(\mathfrak u(r)\), and
represent cohomology classes by harmonic one-forms for the metric \(h_J\).
Then the Atiyah--Bott--Goldman symplectic form, the Hodge metric, and the
associated complex structure are
\begin{align}
 \omega_\rho([\alpha],[\beta])
    &=\int_S\langle\alpha\wedge\beta\rangle,\notag\\
 G_{J,\rho}([\alpha],[\beta])
    &=\int_S\langle\alpha\wedge *_{h_J}\beta\rangle,\notag\\
 \mathbb J_{J,\rho}[\alpha]
    &=[*_{h_J}\alpha].
 \tag{6.1b}\label{eq:nonabelian-kahler-data}
\end{align}
Here the brackets in the integrands pair the Lie-algebra coefficients.
Because \(*_{h_J}^2=-1\) on one-forms,
\[
        G_{J,\rho}(\,\cdot\,,\,\cdot\,)
          =\omega_\rho(\,\cdot\,,\mathbb J_{J,\rho}\,\cdot\,).
        \tag{6.1c}\label{eq:nonabelian-kahler-relation}
\]
The symplectic pairing is determined by the oriented surface, the cup
product, and \(\rho\); the metric and complex structure depend on \(J\)
\cite{AtiyahBott,GoldmanSymplectic}.  Thus the same real character space
acquires a family of K\"ahler geometries as the marked surface moves in
Teichm\"uller space.  Equation \eqref{eq:NS-J} identifies this Hodge-theoretic
complex structure with the complex structure on the stable-bundle moduli
space.

At rank one this is simply the passage from the real flux torus to the
polarized complex torus in \eqref{eq:period-jac}.  At higher rank it is more
substantial: stable bundles, theta divisors, determinant lines, and
holomorphic sections depend on \(X_J\).  None of these is recovered from the
constant class \(r[\cO_X]\).

This gives two logically distinct reasons that \(K^0(S)\) is insufficient.
It identifies all degree-zero rank-\(r\) bundles at the fibrewise level, and
it does not retain the \(J\)-dependent differential geometry of the
representation space in which those fibres move.  A response coefficient
formed from tangent vectors in \eqref{eq:character-tangent}, for example,
can depend on \(G_{J,\rho}\) even when every represented bundle has class
\(r[\cO_X]\).

This moduli-theoretic viewpoint is central to the development of hyperbolic
band theory in
\cite{MaciejkoRayan2022,KienzleRayan2022,RayanQuantumMatter2026}.
It also explains why the phrase ``higher-dimensional momentum'' is only part
of the story.  A nonabelian representation variety has singular strata,
symplectic leaves, complex structures, and natural line bundles.  Treating it
as a featureless set of labels discards structures that can control
degeneracy, transport, and quantization.

\subsection{Higgs fields vary at fixed \texorpdfstring{\(K\)}{K}-class}

Let \(L\to X\) be a holomorphic line bundle.  An \(L\)-twisted Higgs bundle is
a pair
\[
                   (E,\Phi),\qquad
       \Phi\in H^0\!\left(X,\End(E)\otimes L\right).
       \tag{6.2}\label{eq:twisted-higgs}
\]
The ordinary Higgs-bundle case has \(L=K_X\)
\cite{HitchinSelfDuality,SimpsonHiggsLocal}.  The characteristic polynomial
of \(\Phi\) defines the Hitchin map.  Writing \(\cM_L(r,d)\) for the moduli
space of rank-\(r\), degree-\(d\), \(L\)-twisted Higgs bundles, it is
\[
  h_L:\cM_L(r,d)\longrightarrow
       \bigoplus_{j=1}^r H^0(X,L^j),
  \qquad
  (E,\Phi)\longmapsto(a_1(\Phi),\ldots,a_r(\Phi)).
  \tag{6.3}\label{eq:hitchin-map}
\]
In the total space of \(L\), with tautological section \(\eta\), the spectral
curve is
\[
          \Sigma_\Phi
          =\left\{\det(\eta-\pi^*\Phi)=0\right\}
          \subset\operatorname{Tot}(L).
          \tag{6.4}\label{eq:spectral-curve}
\]
Here \(\pi:\operatorname{Tot}(L)\to X\) is the bundle projection.
For a smooth spectral curve, \(E\) can be reconstructed as the pushforward of
a line bundle on \(\Sigma_\Phi\), subject to the usual spectral
correspondence \cite{HitchinSelfDuality,BNR}.  This is one precise sense in
which a complex momentum variable can carry both lattice and band data in the
proposal of \cite{KienzleRayan2022}.

\begin{proposition}[Higgs collapse and spectral variation]
\label{prop:higgs-collapse}
Fix \(r\), \(d\), \(X\), and \(L\).  The forgetful map
\[
       \cM_L(r,d)\longrightarrow K^0(X),
       \qquad (E,\Phi)\longmapsto[E],
       \tag{6.5}\label{eq:higgs-to-k}
\]
is constant, with value determined by rank \(r\) and degree \(d\).  By
contrast, if the moduli space contains a pair \((E,\Phi)\) for which some
characteristic coefficient \(a_j(\Phi)\) is nonzero, then the Hitchin map
\eqref{eq:hitchin-map} is nonconstant.  This hypothesis holds, in particular,
for the ordinary \(L=K_X\) Hitchin system.

If \(\Sigma_\Phi\) is smooth, then
\[
 g(\Sigma_\Phi)
    =1+r(g(X)-1)+\frac{r(r-1)}2\deg L.
    \tag{6.6}\label{eq:spectral-genus}
\]
For \(L=K_X\), this becomes
\[
                    g(\Sigma_\Phi)=1+r^2(g(X)-1).
                    \tag{6.7}\label{eq:hitchin-genus}
\]
Thus the \(K\)-class of \(E\) does not determine the Higgs field, its
characteristic coefficients, or its spectral curve.
\end{proposition}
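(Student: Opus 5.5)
The proof has three parts: the constancy of the forgetful map, the nonconstancy of the Hitchin map, and the genus formula.

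\textbf{Constancy of the forgetful map.} I would first show that the map \eqref{eq:higgs-to-k} factors through the underlying smooth bundle. This is the same mechanism as in \Cref{thm:k-collapse}. The map \((E,\Phi)\mapsto[E]\) forgets \(\Phi\) and the Dolbeault operator, so its value is the class of a smooth complex bundle of rank \(r\) and first Chern number \(d\). By \eqref{eq:reduced-k}, rank and \(c_1\) determine the class in \(K^0(X)\). Hence the map takes the single value \(r[\underline{\CC}]+(\text{class with }c_1=d)\). Equivalently, it is \((r-1)[\underline{\CC}]+[\mathcal L_d]\) for any smooth line bundle \(\mathcal L_d\) of degree \(d\).

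\textbf{Nonconstancy of the Hitchin map.} The key tool is the \(\CC^*\)-action \((E,\Phi)\mapsto(E,t\Phi)\), which preserves \(\cM_L(r,d)\). It acts on the coefficients by \(a_j(t\Phi)=t^ja_j(\Phi)\). If some \(a_j(\Phi)\neq0\), then \(h_L(E,t\Phi)\) is nonconstant in \(t\), since the \(j\)th component scales by \(t^j\). This requires the scaled pair to stay in the moduli space. Stability is preserved because \(\Phi\)-invariant subbundles are also \(t\Phi\)-invariant for \(t\neq0\), and the same holds for semistability.

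For \(L=K_X\) with \(g\geq2\), I need an explicit pair with some \(a_j(\Phi)\neq0\). Take a stable bundle \(E\); these exist for \(g\ge2\). Set \(\Phi=\omega\cdot\mathrm{Id}_E\) for a nonzero holomorphic differential \(\omega\in H^0(K_X)\). Then \(a_1=\tr\Phi=r\omega\neq0\), and \((E,\Phi)\) is stable because \(E\) is. If trace-free Higgs fields were wanted, one could instead use the Hitchin section, but this is not needed for the statement.

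\textbf{Genus formula.} For \eqref{eq:spectral-genus} I would use adjunction in the surface \(Z=\operatorname{Tot}(L)\), or equivalently the BNR computation. The curve \(\Sigma\) lies in the linear system of \(\mathcal O(r)\) relative to \(\pi\). Let \(\xi\) denote the class of the zero section. Then \([\Sigma]=r\xi\), and one has \(K_Z=\pi^*(K_X\otimes L^{-1})\). Adjunction gives
\[
2g(\Sigma)-2=\deg(K_Z+\Sigma)|_\Sigma=r(2g-2-\deg L)+r^2\deg L.
\]
Here I use \(\xi\cdot\xi=\deg L\) on the zero section, and \(\Sigma\cdot\pi^*D=r\deg D\).

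A cleaner alternative is the pushforward formula \(\pi_*\cO_\Sigma=\bigoplus_{i=0}^{r-1}L^{-i}\). It gives \(\chi(\cO_\Sigma)=\sum_i(1-g-i\deg L)=r(1-g)-\tfrac{r(r-1)}2\deg L\). Since \(g(\Sigma)=1-\chi(\cO_\Sigma)\), this yields \eqref{eq:spectral-genus}, and I would present this version.

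Substituting \(\deg K_X=2g-2\) then gives \(1+r(g-1)+r(r-1)(g-1)=1+r^2(g-1)\), which is \eqref{eq:hitchin-genus}. The final sentence follows by combining the constant \(K\)-class with the nonconstant Hitchin map, exactly as in \Cref{cor:k-observable-constant}.

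The only step needing care is justifying \(\pi_*\cO_\Sigma\cong\bigoplus_{i=0}^{r-1} L^{-i}\). I would cite \cite{BNR}. Smoothness of \(\Sigma\) is used only to identify the arithmetic genus with the geometric genus.
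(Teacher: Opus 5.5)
Your proof is correct. The first two parts follow the paper: constancy comes from the rank--degree classification, and nonconstancy from \(a_j(t\Phi)=t^ja_j(\Phi)\). You add two details the paper leaves implicit. First, you check that \((E,t\Phi)\) stays (semi)stable. Second, you give an explicit witness \((E,\omega\,\mathrm{Id}_E)\) for \(L=K_X\), where the paper simply appeals to the Hitchin fibration. That witness uses only the trace part of \(\Phi\), which suffices because \(\cM_L(r,d)\) imposes no trace-free condition.

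The genuine divergence is the genus formula. The paper uses adjunction on \(\operatorname{Tot}(L)\). From \(\cO(\Sigma_\Phi)\cong\pi^*L^r\) and \(K_{\operatorname{Tot}(L)}\cong\pi^*(K_X\otimes L^{-1})\) it obtains \(K_{\Sigma_\Phi}\cong\pi^*(K_X\otimes L^{r-1})|_{\Sigma_\Phi}\), then takes degrees using \(\deg(\pi|_{\Sigma_\Phi})=r\); this is your first sketch. The version you choose to present instead computes \(\chi(\cO_{\Sigma_\Phi})\) from \(\pi_*\cO_{\Sigma_\Phi}\cong\bigoplus_{i=0}^{r-1}L^{-i}\) and Riemann--Roch on \(X\).

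Adjunction is quicker once the canonical bundle of the total space is known. It also produces \(K_{\Sigma_\Phi}\) itself, which is the object that matters in the spectral correspondence. Your route costs the BNR pushforward lemma, but in return it needs no intersection theory on the noncompact surface. It also needs no smoothness: it gives the arithmetic genus of every spectral curve without invoking dualizing sheaves. Finally, it shows \(h^0(\cO_{\Sigma_\Phi})=h^0(\cO_X)=1\) whenever \(\deg L>0\), i.e.\ that \(\Sigma_\Phi\) is connected.

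That connectedness is what justifies reading either \(1-\chi(\cO_{\Sigma_\Phi})\) or \(1+\tfrac12\deg K_{\Sigma_\Phi}\) as the genus of a smooth spectral curve, and neither argument states it. For \(\deg L=0\), a smooth spectral curve can be a disjoint union of copies of \(X\), and the formula then records only the arithmetic genus.
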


\begin{proof}
On a compact Riemann surface, the stable topological class of a complex vector
bundle is determined by rank and degree, so \eqref{eq:higgs-to-k} is
constant.  If \(a_j(\Phi)\neq0\), the scaling family
\((E,s\Phi)\), \(s\in\CC^\times\), satisfies
\[
                         a_j(s\Phi)=s^j a_j(\Phi),
\]
and hence has nonconstant Hitchin image.  In the ordinary Hitchin system,
the usual spectral construction supplies such points and the Hitchin map is
the nonconstant Hitchin fibration.

For the genus formula, the spectral equation is a section of
\(\pi^*L^r\).  The canonical bundle of \(\operatorname{Tot}(L)\) is
\(\pi^*(K_X\otimes L^{-1})\).  Adjunction gives
\[
 K_{\Sigma_\Phi}
    \cong
 \pi^*(K_X\otimes L^{r-1})\big|_{\Sigma_\Phi}.
\]
Because \(\pi:\Sigma_\Phi\to X\) has degree \(r\),
\[
 2g(\Sigma_\Phi)-2
   =r\bigl(2g(X)-2+(r-1)\deg L\bigr),
\]
which is equivalent to \eqref{eq:spectral-genus}.  Substituting
\(\deg K_X=2g(X)-2\) gives \eqref{eq:hitchin-genus}.
\end{proof}

\begin{example}[A \(K\)-invisible Higgs direction]
\label{ex:higgs-scaling}
Fix a stable degree-zero bundle \(E\) and a nonzero
\(\Phi\in H^0(X,\End(E)\otimes L)\).  The family
\((E,s\Phi)\), \(s\in\CC\), has constant underlying \(K\)-class.  Its
characteristic coefficients satisfy
\[
                   a_j(s\Phi)=s^j a_j(\Phi).
                   \tag{6.8}\label{eq:higgs-scaling}
\]
Except on the nilpotent locus, the spectral curve moves with \(s\).  Even on
the nilpotent locus, the Higgs field and its filtration data can vary while
\([E]\) remains fixed.
\end{example}

\subsection{How Higgs data can enter an operator}

The presence of a Higgs field in a parametrization is not by itself a
measurement.  A definite coupling is needed.  In the ordinary case
\(L=K_X\), a natural class is obtained from the Hitchin--Simpson connection
\[
                  D_{A,\Phi}=D_A+\Phi+\Phi^{*_{h_E}},
                  \tag{6.9}\label{eq:hitchin-simpson}
\]
where \(A\) is a unitary connection and \(h_E\) is a Hermitian metric.  For a
harmonic bundle, this connection is flat.  More generally, the self-adjoint
elliptic operator
\[
        \cH_{A,\Phi}
         =D_{A,\Phi}^*D_{A,\Phi}+V
         \tag{6.10}\label{eq:higgs-hamiltonian}
\]
depends on \(\Phi\) through first- and zeroth-order terms.  Along a smooth
family \(\Phi_s\), a simple eigenvalue obeys
\[
       \dot E_n(s)
         =\bra{u_n(s)}
           \frac{\dd}{\dd s}\cH_{A,\Phi_s}
           \ket{u_n(s)}.
           \tag{6.11}\label{eq:higgs-FH}
\]
There is no reason for this matrix element to vanish identically.  Hence
\Cref{prop:higgs-collapse} becomes physically operative in any system whose
effective connection or potential contains the Higgs data.

Equation \eqref{eq:higgs-hamiltonian} gives one illustrative mechanism.  A
general hyperbolic tight-binding Hamiltonian need not be a
Hitchin--Simpson Laplacian.  For an \(L\)-twisted Higgs field, \(\Phi\) can
enter a connection only after the model supplies a contraction or bundle map
relating \(L\) to the cotangent bundle or to another physical coupling
bundle.  Other couplings can instead use the spectral curve to define hopping
channels, orbital multiplets, or symmetry-reduced momenta.  In every case the
Hamiltonian must depend on \(\Phi\), or on data reconstructed from it, before
the Higgs geometry becomes dynamical.

\subsection{Coarse moduli, stacks, and the geometry of eigenstates}

There is a further distinction between spectral invariants and eigenstates.
The quotient in \eqref{eq:unitary-character} is a coarse space.  The quotient
stack
\[
        \mathfrak R_r(S)
          =[\Hom(\Gamma_g,\UU(r))/\UU(r)]
          \tag{6.12}\label{eq:rep-stack}
\]
retains stabilizer groups and carries the tautological representation family.
Over the coarse moduli space of stable bundles of rank \(r\) and degree
\(d\), a universal vector bundle is obstructed in general.  On the
fixed-determinant moduli space, Ramanan's nonexistence theorem rules out a
universal vector bundle whenever \(\gcd(r,d)>1\); in particular, fixing the
determinant does not remove the obstruction in degree zero when \(r>1\)
\cite{Ramanan1973}.  The coarse stable space nevertheless carries a universal
projective bundle.  For curves of genus at least three, its Brauer class
generates a cyclic Brauer group of order \(\gcd(r,d)\)
\cite{BiswasSengupta2018}.  The product of \(X\) with the moduli stack carries
the tautological vector bundle, while a framing at a base point rigidifies
the scalar automorphisms.  This distinction is reflected by the central
\(\UU(1)\)-stabilizer of an irreducible unitary representation, or by its
finite central remnant after the determinant is fixed.

More concretely, \(\mathfrak R_r(S)\) is presented by the action groupoid
\[
 \mathbf{Rep}^{\mathrm U}_r(\Gamma_g)
   :=\UU(r)\ltimes\Hom(\Gamma_g,\UU(r)).
\]
Its objects are representations, and a morphism
\(u:\rho\to\rho'\) is a unitary matrix satisfying
\(\rho'=u\rho u^{-1}\).  The orbit space
\(\cR_r(S)\) remembers only the isomorphism class of an object; the groupoid
also remembers its automorphism group.  Reducible sectors therefore carry
categorical information that is invisible on the coarse character space.

The Hamiltonians in \eqref{eq:twisted-H} form an equivariant family before
quotienting.  Their characteristic polynomials and spectra descend to the
coarse character space because of \eqref{eq:rho-conjugacy}.  Eigenvectors need
not descend as an ordinary globally framed vector bundle.  They may define a
projective, twisted, or stack-theoretic object.  This matters for Berry
transport: a spectral value is a conjugation-invariant function, whereas a
Berry connection requires control of how eigenspaces are glued.

In categorical terms, \(\rho\mapsto H_\rho\) is an equivariant construction:
a morphism \(u:\rho\to\rho'\) induces the corresponding unitary intertwiner
between \(H_\rho\) and \(H_{\rho'}\).  Taking the spectrum is a
set-valued invariant and descends to the orbit space.  Assigning an eigenspace
is instead a functorial family with descent data.  Passing to coarse moduli is
thus a genuine loss of isotropy and gluing information, distinct from the
later stabilization that produces a \(K\)-class.

For physical work, one can proceed in any of three compatible local ways:
work on a framed representation space, choose local slices for the conjugation
action, or formulate the eigenbundle equivariantly.  Globally, the stack is
the cleanest receptacle.  Stack-theoretic treatments of families of Higgs
data \cite{AzamRayanQuiver2026} and diffeological treatments of nonabelian
Hodge moduli \cite{AzamRayan2026} provide natural languages for such
parameter spaces.  Stratification can also become physically active at gap
closings and in entanglement-sensitive constructions
\cite{IkedaRayan2026}.  The present paper uses only the local smooth geometry
needed for spectral projections.  Globally, however, isotropy and descent
data remain part of the eigenstate problem.

Having identified the holomorphic geometry of the sector spaces, we now pass
to the geometry of the eigenspaces themselves.  This is a different level of
structure: a character variety or Higgs moduli space parametrizes sectors,
whereas the Berry connection and quantum metric are pulled back by the
spectral projector of a chosen Hamiltonian on that parameter space.

\section{Quantum geometry and pseudoholomorphic band data}
\label{sec:quantum-geometry}

\subsection{Spectral projectors and the quantum geometric tensor}

Let \(B\) be a smooth parameter region in the abelian Jacobian, an irreducible
character variety, a local slice, or a flux-control space.  Suppose a
rank-\(q\) cluster of eigenvalues of a smooth finite Hamiltonian \(H(x)\) is
separated from the rest of the spectrum.  Its spectral projector
\[
                  P:B\longrightarrow\operatorname{Gr}(q,N)
                  \tag{7.1}\label{eq:projector-map}
\]
is smooth \cite{Kato}.  Its spectral bundle is
\(\mathscr E_P=\operatorname{Ran}P\to B\); when the selected cluster consists
of all states below a Fermi gap, we also write
\(\mathscr E_{\mathrm{occ}}\).
The ambient trivialization \(B\times\CC^N\) induces the Berry connection
and its Hermitian curvature
\[
       \nabla^{\mathrm B}s=P\,\dd s,\qquad
       F^{\mathrm B}:=\ii(\nabla^{\mathrm B})^2
          =\ii P\,\dd P\wedge\dd P\,P .
       \tag{7.1a}\label{eq:berry-connection-curvature}
\]
For a line bundle, the real two-form denoted below by \(F\) is
\(\tr F^{\mathrm B}\).  This convention gives
\((2\pi)^{-1}\int_C F=c_1(\mathscr E_P)[C]\).

For a nondegenerate band, choose a local normalized eigenvector \(u\), so that
\(P=\ket u\bra u\).  The quantum geometric tensor is
\[
    Q_{\mu\nu}
       =\bra{\partial_\mu u}(I-P)\ket{\partial_\nu u}.
       \tag{7.2}\label{eq:qgt}
\]
Its real part is the quantum metric,
\[
                g_{\mu\nu}^{\mathrm Q}=\operatorname{Re}Q_{\mu\nu},
                \tag{7.3}\label{eq:quantum-metric}
\]
and, with our convention, the Berry curvature is
\[
                F_{\mu\nu}=-2\operatorname{Im}Q_{\mu\nu}.
                \tag{7.4}\label{eq:berry-curvature}
\]
Both are gauge invariant.  In projector form,
\begin{align}
g_{\mu\nu}^{\mathrm Q}
   &=\frac12\Tr\bigl((\partial_\mu P)(\partial_\nu P)\bigr),\notag\\
F_{\mu\nu}
   &=\ii\Tr\bigl(P[\partial_\mu P,\partial_\nu P]\bigr).
   \tag{7.5}\label{eq:projector-qgt}
\end{align}
Indeed,
\[
 \Tr\bigl(P[\partial_\mu P,\partial_\nu P]\bigr)
    =2\ii\,\operatorname{Im}Q_{\mu\nu},
\]
so the second identity in \eqref{eq:projector-qgt} agrees with the convention
in \eqref{eq:berry-curvature}.
For higher-rank occupied spaces, the Grassmannian metric has the first form,
while \(P\,\dd P\wedge\dd P\,P\) gives the nonabelian Berry curvature.

Geometrically, \(g^{\mathrm Q}\) and \(F\) are the pullbacks of the
Fubini--Study metric and K\"ahler form on projective space, up to the
normalization in \eqref{eq:berry-curvature}
\cite{ProvostVallee,Berry,SimonHolonomy}.  This statement already shows why
their content exceeds \(K\)-theory.  The \(K\)-class depends on the stable
homotopy class of \(P\), whereas the pullback tensors depend on its
differential.

\subsection{What \texorpdfstring{\(K^0(B)\)}{K0(B)} does and does not determine}

The following result isolates four independent losses: dispersion, connection
holonomy, local curvature, and metric.

\begin{theorem}[Quantum geometry is not a \(K\)-class]
\label{thm:qgt-not-k}
\begin{enumerate}[label=\textup{(\alph*)},leftmargin=2.5em]
\item On any compact parameter manifold \(B\), a fixed smooth projector
  \(P:B\to\operatorname{Gr}(q,N)\) is compatible with infinitely many smooth
  gapped Hamiltonians having different dispersions.

\item On \(B=S^1\), there are projector families with the same class in
  \(K^0(B)\) but different quantum metrics.

\item On \(B=S^1\), there are projector families with the same class in
  \(K^0(B)\) but different Berry holonomies.  On \(B=S^2\), there are
  homotopic projector families with different pointwise Berry-curvature
  profiles and the same integral Chern class.  On any base that admits a
  nonconstant projector family, there are homotopic families with different
  quantum-metric profiles.
\end{enumerate}
\end{theorem}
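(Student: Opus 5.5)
All three parts will be proved by explicit constructions. For \textup{(a)}, fix \(P\) and set \(H_f(x)=f(x)P(x)+c(I-P(x))\) for a smooth real function \(f:B\to\RR\) and a constant \(c\). Since \(B\) is compact, any choice with \(\max_B f<c\) keeps a uniform gap. The spectral projector onto the lower cluster is then \(P\) for every such \(f\), while the dispersion \(f\) can be any function in the admissible range. Since there are infinitely many such \(f\), this gives infinitely many Hamiltonians. If one wants nontrivial intra-cluster dispersion, replace \(fP\) by \(P A(x) P\) with \(A(x)\) Hermitian and \(\|A\|<c\); the statement does not require this.

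For \textup{(b)} and the \(S^1\) part of \textup{(c)}, take \(N=2\), \(q=1\), and unit vectors on the Bloch sphere. For each fixed \(\vartheta\in(0,\pi)\), let \(n_\vartheta(\phi)=(\sin\vartheta\cos\phi,\sin\vartheta\sin\phi,\cos\vartheta)\) for \(\phi\in S^1\), and set \(P_\vartheta=\tfrac12(I+n_\vartheta\cdot\sigma)\). These families are homotopic to one another through \(\vartheta\), and as \(\vartheta\to0\) they tend to a constant projector. Hence all of them define the class \([\underline{\CC}]\) in \(K^0(S^1)\); this also follows because every complex line bundle on \(S^1\) is trivial. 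By \eqref{eq:projector-qgt}, \(g^{\mathrm Q}_{\phi\phi}=\tfrac14|\partial_\phi n|^2=\tfrac14\sin^2\vartheta\), which depends on \(\vartheta\). The Berry holonomy around the circle is \(\exp(-\ii\,\Omega/2)\), where \(\Omega=2\pi(1-\cos\vartheta)\) is the solid angle enclosed by the latitude. Its value \(\e^{-\ii\pi(1-\cos\vartheta)}\) also varies with \(\vartheta\). I will fix the sign of the exponent to match the paper's convention \eqref{eq:berry-curvature}, but only the fact that it varies matters.

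For \(S^2\), start from the degree-one map \(n=\mathrm{id}:S^2\to S^2\) and precompose it with a diffeomorphism \(\psi\) of \(S^2\) isotopic to the identity, for example a nonuniform polar stretch. The resulting projectors \(P\circ\psi\) are homotopic to \(P\), so \(c_1=1\) is unchanged. The curvature, however, is the pullback \(\psi^*F\), whose density relative to the round area form is the Jacobian of \(\psi\). This density is nonconstant, whereas the curvature of \(P\) itself has constant density. For the final clause, take a nonconstant family \(P\) on an arbitrary \(B\) and the dilations \(P_t=P\circ\Psi_t\), where \(\Psi_t\) is the time-\(t\) flow of a vector field \(v\) supported near a point at which \(\dd P\neq0\). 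Then \(g^{\mathrm Q}_t=\Psi_t^*g^{\mathrm Q}\), and each \(P_t\) is homotopic to \(P\).

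The step that needs care is showing that \(\Psi_t^*g^{\mathrm Q}\neq g^{\mathrm Q}\) for some \(t\); it could fail if \(v\) happened to be a Killing field of the possibly degenerate tensor \(g^{\mathrm Q}\). I would choose \(v\) as follows. Pick a point where \(\dd P\neq0\), so that \(g^{\mathrm Q}(w,w)>0\) for some tangent vector \(w\) there, and take \(v=\rho\,w\) in a chart, with \(\rho\) a bump function. Then I will argue directly that \(\int_B\tr_{h}g^{\mathrm Q}_t\,\dd\mathrm{vol}_{h}\), computed with a fixed background metric \(h\), changes to first order in \(t\) for a suitable scaling of \(\rho\). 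If that computation turns out to be messy, the fallback is a pointwise comparison: pick a point where \(\Psi_t\) moves into a region where \(|\dd P|\) is different, and use the fact that nonconstancy of \(P\) forces \(g^{\mathrm Q}\) to be nonzero on an open set with boundary.
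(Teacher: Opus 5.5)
Your constructions for \textup{(a)}, \textup{(b)}, and the \(S^1\) and \(S^2\) statements in \textup{(c)} are correct and essentially the paper's. Your latitude family is exactly the paper's Berry family \eqref{eq:berry-circle-family}, and using it for \textup{(b)} in place of the paper's real family \(u_a\) is harmless.

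The gap is in the final clause, at the step you yourself flag: you never prove \(\Psi_t^*g^{\mathrm Q}\neq g^{\mathrm Q}\). Both of your proposed routes fail on your own latitude example, where \(g^{\mathrm Q}=c\,\dd\phi^2\) with \(c=\tfrac14\sin^2\vartheta\) constant and nowhere zero.
\begin{itemize}
\item Take \(h=\dd\phi^2\). Then \(\int_{S^1}\tr_h\Psi_t^*g^{\mathrm Q}\,\dd\phi=c\int_0^{2\pi}(\Psi_t')^2\,\dd\phi\), and its derivative at \(t=0\) is \(2c\int_0^{2\pi}v'\,\dd\phi=0\) for \emph{every} \(v\). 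Rescaling \(\rho\) only rescales this zero. The same happens for the degree-one map on \(S^2\) with round \(h\), since the identity is harmonic and hence critical for this energy. Making the integral work for some cleverly chosen \(h\) would already require \(\mathcal L_vg^{\mathrm Q}\neq0\), which is the very point at issue.
\item The fallback also fails. In this example \(|\dd P|\) is constant, and \(\{g^{\mathrm Q}\neq0\}\) is all of \(S^1\). So nonconstancy of \(P\) does not give an open set with nonempty boundary.
\item Your field \(\rho w\) is a local translation, not a dilation. Where \(\rho\equiv1\), it leaves \(g^{\mathrm Q}\) unchanged whenever \(g^{\mathrm Q}\) has constant chart components.
\end{itemize}

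The missing idea is to compare at a \emph{fixed} point, using a diffeomorphism whose derivative rescales there.
\begin{itemize}
\item Work in a chart centred at \(x_0\) with \(g^{\mathrm Q}_{x_0}\neq0\). Let \(\Psi_t\) be the flow of \(\rho(x)(x-x_0)\), with \(\rho\equiv1\) near \(x_0\).
\item This flow is compactly supported, isotopic to the identity, fixes \(x_0\), and has \(\dd\Psi_t(x_0)=\e^{t}I\).
\item Hence \((\Psi_t^*g^{\mathrm Q})_{x_0}=\e^{2t}g^{\mathrm Q}_{x_0}\neq g^{\mathrm Q}_{x_0}\) for \(t\neq0\). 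This is the paper's argument: the diffeomorphism's derivative at the point does not preserve the metric tensor there.
\end{itemize}

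Alternatively, your translation field can be rescued through the Lie derivative rather than an integral. Extend \(w\) as a constant vector field in the chart and set \(f=g^{\mathrm Q}(w,w)\geq0\). Then \(\rho\,(\mathcal L_{\rho w}g^{\mathrm Q})(w,w)=\partial_w(\rho^2f)\). This cannot vanish identically, because \(\rho^2f\) is compactly supported and positive at \(x_0\). Therefore \(\mathcal L_{\rho w}g^{\mathrm Q}\neq0\), and \(\Psi_t^*g^{\mathrm Q}\neq g^{\mathrm Q}\) for small \(t\neq0\).
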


\begin{proof}
For (a), choose arbitrary smooth functions
\(\eps_-(x)<\eps_+(x)\) and set
\[
             H_{\eps_-,\eps_+}(x)
               =\eps_-(x)P(x)+\eps_+(x)(I-P(x)).
               \tag{7.6}\label{eq:fixed-projector-H}
\]
The occupied projector and its \(K\)-class are unchanged, while both energy
functions are arbitrary subject to the gap.

For (b), take \(B=S^1\) and let
\[
 u_a(k)=
 \begin{pmatrix}
   \cos(a\sin k)\\
   \sin(a\sin k)
 \end{pmatrix},
 \qquad k\in\RR/2\pi\ZZ .
 \tag{7.7}\label{eq:metric-circle-family}
\]
The line spanned by \(u_a(k)\) is a trivial line bundle for every \(a\), so
all projectors \(P_a=\ket{u_a}\bra{u_a}\) have the same \(K^0(S^1)\)-class.
Direct calculation gives
\[
                 g_a^{\mathrm Q}=a^2\cos^2(k)\,\dd k^2.
                 \tag{7.8}\label{eq:circle-metric}
\]
The metric vanishes for \(a=0\) and is nonzero for \(a\neq0\).

For Berry holonomy, take
\[
 v_\vartheta(k)=
 \begin{pmatrix}
  \cos(\vartheta/2)\\
  \e^{\ii k}\sin(\vartheta/2)
 \end{pmatrix}.
 \tag{7.9}\label{eq:berry-circle-family}
\]
Again the underlying line bundle over \(S^1\) is trivial for every
\(\vartheta\).  Its Berry phase is
\[
 \oint_{S^1}\ii\braket{v_\vartheta}{\dd v_\vartheta}
       =-2\pi\sin^2(\vartheta/2)
       \quad \bmod 2\pi,
 \tag{7.10}\label{eq:berry-circle}
\]
with the Hermitian Berry convention of
\eqref{eq:berry-connection-curvature}.  It varies continuously with
\(\vartheta\), while \(K^0(S^1)\) records only the rank.

For curvature redistribution, take the degree-one line projector
\(P:S^2\to\CC\mathrm P^1\) under the standard identification
\(S^2\cong\CC\mathrm P^1\).  Its Berry curvature is a nonzero multiple of
the area form.  Choose an orientation-preserving diffeomorphism
\(\varphi:S^2\to S^2\), isotopic to the identity, that is not
area-preserving.  Then \(P\) and \(P\circ\varphi\) are homotopic, but the
second curvature is \(\varphi^*F\neq F\); their integrals agree.

Finally, let \(P\) be any nonconstant projector family.  Its differential,
and hence its quantum metric, is nonzero at some point.  A diffeomorphism
supported in a coordinate neighbourhood of that point can be chosen
isotopic to the identity and with derivative that does not preserve the
metric tensor there.  Thus \(P\) and \(P\circ\varphi\) are homotopic, while
their quantum metrics \(g^{\mathrm Q}\) and
\(\varphi^*g^{\mathrm Q}\) differ.
\end{proof}

\begin{remark}
Part (a) shows that even complete knowledge of the occupied projector is
insufficient to recover dispersion.  Parts (b) and (c) show that the
\(K\)-class of that projector is insufficient to recover its differential
geometry.  These are distinct losses and should not be compressed into a
single word such as ``topology.''
\end{remark}

\subsection{Pointwise bounds and pseudoholomorphicity}

The tensor \(Q\) is positive semidefinite as a Hermitian form.  On an oriented
two-dimensional parameter surface with local coordinates \((x^1,x^2)\), this
implies
\[
       \det(g^{\mathrm Q})
          \geq \frac14F_{12}^2.
          \tag{7.11}\label{eq:det-bound}
\]
If the coordinates are orthonormal for a chosen conformal structure, it also
implies the trace bound
\[
       g_{11}^{\mathrm Q}+g_{22}^{\mathrm Q}
          \geq |F_{12}|.
          \tag{7.12}\label{eq:trace-bound}
\]
Roy derived the trace and determinant bounds and related their saturation to
an ideal-band condition in two-dimensional Chern-band coordinates
\cite{Roy2014}; see also
\cite{MeraOzawaKahler,MeraOzawaFlat,LiuQiangLuXie}.  The formulation below is
intrinsic to a parameter Riemann surface and makes the local
Cauchy--Riemann condition explicit.

\begin{proposition}[Pseudoholomorphic saturation]
\label{prop:pseudoholomorphic-bound}
Let \((B,J_B)\) be a Riemann surface and let
\(P:B\to\CC\mathrm P^{N-1}\) describe a nondegenerate band.  In an isothermal
coordinate \(z=x^1+\ii x^2\), the following statements hold.

\begin{enumerate}[label=\textup{(\alph*)},leftmargin=2.5em]
\item The determinant and trace inequalities
  \eqref{eq:det-bound}--\eqref{eq:trace-bound} hold pointwise.

\item If \(P\) is \(J_B\)-holomorphic or antiholomorphic, then
  \[
       g_{11}^{\mathrm Q}=g_{22}^{\mathrm Q},\qquad
       g_{12}^{\mathrm Q}=0,\qquad
       g_{11}^{\mathrm Q}+g_{22}^{\mathrm Q}=|F_{12}|.
       \tag{7.13}\label{eq:holomorphic-saturation}
  \]

\item Conversely, at an immersed point, equality in
  \eqref{eq:trace-bound} with a fixed orientation is equivalent to the
  horizontal derivative of a local state satisfying a Cauchy--Riemann
  equation
  \[
       (I-P)\partial_2u=\pm\ii(I-P)\partial_1u.
       \tag{7.14}\label{eq:horizontal-CR}
  \]
  Thus \(P\) is locally pseudo\-holomorphic or anti-pseudoholomorphic there.
\end{enumerate}
\end{proposition}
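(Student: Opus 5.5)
The whole argument will run through the horizontal derivative vectors
\(w_\mu:=(I-P)\partial_\mu u\), \(\mu=1,2\), for a local normalized section \(u\) of the band.
By \eqref{eq:qgt}, \(Q_{\mu\nu}=\langle w_\mu,w_\nu\rangle\), so \(Q\) is the Gram matrix of \(w_1,w_2\).
It is independent of the phase of \(u\), because \((I-P)\) kills the term \(\ii(\partial_\mu\phi)u\).
Using \eqref{eq:quantum-metric}--\eqref{eq:berry-curvature} we read off
\(g_{\mu\nu}^{\mathrm Q}=\operatorname{Re}\langle w_\mu,w_\nu\rangle\) and
\(F_{12}=-2\operatorname{Im}\langle w_1,w_2\rangle\).

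For part (a), positivity of the Hermitian Gram matrix gives
\(|w_1|^2|w_2|^2\geq|\langle w_1,w_2\rangle|^2=(g_{12}^{\mathrm Q})^2+\tfrac14F_{12}^2\).
This is exactly \eqref{eq:det-bound}.

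For the trace bound I will not pass through AM--GM on the determinant. Instead I use the sharper identity
\[
 |w_2\mp\ii w_1|^2=|w_1|^2+|w_2|^2\mp 2\operatorname{Re}\bigl(\ii\langle w_2,w_1\rangle\bigr),
\]
where the inner product is linear in the second slot. The last term equals \(\pm F_{12}\) up to a sign fixed by the convention \eqref{eq:berry-curvature}. So
\(|w_2\mp\ii w_1|^2=g_{11}^{\mathrm Q}+g_{22}^{\mathrm Q}\mp F_{12}\) for a suitable choice of sign.
Nonnegativity for both signs gives \eqref{eq:trace-bound}, since the coordinates are orthonormal for the conformal structure and so the trace is the intrinsic one.

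Part (c) will then fall out of the same identity. Equality in \eqref{eq:trace-bound} with a fixed sign of \(F_{12}\) is equivalent to the vanishing of one of the vectors \(w_2\mp\ii w_1\), which is precisely \eqref{eq:horizontal-CR}.
To interpret this geometrically, I identify \(T_{[u]}\CC\mathrm P^{N-1}\cong\Hom(\ell,\ell^\perp)\), with \(\ell\) the line spanned by \(u\). Under this identification \(\dd P(\partial_\mu)\) corresponds to \(w_\mu\), and the standard complex structure is multiplication by \(\ii\) on \(\ell^\perp\).
Since \(J_B\partial_1=\partial_2\) in an isothermal coordinate, \eqref{eq:horizontal-CR} reads \(\dd P\circ J_B=\pm J_{\CC\mathrm P}\circ\dd P\) at the point. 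This is pointwise (anti-)pseudoholomorphy.
The immersion hypothesis is needed only so that the sign is well defined. At a point with \(w_1=w_2=0\), both signs hold trivially and \(F_{12}=0\).

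For part (b), I run (c) backwards. If \(P\) is holomorphic, locally \(P=[f]\) with \(f\) a holomorphic vector-valued map, and I take \(u=f/|f|\). Then \(\partial_2 f=\ii\partial_1 f\), and the correction terms coming from the normalization and the phase are multiples of \(u\), so they are killed by \(I-P\). Hence \(w_2=\ii w_1\), or \(w_2=-\ii w_1\) in the antiholomorphic case.
Substituting gives \(|w_2|=|w_1|\), so \(g_{11}^{\mathrm Q}=g_{22}^{\mathrm Q}\). It also gives \(\langle w_1,w_2\rangle=\pm\ii|w_1|^2\), which is purely imaginary, so \(g_{12}^{\mathrm Q}=0\) and \(|F_{12}|=2|w_1|^2\). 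These are the three equations of \eqref{eq:holomorphic-saturation}.

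The main obstacle is bookkeeping rather than anything conceptual. I must fix the sign relating \(\operatorname{Im}\langle w_1,w_2\rangle\), the orientation, and the choice \(\pm\ii\) so that "holomorphic" pairs with a definite sign of \(F_{12}\), consistently with \eqref{eq:berry-curvature} and the projector formula \eqref{eq:projector-qgt}.
I would settle this first on the test case \(P:\CC\to\CC\mathrm P^1\), \(z\mapsto[1:z]\), and then carry the sign through all three parts.
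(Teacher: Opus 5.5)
Your proposal is correct. It follows the paper for the determinant bound (positivity of the Gram matrix of \(w_\mu=(I-P)\partial_\mu u\)) and for part (b) (holomorphy forces \(w_2=\pm\ii w_1\)), but it reaches the trace bound and its equality case by a different route.

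The paper first derives \eqref{eq:trace-bound} from \eqref{eq:det-bound} using the arithmetic--geometric mean inequality. It then characterizes equality by tracking each link of that chain: equal norms, \(g_{12}^{\mathrm Q}=0\), and Cauchy--Schwarz proportionality with a factor forced to be \(\pm\ii\).

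You instead use the sum-of-squares identity \(|w_2\mp\ii w_1|^2=g_{11}^{\mathrm Q}+g_{22}^{\mathrm Q}\pm F_{12}\), which is the pointwise energy--area identity for maps into a K\"ahler manifold. This buys you several things:
\begin{enumerate}
\item the inequality and its equality case come in a single step;
\item it shows the immersion hypothesis serves only to fix the sign;
\item it identifies the defect \(g_{11}^{\mathrm Q}+g_{22}^{\mathrm Q}-|F_{12}|\) as, up to normalization, the squared norm of the \(\bar\partial\)- or \(\partial\)-part of \(\dd P\). That is exactly the link to \(\bar\partial_JP\) in \eqref{eq:dbar-projector} which the paper only asserts around \eqref{eq:complex-plane-bound}.
\end{enumerate}
The paper's chain is more economical in one respect: a single Cauchy--Schwarz estimate produces both bounds, whereas you use two separate arguments. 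Your explicit holomorphic lift in (b) and the \(\Hom(\ell,\ell^\perp)\) identification in (c) spell out what the paper states in one line.

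On the sign you flag: with \eqref{eq:berry-curvature} one gets \(|w_2-\ii w_1|^2=g_{11}^{\mathrm Q}+g_{22}^{\mathrm Q}+F_{12}\). The test map \(z\mapsto[1:z]\) has \(w_2=\ii w_1\) and \(F_{12}=-2|w_1|^2<0\), consistent with its spectral bundle being \(\cO(-1)\). So holomorphic bands saturate \(g_{11}^{\mathrm Q}+g_{22}^{\mathrm Q}=-F_{12}\).
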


\begin{proof}
Put \(D_\mu u=(I-P)\partial_\mu u\).  Then
\[
   g_{\mu\nu}^{\mathrm Q}
       =\operatorname{Re}\braket{D_\mu u}{D_\nu u},
   \qquad
   F_{12}=-2\operatorname{Im}\braket{D_1u}{D_2u}.
\]
Cauchy--Schwarz gives
\[
 |\braket{D_1u}{D_2u}|^2
   \leq\|D_1u\|^2\|D_2u\|^2.
\]
Separating real and imaginary parts yields
\eqref{eq:det-bound}.  The arithmetic--geometric mean inequality then gives
\eqref{eq:trace-bound}.

If \(P\) is holomorphic or antiholomorphic, its horizontal differential is
complex linear or antilinear.  This is exactly
\eqref{eq:horizontal-CR}, which implies
\eqref{eq:holomorphic-saturation}.  Conversely, equality in the two
inequalities used above forces equal norms and equality in Cauchy--Schwarz
with purely imaginary proportionality factor of modulus one.  This gives
\eqref{eq:horizontal-CR}.
\end{proof}

\Cref{prop:pseudoholomorphic-bound} gives a concrete meaning to holomorphy in
band physics.  It is not merely a choice of complex notation.  It is a
pointwise constraint relating two measurable tensors.  \(K\)-theory fixes the
integral
\[
                \frac1{2\pi}\int_C F
                =\langle c_1(\mathscr E_P),[C]\rangle
                \tag{7.15}\label{eq:chern-integral}
\]
on a closed two-cycle \(C\), but it does not fix
\eqref{eq:holomorphic-saturation}, the distribution of \(F\), or
\(g^{\mathrm Q}\).

\subsection{Higher-genus momentum and complex directions}

The abelian hyperbolic Brillouin zone has real dimension \(2g\), rather than
two.  A complex structure on \(\Jac(X)\) organizes its tangent space into
\(g\) complex directions.  For a band projector
\[
             P:\Jac(X)\longrightarrow\operatorname{Gr}(q,N),
\]
one can ask whether \(P\) is holomorphic with respect to this structure, or
whether its restriction to a complex curve or a real two-plane is
pseudoholomorphic.

More precisely, identify
\[
 T_\chi\Jac(X)\cong H^1(S;\RR)
\]
by translation and equip it with the Hodge complex structure
\(\mathbb J_J\) of \eqref{eq:hodge-complex}.  Let
\(\mathbb J_{\mathrm{Gr}}\) denote the standard complex structure on the
Grassmannian.  The antiholomorphic part of the projector differential is
\[
 \bar\partial_JP
   :=\frac12\left(
       \dd P+
       \mathbb J_{\mathrm{Gr}}\circ\dd P\circ\mathbb J_J
     \right).
 \tag{7.16}\label{eq:dbar-projector}
\]
This expression is intrinsic: it uses the projector map rather than a choice
of eigenvector gauge.  The family is holomorphic exactly when
\(\bar\partial_JP=0\).

For a nondegenerate band, let \(v\) be a unit tangent vector for the Hodge
metric.  On the \(\mathbb J_J\)-invariant plane spanned by
\(v\) and \(\mathbb J_Jv\), \eqref{eq:trace-bound} becomes
\[
 g^{\mathrm Q}(v,v)
   +g^{\mathrm Q}(\mathbb J_Jv,\mathbb J_Jv)
       \geq
       \left|F(v,\mathbb J_Jv)\right|.
 \tag{7.17}\label{eq:complex-plane-bound}
\]
Equality with a consistent orientation is equivalent, at immersed points,
to the restriction of \(\bar\partial_JP\) vanishing or to its
antiholomorphic analogue vanishing.  Scanning a basis of
\(\mathbb J_J\)-invariant planes therefore tests more than isolated
two-dimensional slices: it measures the failure of the full
\(2g\)-dimensional family to respect the Jacobian complex structure.  For
higher-rank projectors the same statement is expressed using the
Grassmannian metric and traced Berry curvature.

The question is geometric and testable:

\begin{enumerate}[label=\textup{(\roman*)},leftmargin=2.6em]
\item use the period matrix to form a Hodge-complex pair of flux directions
  \(v,\mathbb J_Jv\) (an integral symplectic pair need not itself be such a
  pair);
\item measure transition rates or fidelity susceptibility to obtain the
  quantum metric;
\item measure Berry curvature through adiabatic transport or transverse
  response;
\item compare the two sides of \eqref{eq:trace-bound}.
\end{enumerate}

The period matrix matters here because it selects the complex combinations of
flux directions.  A different point of Teichm\"uller space changes those
combinations.  This is a dynamically active use of the Jacobian's complex
structure when the projector responds to it, in contrast with the fixed
Fourier polynomial of \Cref{prop:kinematical-holomorphy}.

\subsection{Physical consequences of the lost geometry}

The standard response formulae transfer directly to any smooth flux or
representation parameter space, with suitable interpretation of the
controls.  Several consequences are particularly relevant.

\paragraph{Dispersion.}
The group velocity and effective-mass tensor use first and second derivatives
of \(E_n\).  These can vary in a \(K\)-fixed family even when the occupied
projector is held fixed, by \eqref{eq:fixed-projector-H}.

\paragraph{Adiabatic and Hall-type response.}
Berry curvature enters semiclassical anomalous velocity and Kubo response
\cite{XiaoChangNiu}.  A Chern number determines a quantized response in a
fully gapped, fully filled setting, but partially filled response and local
dynamics depend on the curvature profile.  Hyperbolic Chern models already
exhibit linear-response structures associated with their
higher-dimensional reciprocal space \cite{SunEtAl2024}.

\paragraph{Localization.}
The Brillouin-zone integral of the quantum metric contributes to the
gauge-invariant part of the Wannier spread
\cite{MarzariVanderbilt,MarzariReview}.  The topological class may obstruct
exponential localization, but even in an unobstructed class it does not
specify the localization length or the optimal spread.

\paragraph{Spectroscopic access.}
Periodic modulation of parameters measures components of the quantum metric
through excitation rates \cite{OzawaGoldman}.  In a hyperbolic circuit, the
natural parameters are flux twists and tunable couplings, so the protocol
can distinguish two devices with the same quotient topology and band
\(K\)-class.

\paragraph{Interacting flat bands.}
Geometric contributions to superfluid weight survive when ordinary
dispersion is suppressed \cite{PeottaTorma}.  This is an especially clear
case in which stable topology alone does not determine whether a nearly flat
band is physically useful.  Its quantum metric and holomorphic quality
matter.

The need to go beyond \(K\)-theory is therefore not an abstract gesture.  It is
forced by quantities that enter derivatives, transition rates,
localization bounds, and interaction scales.  The next step is to place these
losses beside observables that topology does determine.  That comparison is
the starting point of the Hall and gapless analysis.

\section{Hall response, Fermi geometry, and hyperbolic semimetals}
\label{sec:fermi-hall}

\subsection{Where topology is enough: quantized Hall response}

The central thesis is clearest when placed beside a case in which topology
does determine an observable.  Let \(C\subset B\) be a closed oriented
two-cycle in a gapped parameter space, and let \(\mathscr E_{\mathrm{occ}}\)
be the occupied bundle.  For a nondegenerate occupied band, our curvature
convention gives
\[
       c_1(\mathscr E_{\mathrm{occ}})[C]
          =\frac{1}{2\pi}\int_C F\in\ZZ .
          \tag{8.1}\label{eq:chern-pairing-hall}
\]
For several occupied bands one uses the trace of the nonabelian Berry
curvature.  If \(C\) is an ordinary two-dimensional Brillouin torus, the
Thouless--Kohmoto--Nightingale--den Nijs (TKNN) formula identifies this
integer with the Hall conductance,
\[
       \sigma_{12}
          =\frac{e^2}{h}\,
             c_1(\mathscr E_{\mathrm{occ}})[C],
          \tag{8.2}\label{eq:tknn-hall}
\]
for a fully filled, gapped, noninteracting system
\cite{TKNN}.  If \(C\) is instead a two-torus of boundary twists and the
many-body ground state remains separated by a gap, the
Niu--Thouless--Wu construction gives the corresponding many-body statement
\cite{NiuThoulessWu}.

There is a parallel noncommutative formulation on the hyperbolic plane.  A
Fermi projection in a twisted Fuchsian-group algebra can be paired with the
hyperbolic Connes--Kubo cyclic two-cocycle.  For a compact good hyperbolic
orbifold \(\Sigma_{g,\bm\nu}\), with cone-point orders
\(\bm\nu=(\nu_1,\ldots,\nu_n)\), Marcolli and Mathai identify this cocycle
with the area cocycle in cyclic cohomology and show that its normalized
pairing takes values in
\[
             \phi\ZZ,\qquad
             \phi=-\chi_{\mathrm{orb}}(\Sigma_{g,\bm\nu})
                  =2g-2+n-\sum_{j=1}^n\frac1{\nu_j}.
\]
Thus the allowed Hall plateaux in spectral gaps are integral multiples of an
orbifold Euler-characteristic scale
\cite{MarcolliMathai2001,MarcolliMathai2006}.  For the smooth torsion-free
surfaces considered in most of this article, \(n=0\) and
\(\phi=2g-2\) is integral.  The genuinely fractional step size in that
construction comes from orbifold isotropy; it should not be attributed to
negative curvature alone.

Two complementary developments make the geometric input to this positive
topological result especially explicit.  Mathai and Wilkin study a family of
Landau Hamiltonians on a compact complex two-orbifold parametrized by its
Jacobian.  In the large-field regime, the associated stable holomorphic
spectral orbibundles have computable fractional degree, yielding fractional
conductance and charge transport through an orbifold Nahm transform
\cite{MathaiWilkin2019}.  Mathai and Thiang use noncommutative geometry and
T-duality to formulate topological phases on the hyperbolic plane and propose
a fractional bulk--boundary correspondence guided by hyperbolic boundary
geometry \cite{MathaiThiang2019}.  In both cases the quantized output is
topological, but its construction retains geometric structures---Jacobian
families, orbifold holomorphy, or boundary geometry---that a bare numerical
\(K\)-pairing does not display.

Marcolli and Seipp subsequently extended this construction from a
single-particle orbifold to symmetric products describing indistinguishable
particles.  To avoid collision with the number \(n\) of cone points above,
write \(N_{\mathrm p}\) for the particle number.  The configuration orbifold
is
\(\operatorname{Sym}^{N_{\mathrm p}}(\Sigma_{g,\bm\nu})\), and projective
covariance is encoded by the magnetic multiplier \(\sigma_{N_{\mathrm p}}\)
on the wreath product
\[
 \Gamma_{g,\bm\nu}^{\,N_{\mathrm p}}\rtimes
       \mathfrak S_{N_{\mathrm p}},
\]
or equivalently by its twisted reduced group \(C^*\)-algebra.  Their higher
twisted index theorem gives the range
\[
 \chi_{\mathrm{orb}}\!\left(
       \operatorname{Sym}^{N_{\mathrm p}}(\Sigma_{g,\bm\nu})\right)\ZZ
   =
 \frac{\chi_{\mathrm{orb}}(\Sigma_{g,\bm\nu})^{N_{\mathrm p}}}
      {N_{\mathrm p}!}\,\ZZ
\]
for the corresponding Hall pairing, where \(\chi_{\mathrm{orb}}\) denotes
the Satake orbifold Euler characteristic.  The same framework organizes the
orbifold \(K\)-groups across particle number into a Fock-space-type graded
object and develops composite-fermion and finite-dimensional anyon
representations through orbifold braid groups and Seifert data
\cite{MarcolliSeipp2017}.  Its proposed route to Laughlin-type wave functions
is explicitly exploratory.  This many-particle extension therefore
strengthens the positive case for index pairings while also showing that the
quantized pairing does not, by itself, specify the wave function or the full
dynamics.

This language is particularly natural on a compact hyperbolic surface.
There are \(2g\) independent abelian twist angles.  Choosing two of them, or
more invariantly a two-cycle in the Jacobian, defines a Hall-type adiabatic
response associated with those flux controls.  The resulting integer is a
pairing of the occupied \(K\)-class with \([C]\).  In this precise setting,
\(K\)-theory \emph{is} enough.

Higher-dimensional reciprocal spaces permit higher Chern pairings.  Let
\(Y\subset B\) be a four-cycle on which
\(c_1(\mathscr E_{\mathrm{occ}})\) vanishes.  In the standard physics
normalization, the degree-four Chern-character pairing is
\[
       \mathcal C_2[Y]
         =\frac{1}{8\pi^2}
            \int_Y\tr(F^{\mathrm B}\wedge F^{\mathrm B})\in\ZZ .
         \tag{8.3}\label{eq:second-chern-response}
\]
With the Hermitian curvature convention
\eqref{eq:berry-connection-curvature}, this is exactly
\(\langle\ch_2(\mathscr E_{\mathrm{occ}}),[Y]\rangle\).  Under the stated
hypothesis \(c_1=0\), it equals
\(-\langle c_2(\mathscr E_{\mathrm{occ}}),[Y]\rangle\).  Without that
hypothesis, the Chern--Weil representative of \(c_2\) also contains
\(\tr(F^{\mathrm B})\wedge\tr(F^{\mathrm B})\), so
\eqref{eq:second-chern-response} remains the degree-four Chern-character
pairing rather than the second Chern number.
Hyperbolic band models with a four-dimensional abelian reciprocal space have
realized a nontrivial degree-four Chern structure
\cite{ZhangEtAlSecondChern,TummuruEtAl2024}.  Its nonlinear-response
interpretation requires four independent control directions; the fact that
the real-space network is two-dimensional does not remove that experimental
requirement.

Quantization relies on a filled subspace and a gap.  At partial filling, the
intrinsic Berry contribution takes the schematic form
\[
 \sigma_{ab}^{\mathrm{int}}(\mu_{\mathrm F},T)
   =-\frac{e^2}{\hbar}
      \sum_n\int_B
       f_T(E_n(x)-\mu_{\mathrm F})\,F_{ab}^{(n)}(x)\,
       \dd\mu_B(x),
 \tag{8.4}\label{eq:partial-hall}
\]
where \(a,b\) label response directions, \(\mu_{\mathrm F}\) is the chemical
potential, \(f_T\) is the Fermi--Dirac function at temperature \(T\), and
\(\dd\mu_B\) is a chosen normalized measure on the momentum or control family.
For a finite-rank hyperbolic character space this is a sector-sampling
measure, not the Plancherel measure of the infinite lattice.  The exact
thermodynamic bulk response must instead be formulated using the regular
group trace and the appropriate current cocycle, or obtained through a
specified convergent finite-cover or large-rank scheme.  Equation
\eqref{eq:partial-hall} nevertheless isolates the relevant distinction for a
controlled sector family: the occupation function depends on dispersion, and
the integrand depends on the local curvature profile.

\begin{proposition}[The Hall dichotomy]
\label{prop:hall-dichotomy}
\begin{enumerate}[label=\textup{(\alph*)},leftmargin=2.5em]
\item For a fully occupied gapped bundle over a closed two-cycle, the Chern
  pairing \eqref{eq:chern-pairing-hall} is determined by the occupied
  \(K\)-class.  Whenever the physical hypotheses of the TKNN or
  Niu--Thouless--Wu construction identify that pairing with a Hall response,
  the quantized response is therefore determined as well.
\item The same \(K\)-class does not determine the Berry-curvature profile or
  the partially filled response \eqref{eq:partial-hall}.
\end{enumerate}
\end{proposition}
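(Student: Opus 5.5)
For part (a), the plan is to use Chern--Weil theory. The occupied bundle \(\mathscr E_{\mathrm{occ}}\to B\) is defined by the gapped projector \(P\), and its Berry curvature \eqref{eq:berry-connection-curvature} represents \(2\pi c_1(\mathscr E_{\mathrm{occ}})\) in de Rham cohomology; for higher rank one uses \(\tr F^{\mathrm B}\). The integral \eqref{eq:chern-pairing-hall} over a closed oriented two-cycle \(C\) is therefore the pairing \(\langle c_1(\mathscr E_{\mathrm{occ}}),[C]\rangle\). Here \(c_1\) factors through \(K^0(B)\), via the determinant line, or equivalently via \(\ch_1\) of the stable class. Consequently two occupied bundles with the same \(K\)-class give the same integer. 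Nothing more is needed beyond the standard fact that \(c_1\) is additive under direct sum with trivial bundles and is a homotopy/stable invariant. The identification with a Hall response is then exactly the TKNN or Niu--Thouless--Wu hypothesis, which I will import verbatim and not reprove.

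For part (b), I will exhibit explicit \(K\)-equivalent witnesses, reusing \Cref{thm:qgt-not-k}.

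\textbf{Curvature profile.} Take the degree-one projector \(P:S^2\to\CC\mathrm P^1\) and \(P\circ\varphi\) with \(\varphi\) isotopic to the identity but not area-preserving, as in the proof of \Cref{thm:qgt-not-k}(c). The two families are homotopic, hence \(K\)-equivalent, but \(\varphi^*F\neq F\) pointwise. To place this inside a hyperbolic parameter space, I can restrict attention to any two-cycle \(C\) in the Jacobian carrying such a family, or simply regard \(C\) abstractly as the control family in \eqref{eq:partial-hall}.

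\textbf{Partial-filling response.} Using \eqref{eq:fixed-projector-H}, keep a fixed projector family \(P\) and choose the lower energy \(\eps_-\) so that at \(T=0\) the Fermi level \(\mu_{\mathrm F}\) cuts the band. The response then becomes \(-\tfrac{e^2}{\hbar}\int_{\{\eps_-<\mu_{\mathrm F}\}}F\,\dd\mu_B\). The \(K\)-class is independent of \(\eps_-\). Varying the region \(\{\eps_-<\mu_{\mathrm F}\}\), for instance by shrinking a cap around a point where \(F\neq0\), changes the integral continuously from \(0\) toward the full Chern value. This gives two \(K\)-equivalent configurations with different responses. For \(T>0\), the same argument goes through because \(f_T\) is smooth and monotone: varying \(\eps_-\) by a constant shift \(c\) gives \(\partial_c\sigma=-\tfrac{e^2}{\hbar}\int f_T'(\cdot)F\,\dd\mu_B\), which is nonzero if \(F\) has a sign near the relevant level set. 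Alternatively, one can hold \(\eps_-\) fixed and compose with \(\varphi\) so that the curvature is redistributed into or out of the occupied region.

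The main obstacle is bookkeeping rather than depth. I need to make sure the witness keeps the gap open, so that \(K\)-class and occupied bundle are defined, while the chemical potential sits inside the band of interest, not in the gap. I will resolve this by using the two-band model \eqref{eq:fixed-projector-H}: the \(K\)-class in question is that of the lower spectral projector, which remains gapped from the upper band throughout, while partial filling refers to occupation within the lower band. I will state this interpretation explicitly, with the measure \(\dd\mu_B\) taken as the normalized area measure on \(C\).
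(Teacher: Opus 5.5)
Your proposal is correct and follows the paper's proof. Part (a) is the Chern-character pairing of the occupied \(K\)-class with \([C]\). Part (b) uses precomposition by a non-area-preserving diffeomorphism isotopic to the identity, together with the fixed-projector Hamiltonians \eqref{eq:fixed-projector-H}; these change the Fermi--Dirac weights while holding the projector, its gap, and its \(K\)-class fixed.

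The paper differs only in localizing the energy change to a small region where the curvature has fixed sign, so its argument works for any projector whose curvature is nonzero on an open set. Your constant-shift derivative \(\partial_c\sigma\) instead needs \(F\) to have a definite sign wherever \(|f_T'|\) is non-negligible, and strictly everywhere, since \(f_T'<0\) on all of \(\RR\) at \(T>0\). Your degree-one \(S^2\) witness supplies that sign globally, so the argument goes through.
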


\begin{proof}
Part \textup{(a)} is the Chern-character pairing of the \(K\)-class with
\([C]\).  For part \textup{(b)}, precompose a projector with a diffeomorphism
isotopic to the identity, as in the proof of
\Cref{thm:qgt-not-k}.  This preserves the \(K\)-class and total Chern number
but can redistribute \(F\).  For the partially filled response, fix a band
projector whose relevant curvature component is nonzero on an open set.
The construction \eqref{eq:fixed-projector-H} allows its isolated-band energy
to be varied there while preserving both the projector and its gap to the
complementary band.  Choose two energies for which the corresponding
Fermi--Dirac weights at some \(T>0\) differ on a smaller region where the
curvature has fixed sign.  The weighted integrals in
\eqref{eq:partial-hall} are then different, although the projector and its
\(K\)-class are unchanged.
\end{proof}

This distinction is also visible in hyperbolic Chern-insulator response:
topological pairings constrain quantized limits, while the full linear
response retains the geometry of the higher-dimensional reciprocal space
\cite{SunEtAl2024}.  The correct message is not that Hall conductance escapes
topology, but that topology determines it only under the hypotheses that make
it quantized.

\subsection{Fermi surfaces and van Hove data}

Let \(B\) be a compact smooth \(D\)-dimensional momentum or parameter
manifold with a Riemannian metric \(G\), and let
\(w\,\dd\mathrm{vol}_G\) be the measure used to sample sectors, with
\(w\) a smooth nonnegative density.  For a smooth band
\(E_n:B\to\RR\), define the generalized Fermi surface at chemical potential
\(\mu_{\mathrm F}\) by
\[
          \mathcal F_{n,\mu_{\mathrm F}}
             =E_n^{-1}(\mu_{\mathrm F}).
                    \tag{8.5}\label{eq:fermi-level-set}
\]
If \(\mu_{\mathrm F}\) is a regular value, this is a smooth hypersurface of
dimension \(D-1\).  In the abelian hyperbolic theory \(D=2g\), so a regular
Fermi surface has dimension \(2g-1\).  The word ``surface'' here refers to
the generalized reciprocal space.  The formula below is a sector-ensemble
coarea formula and describes, for example, an externally scanned flux family.
For a nonamenable surface group it is not a decomposition of the exact
thermodynamic density of states over a fixed finite-dimensional character
space.  The relation to the real-space bulk measure requires the coherent
large-rank or finite-cover limits described in
\Cref{sec:finite-boundaries}.

\begin{proposition}[Fermi geometry and the density of states]
\label{prop:fermi-coarea}
Suppose \(\mu_{\mathrm F}\) is a regular value of each band meeting that
energy.  The integrated state count
\[
       \mathcal N(E)
          =\sum_n\int_B
              \mathbf1_{\{E_n(x)\leq E\}}\,
              w(x)\,\dd\mathrm{vol}_G(x)
              \tag{8.6}\label{eq:integrated-state-count}
\]
is differentiable at \(E=\mu_{\mathrm F}\), with
\[
       \nu(\mu_{\mathrm F}):=\mathcal N'(\mu_{\mathrm F})
          =\sum_n\int_{\mathcal F_{n,\mu_{\mathrm F}}}
             \frac{w(x)}
                  {|\nabla_G E_n(x)|_G}\,
             \dd A_G(x).
             \tag{8.7}\label{eq:fermi-coarea}
\]
Here \(\dd A_G\) is the hypersurface measure induced by \(G\).
Neither \(\mathcal F_{n,\mu_{\mathrm F}}\), its induced geometry, nor
\(\nu(\mu_{\mathrm F})\) is determined by the \(K\)-class of the band
projector.  At a metallic Fermi level, a global gapped occupied-bundle class
need not exist at all.
\end{proposition}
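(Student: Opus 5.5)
The plan is to prove the differentiability statement with the smooth coarea formula, and then obtain the nonfactorization statements from the fixed-projector construction \eqref{eq:fixed-projector-H} of \Cref{thm:qgt-not-k}(a).

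\textbf{Differentiability.} Since there are finitely many bands, it suffices to treat a single band \(E_n\) and sum at the end. Because \(B\) is compact and \(\mu_{\mathrm F}\) is a regular value, \(|\nabla_G E_n|_G\) is bounded below by some \(c>0\) on a neighbourhood \(U=E_n^{-1}((\mu_{\mathrm F}-\delta,\mu_{\mathrm F}+\delta))\). On \(U\) the level sets \(E_n^{-1}(E)\) form a smooth family of compact hypersurfaces, by the implicit function theorem and compactness. Moreover, the flow of \(\nabla_G E_n/|\nabla_G E_n|_G^2\) identifies \(U\) with \(\mathcal F_{n,\mu_{\mathrm F}}\times(\mu_{\mathrm F}-\delta,\mu_{\mathrm F}+\delta)\).

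For \(|E-\mu_{\mathrm F}|<\delta\), the coarea formula gives
\[
\int_{\{E_n\le E\}} w\,\dd\mathrm{vol}_G
=\int_{\{E_n\le \mu_{\mathrm F}-\delta\}}w\,\dd\mathrm{vol}_G
+\int_{\mu_{\mathrm F}-\delta}^{E}\Bigl(\int_{E_n^{-1}(s)}\frac{w}{|\nabla_GE_n|_G}\,\dd A_G\Bigr)\dd s .
\]
The inner integral is continuous in \(s\), because under the flow trivialization it becomes an integral over the fixed hypersurface \(\mathcal F_{n,\mu_{\mathrm F}}\) of a continuous integrand. The fundamental theorem of calculus then yields differentiability at \(E=\mu_{\mathrm F}\) together with \eqref{eq:fermi-coarea}.

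The main obstacle I expect is this continuity of the level-set integral, uniformly in \(s\). That is exactly where regularity and compactness are used. At a critical value (van Hove), \(\nu\) can diverge.

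\textbf{Nonfactorization.} Fix a band projector \(P\) with a gap, and apply \eqref{eq:fixed-projector-H} with two energy functions \(\eps_-\) and \(\eps_-'\) for the band. The projector, and hence its class in \(K^0(B)\), is unchanged, so any invariant that factors through the \(K\)-class must agree on the two models.

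\emph{Different Fermi surfaces.} Take \(\eps_-'=\eps_-+\phi\), where \(\phi\) is a small bump supported near a point of \(\mathcal F_{n,\mu_{\mathrm F}}\). Keep \(\phi\) small enough that \(\eps_-'<\eps_+\) and \(\mu_{\mathrm F}\) remains a regular value. With \(\phi\) chosen nonzero on the Fermi surface, the level set is displaced, so \(\mathcal F\) itself changes.

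\emph{Different density of states.} Take \(\eps_-'=\mu_{\mathrm F}+\lambda(\eps_--\mu_{\mathrm F})\) with \(\lambda>0\), shrunk if necessary so the gap persists. The level set at \(\mu_{\mathrm F}\) is then identical, but the gradient is multiplied by \(\lambda\). By \eqref{eq:fermi-coarea}, \(\nu(\mu_{\mathrm F})\) is divided by \(\lambda\), since \(w>0\) somewhere on \(\mathcal F\) is assumed for a nontrivial statement.

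\emph{Different induced geometry.} The induced geometry changes as well, for example through the induced area. Alternatively one can precompose with a non-isometric diffeomorphism isotopic to the identity, exactly as in \Cref{thm:qgt-not-k}(c).

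\textbf{The final sentence.} If \(\mu_{\mathrm F}\) meets a band, the occupied set \(\{x: E_n(x)\le\mu_{\mathrm F}\}\) changes rank across \(\mathcal F\). The projector onto states below \(\mu_{\mathrm F}\) therefore has nonconstant rank on \(B\) and does not define a vector bundle over \(B\). Consequently no global gapped occupied class exists, and the example is immediate.
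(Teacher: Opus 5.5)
Your proposal is correct and takes essentially the paper's approach. The paper obtains \eqref{eq:fermi-coarea} directly from the coarea formula applied to each band, and it proves nonfactorization by holding the projector fixed while varying the energy functions as in \eqref{eq:fixed-projector-H}. Your flow-trivialization argument for differentiability, the bump and rescaling witnesses, and the rank argument for the final sentence fill in details the paper leaves implicit. One small fix: in the rescaling witness, take $\lambda$ close to $1$, or rescale $\eps_+$ about $\mu_{\mathrm F}$ as well, so that the gap is guaranteed to survive.
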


\begin{proof}
Equation \eqref{eq:fermi-coarea} is the coarea formula applied to each
\(E_n\).  To prove the final assertion, hold the projector fixed and vary
the energy functions as in \eqref{eq:fixed-projector-H}.  The \(K\)-class
does not change, while the level sets, their gradients, and the right-hand
side of \eqref{eq:fermi-coarea} can all change.
\end{proof}

At a critical value of \(E_n\), the denominator in
\eqref{eq:fermi-coarea} vanishes and the regular-level formula breaks down.
The resulting van Hove behaviour is governed locally by the Hessian and
higher jets of \(E_n\).  These are exactly the spectral derivatives retained
by \Cref{prop:position-flux} and \eqref{eq:second-order-response}, and exactly
the kind of data forgotten by a \(K\)-class.  Thus Fermi-surface shape,
Fermi velocity, effective mass, and van Hove singularities provide familiar
condensed-matter manifestations of the paper's general nonfactorization
principle.

\subsection{Dirac points, Weyl points, and nodal manifolds}

A Fermi point or nodal manifold is not described by a globally gapped
occupied bundle.  Topology is instead assigned on a small sphere transverse
to the degeneracy, or through a relative \(K\)-class.  The following local
model makes the division between charge and geometry precise.

\begin{proposition}[Clifford nodes: charge versus geometry]
\label{prop:clifford-nodes}
Let \(B\) be a smooth \(D\)-manifold and let
\[
        H(x)=d_0(x)I+\sum_{a=1}^{2m+1}d_a(x)\Gamma_a,
        \qquad
        \Gamma_a\Gamma_b+\Gamma_b\Gamma_a=2\delta_{ab}I,
        \tag{8.8}\label{eq:clifford-hamiltonian}
\]
be a smooth local Hamiltonian.  Put
\(\bm d=(d_1,\ldots,d_{2m+1})\) and
\(\bm\Gamma=(\Gamma_1,\ldots,\Gamma_{2m+1})\).

\begin{enumerate}[label=\textup{(\alph*)},leftmargin=2.5em]
\item If \(0\) is a regular value of \(\bm d\), the degeneracy locus
  \[
                         \mathcal Z=\bm d^{-1}(0)
                         \tag{8.9}\label{eq:nodal-locus}
  \]
  is a smooth submanifold of codimension \(2m+1\).

\item On a small normal sphere
  \(S^{2m}\) linking a component of \(\mathcal Z\), the normalized map
  \(\widehat{\bm d}=\bm d/|\bm d|\) has a degree.  Up to the orientation
  convention for the Clifford matrices, this degree is the normalized top
  Chern-character pairing of the lower-energy eigenbundle---the quantity
  conventionally called the \(m\)-th Chern number in band physics.  It is
  the local topological charge of the node.

\item This charge does not determine the embedding
  \(\mathcal Z\hookrightarrow B\), the nodal energy \(d_0|_{\mathcal Z}\),
  the tilt \(\dd d_0\), the transverse velocity map
  \(\dd\bm d|_{N\mathcal Z}\), or higher-order dispersion.
\end{enumerate}
\end{proposition}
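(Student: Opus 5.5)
Part (a) follows from the regular value theorem.  Since \(0\) is a regular value of \(\bm d:B\to\RR^{2m+1}\), the preimage \(\mathcal Z=\bm d^{-1}(0)\) is a smooth embedded submanifold of codimension \(2m+1\).  The only link to spectra is the Clifford relation
\[
(\bm d\cdot\bm\Gamma)^2=|\bm d|^2 I,
\]
which makes the eigenvalues of \(H(x)\) equal to \(d_0(x)\pm|\bm d(x)|\).  The two eigenvalues therefore coincide exactly on \(\mathcal Z\), so \(\mathcal Z\) is the degeneracy locus.  Away from \(\mathcal Z\) the lower projector is
\[
P_-(x)=\tfrac12\bigl(I-\widehat{\bm d}(x)\cdot\bm\Gamma\bigr),
\]
and it depends only on \(\widehat{\bm d}\).

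For part (b), choose a point of a component of \(\mathcal Z\) and a small normal disc \(N\) of dimension \(2m+1\).  Regularity means \(\dd\bm d|_N\) is an isomorphism at the centre.  Hence \(\widehat{\bm d}\) restricted to the boundary sphere \(\partial N\cong S^{2m}\) is a map \(S^{2m}\to S^{2m}\).  Its degree is \(\pm1\), given by the sign of \(\det\dd\bm d|_N\), and it is locally constant along the component.

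The key step is the identity \(P_-|_{\partial N}=\widehat{\bm d}^{\,*}P_-^{\mathrm{univ}}\).  Here \(P_-^{\mathrm{univ}}(n)=\tfrac12(I-n\cdot\bm\Gamma)\) is the universal projector on \(S^{2m}\), so naturality of Chern characters gives
\[
\langle\ch_m(P_-),[\partial N]\rangle=\deg(\widehat{\bm d})\,\langle\ch_m(P_-^{\mathrm{univ}}),[S^{2m}]\rangle .
\]
It remains to show that the universal pairing is \(\pm1\) after the standard normalization.  The cleanest route is Bott periodicity: \(P_-^{\mathrm{univ}}\) for an irreducible Clifford module is the Bott generator of \(\widetilde K^0(S^{2m})\cong\ZZ\).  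Alternatively one can compute directly with \(\tr\bigl(P\,(\dd P)^{2m}\bigr)\) and \(\tr(\Gamma_1\cdots\Gamma_{2m+1})\propto\ii^m\), which is where the orientation convention enters.  This step is the only real obstacle, and I would cite the standard Bott-generator computation rather than rederive the constants.

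For part (c), I give explicit witnesses.  Keep \(\bm d\) fixed and replace \(d_0\) by \(d_0+f\) for an arbitrary smooth \(f\).  The projectors \(P_\pm\) are unchanged, so the charge is unchanged, while the nodal energy \(d_0|_{\mathcal Z}\) and the tilt \(\dd d_0\) change arbitrarily.  Next, replace \(\bm d\) by \(A\bm d\) with \(A\in GL^+(2m+1,\RR)\) close to the identity, or by \(\bm d\circ\varphi\) with \(\varphi\) isotopic to the identity.  The homotopy \(\widehat{\bm d}_t\) stays defined on \(\partial N\), since \(\bm d\) remains nonvanishing there, so the degree is preserved.  Meanwhile the transverse velocity \(A\,\dd\bm d|_{N\mathcal Z}\) changes, and the locus moves to \(\varphi^{-1}(\mathcal Z)\), giving a different embedding.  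Adding terms vanishing to second order along \(\mathcal Z\) changes the higher-order dispersion without changing \(\widehat{\bm d}\) near \(\partial N\) up to homotopy.  In all cases the local charge is constant while the listed geometric data vary, in the same spirit as \Cref{thm:qgt-not-k}.
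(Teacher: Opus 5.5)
Your proposal is correct and follows essentially the same route as the paper. It uses the regular-value theorem for (a) and reduces (b) to \(\widehat{\bm d}\cdot\bm\Gamma\) with the standard Clifford projector as Bott generator, which you make explicit through pullback and naturality of \(\ch_m\); for (c) it shifts \(d_0\), acts on \(\bm d\) by \(\mathrm{GL}^+(2m+1,\RR)\), and moves the zero set by an isotopy. Your explicit irreducibility assumption on the Clifford module is exactly what the paper tacitly uses, and for a large isotopy \(\varphi\) you should compare degrees on the transported sphere \(\varphi^{-1}(\partial N)\) rather than on the fixed \(\partial N\).
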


\begin{proof}
Part \textup{(a)} is the regular-value theorem.  On a normal sphere the
Hamiltonian is gapped, and spectral flattening replaces it by
\(\widehat{\bm d}\cdot\bm\Gamma\).  The standard Clifford projector defines
the Bott generator; its normalized top Chern-character pairing is the degree
of \(\widehat{\bm d}\), proving \textup{(b)}.  For \textup{(c)}, translations
of the zero set, changes of \(d_0\), and orientation-preserving deformations
of the derivative of \(\bm d\) leave the linking-sphere degree unchanged.
They can vary all of the listed differential data.
\end{proof}

For \(m=1\), a two-band Hamiltonian has three Pauli components.  In a
three-dimensional effective momentum space, a regular zero is an isolated
Weyl point.  In local coordinates,
\[
 H(\bm k_*+\bm q)
   =E_*I+\bm w\cdot\bm q\,I
      +\sum_{a,i=1}^3v_{ai}q_i\sigma_a+O(|\bm q|^2).
 \tag{8.10}\label{eq:weyl-linearization}
\]
When \(v\) is invertible, the local charge is
\(\operatorname{sgn}\det v\).  The charge remembers neither
\(\bm k_*\), \(E_*\), the tilt \(\bm w\), nor the positive quadratic form
\(v^{\mathsf T}v\) that sets the anisotropy of the cone.  In the untilted
linear model, the density-of-states coefficient is proportional to
\(|\det v|^{-1}\), while the charge retains only its sign.  This is a
particularly economical example of topology classifying stability without
determining low-energy physics.

Node position can itself enter a Hall response.  In the simplest
three-dimensional lattice Weyl semimetal with one pair of oppositely charged
nodes separated by \(\Delta\bm k\), the intrinsic anomalous Hall tensor has
the form
\[
       \sigma_{ij}
         =\frac{e^2}{2\pi h}\,
            \epsilon_{ijk}\Delta k_k ,
       \tag{8.10a}\label{eq:weyl-hall-separation}
\]
where \(\epsilon_{ijk}\) is the Levi--Civita symbol, up to
reciprocal-lattice ambiguities and contributions from other filled bands.
The local Chern charges determine the orientation of the Berry
sources and sinks, but not their separation.  Thus even a Hall response can
depend on geometric information in a gapless phase
\cite{TummuruEtAl2024}.

Dimension counting matters in hyperbolic reciprocal space.  For a generic
two-band family on \(\Jac(X)\), \(\dim_\RR B=2g\), so
\[
                \dim\mathcal Z=2g-3
                \tag{8.11}\label{eq:two-band-nodal-dimension}
\]
when the Pauli-vector map is transverse to zero.  In genus two, a generic
two-band degeneracy is therefore a nodal line, not an isolated Weyl point.
Dirac or Fermi points on two- or three-dimensional slices require symmetry,
parameter restriction, or additional tuning.  Hyperbolic magnetic models
do exhibit Dirac cones on suitable coordinate neighbourhoods
\cite{IkedaAokiMatsuki}; their locations and cone tensors remain geometric
data beyond their Berry phases or local charges.

For \(m=2\), five Clifford components give a codimension-five node whose
linking \(S^4\) carries a second Chern number.  This is the local structure
realized in the hyperbolic nonabelian semimetal of
\cite{TummuruEtAl2024}: the abelian spectrum can be gapped while a nodal
manifold appears in higher-rank representation space and is protected by
\(C_2\).  That work is therefore a positive demonstration of both halves of
the present thesis.  The second Chern number protects the node, while the
nonabelian character geometry, embedding of the nodal manifold, transverse
dispersion, the prescription used to pass from sector data to the regular
trace, and density-of-states scaling contain further physical information.

For a linearly dispersing codimension-\(p\) nodal manifold, the local
coarea argument gives the clean scaling
\[
                         \nu(E)\sim C_{\mathcal Z}|E|^{p-1},
                         \tag{8.12}\label{eq:nodal-dos-scaling}
\]
where \(C_{\mathcal Z}\) contains the volume of the nodal manifold and the
inverse transverse velocity determinant.  Global hyperbolic sector weights,
nonlinear dispersion, and singular strata can modify the observed exponent
or prefactor.  None of these metric quantities follows from the local Chern
charge.

\subsection{Semimetals, ordinary metals, and the scope of
  \texorpdfstring{\(K\)}{K}-theory}

The word ``gapless'' covers several mathematically different situations.
\Cref{tab:metal-regimes} records the distinction relevant here.

\begin{table}[ht]
\centering
\small
\begin{tabular}{@{}
 >{\raggedright\arraybackslash}p{0.18\textwidth}
 >{\raggedright\arraybackslash}p{0.31\textwidth}
 >{\raggedright\arraybackslash}p{0.40\textwidth}@{}}
\toprule
\textbf{regime} & \textbf{topological datum} & \textbf{additional physical data}\\
\midrule
gapped insulator
 & global occupied \(K\)-class; Chern pairings
 & gap size, dispersion, curvature profile, quantum metric\\
topological semimetal
 & local or relative class on a linking sphere
 & node position and energy, velocity, tilt, nodal embedding, density of states\\
Fermi-liquid metal
 & possible local or relative \(K\)-class protecting a stable Fermi manifold
 & Fermi-surface geometry, quasiparticle residue, lifetime, interactions\\
strange metal
 & generally no quasiparticle occupied bundle at the Fermi level
 & self-energy, spectral weight, relaxation, scaling, transport\\
\bottomrule
\end{tabular}
\caption{Topology and geometry in gapped and gapless regimes.}
\label{tab:metal-regimes}
\end{table}

Stable Fermi manifolds themselves admit \(K\)-theoretic classifications
\cite{HoravaFermiK}.  Such a class determines the stability type and charge,
not the embedded Fermi geometry or the coefficients of the effective
Hamiltonian.  Likewise, a semimetal is not specified by the statement that a
node is topologically protected.  One must also know whether the chemical
potential passes through it, whether other Fermi pockets are present, and how
the dispersion controls low-energy phase space.

\subsection{Beyond the single-particle framework: strange metals}

A topological semimetal and a strange metal should not be conflated.  The
former can be described by a nodal single-particle Hamiltonian.  The latter is
an interacting metallic regime in which a long-lived quasiparticle
description may fail.  Schematically, its retarded one-particle Green
function satisfies
\[
 [G^{\mathrm R}(\omega,x)]^{-1}
   =(\omega+\mu_{\mathrm F}+\ii0^+)I-H(x)-\Sigma^{\mathrm R}(\omega,x),
 \tag{8.13}\label{eq:interacting-green}
\]
and the spectral function
\[
 A(\omega,x)
    =-\frac1\pi\operatorname{Im}\Tr G^{\mathrm R}(\omega,x)
 \tag{8.14}\label{eq:spectral-function}
\]
depends essentially on the self-energy \(\Sigma^{\mathrm R}\).
Marginal-Fermi-liquid phenomenology provides a canonical example in which
anomalous scattering, rather than a band \(K\)-class, organizes the normal
state \cite{VarmaMarginalFermi}.

Hyperbolic geometry can plausibly influence an interacting problem through
the bare density of states, flatness, frustration, connectivity, quantum
metric, and interaction form factors.  These are concrete mechanisms to
investigate, but they do not by themselves imply strange-metal behaviour,
linear-in-temperature resistivity, or Planckian relaxation.  Establishing
such a phase requires a specified interacting model and control of the
self-energy, vertex corrections or current correlators, and thermodynamic or
transport scaling.  When the quasiparticle bundle ceases to be the principal
object, its \(K\)-class contains still less of the relevant physics, but band
geometry alone is not a replacement for a many-body theory.  This extension
clarifies the scope of the single-particle results.  We now ask whether
operator-algebraic or differential refinements of \(K\)-theory close the
remaining gap.

\section{Operator-algebraic and differential
  \texorpdfstring{\(K\)}{K}-theory}
\label{sec:operator-k}

\subsection{Three \texorpdfstring{\(K\)}{K}-theoretic questions}

One might respond to \Cref{thm:k-collapse} by replacing \(K^0(S)\) with a
more sophisticated \(K\)-theory.  That is often the correct move for
classifying phases, but it does not reconstruct the Hamiltonian.  We consider
the principal alternatives separately.

\paragraph{\(K^0(S)\).}
This is the recipient of the flat-bundle class.  At fixed rank and degree it
is constant, as proved above.

\paragraph{\(K^0(B)\).}
This classifies the stable occupied bundle over a chosen Brillouin or moduli
space \(B\).  It can detect Chern classes that \(K^0(S)\) does not.  It still
forgets dispersion and differential geometry by
\Cref{thm:qgt-not-k}.

\paragraph{\(K_*(\mathcal A)\).}
Here \(\mathcal A\) is an algebra of covariant observables, such as a reduced
group algebra, a crossed product, or an algebra built from a hull.  This is
the natural setting for noncommutative gap labels and disordered systems
\cite{BellissardKTheory,BellissardVanElstSchulzBaldes,ConnesNCG}.

These are not simply three successively larger groups.  They classify
different objects.  The class in \(K^0(S)\) belongs to a position-space flat
or holomorphic bundle; the class in \(K^0(B)\) belongs to a parameter-space
spectral bundle; and a class in \(K_*(\mathcal A)\) is represented by a
projection, unitary, or module over the observable algebra.  Relating them
requires an additional construction---for example a Poincar\'e transform, a
families index, or the choice of a Fermi projection.  There is no
model-independent map under which one of the three automatically contains
the other two.

It follows that the phrase ``use a more sophisticated \(K\)-theory'' is
incomplete until the classified object, algebra, and observable have been
specified.  Operator \(K\)-theory can retain disorder covariance and gap
labels that a fibrewise class in \(K^0(S)\) misses.  It still applies a
stable equivalence relation to a chosen projection or unitary and therefore
does not recover the self-adjoint element from which that projection arose.
These refinements retain different information; none is designed to be a
complete spectral invariant.

\subsection{Surface-group \texorpdfstring{\(C^*\)}{C-star}-algebras}

Because \(\Gamma_g\) acts freely, properly, and cocompactly on \(\HH^2\),
Green imprimitivity gives a Morita equivalence
\[
       C_0(\HH^2)\rtimes_r\Gamma_g
                 \ \sim_{\mathrm M}\ C(S).
       \tag{9.1}\label{eq:green-morita}
\]
Consequently
\[
 K_0\!\left(C_0(\HH^2)\rtimes_r\Gamma_g\right)\cong\ZZ\oplus\ZZ,
 \qquad
 K_1\!\left(C_0(\HH^2)\rtimes_r\Gamma_g\right)\cong\ZZ^{2g}.
 \tag{9.2}\label{eq:crossed-k}
\]
The reduced group algebra \(C_r^*(\Gamma_g)\) has the same group ranks.  The
Baum--Connes assembly map is an isomorphism for surface groups, and
spin\(^{c}\) Poincar\'e duality on \(S=B\Gamma_g\) identifies its domain with
the surface \(K\)-groups
\cite{BaumConnesHigson,HigsonKasparov,GreenImprimitivity}.

The twisted orbifold calculation gives a useful enlargement of this smooth
surface case.  Let \(\Gamma_{g,\bm\nu}\) be a cocompact Fuchsian group of
signature \((g;\nu_1,\ldots,\nu_n)\), and let \(\sigma\) be a multiplier with
trivial Dixmier--Douady invariant.  Marcolli and Mathai compute
\[
\begin{aligned}
 K_0\!\left(C^*(\Gamma_{g,\bm\nu},\sigma)\right)
    &\cong \ZZ^{\,2-n+\sum_{j=1}^n\nu_j},\\
 K_1\!\left(C^*(\Gamma_{g,\bm\nu},\sigma)\right)
    &\cong \ZZ^{\,2g},
\end{aligned}
\]
and the range of the canonical trace on the reduced twisted algebra as
\[
 \tau_{\sigma,*}\!
 \left(K_0\!\left(C_r^*(\Gamma_{g,\bm\nu},\sigma)\right)\right)
   =\ZZ+\theta\ZZ+\sum_{j=1}^n\frac1{\nu_j}\ZZ ,
\]
where \(\theta\) is the normalized pairing of \(\sigma\) with the orbifold
fundamental class \cite{MarcolliMathai1999}.  When \(n=0\), the \(K\)-groups
reduce to those of the torsion-free surface group and the trace range becomes
\(\ZZ+\theta\ZZ\).  Cone points add \(K_0\)-generators and rational trace
contributions.  This is a genuine hyperbolic counterpart of the
noncommutative-torus trace-range mechanism, but it retains the Fuchsian group
rather than replacing it by \(\ZZ^2\).

The adjacent constructions of Mathai--Wilkin and Mathai--Thiang make the same
point from two directions: a \(K\)-class becomes a physical conductance or
boundary index only after it is placed in a holomorphic families-index
construction or a bulk--boundary framework
\cite{MathaiWilkin2019,MathaiThiang2019}.

At fixed particle number \(N_{\mathrm p}\), the Marcolli--Seipp extension
replaces the Fuchsian group by the wreath product
\(\Gamma_{g,\bm\nu}^{\,N_{\mathrm p}}\rtimes
\mathfrak S_{N_{\mathrm p}}\) and uses the induced multiplier
\(\sigma_{N_{\mathrm p}}\).  The multiplier has trivial
Dixmier--Douady class, and the twisted and untwisted group-algebra
\(K\)-groups agree; summing the corresponding orbifold \(K\)-groups over
particle number produces the graded Fock-space structure mentioned in
\Cref{sec:fermi-hall} \cite{MarcolliSeipp2017}.  This is a substantial
many-particle enrichment of the classification target.  It does not,
however, turn the resulting \(K\)-classes or index pairings into a complete
invariant of a chosen many-particle Hamiltonian.

These finitely generated \(K\)-groups retain powerful stable information,
but they cannot encode the continuous range of spectral moments in
\eqref{eq:walk-trace} or the length data in \eqref{eq:twisted-heat}.  More
generally, the following elementary functional-calculus argument applies in
any unital \(C^*\)-algebra.

\begin{proposition}[A Fermi projection does not determine a spectrum]
\label{prop:functional-calculus}
Let \(h=h^*\in M_N(\mathcal A)\) have a gap containing a Fermi level \(\mu\),
and let
\[
                     p=\mathbf1_{(-\infty,\mu)}(h).
                     \tag{9.3}\label{eq:fermi-projection}
\]
If \(f:\RR\to\RR\) is continuous on \(\spec(h)\) and satisfies
\[
       (\lambda-\mu)\bigl(f(\lambda)-\mu\bigr)>0
       \qquad\text{for every }\lambda\in\spec(h),
\]
then
\[
           \mathbf1_{(-\infty,\mu)}(f(h))=p.
           \tag{9.4}\label{eq:same-fermi}
\]
Thus \(h\) and \(f(h)\) define the same class \([p]\in K_0(\mathcal A)\), while
their spectra are \(\spec(h)\) and \(f(\spec(h))\), which can differ.
\end{proposition}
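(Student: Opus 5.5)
The plan is to reduce everything to the continuous functional calculus in the unital \(C^*\)-algebra \(M_N(\mathcal A)\), and to compare the two sides of \eqref{eq:same-fermi} through the spectral mapping theorem.

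First I fix what the spectral projections mean. Since \(\mu\notin\spec(h)\) and \(\spec(h)\) is compact, the indicator \(\mathbf 1_{(-\infty,\mu)}\) agrees on \(\spec(h)\) with a continuous function \(\phi\). One may take \(\phi=1\) on \((-\infty,\mu-\delta]\), \(\phi=0\) on \([\mu+\delta,\infty)\), and \(\phi\) linear in between, where \(\delta\) is smaller than the distance from \(\mu\) to \(\spec(h)\). So \(p=\phi(h)\) is a genuine projection in \(M_N(\mathcal A)\), and no Borel calculus or enveloping von Neumann algebra is needed.

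Next I handle \(f(h)\). By the spectral mapping theorem, \(\spec(f(h))=f(\spec(h))\). The sign hypothesis gives \(f(\lambda)\neq\mu\) for every \(\lambda\in\spec(h)\), so \(\mu\notin\spec(f(h))\). This set is compact, so it stays a positive distance from \(\mu\), and the same construction gives a continuous \(\psi\) with \(\mathbf 1_{(-\infty,\mu)}(f(h))=\psi(f(h))\).

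The key step is composition: \(\psi(f(h))=(\psi\circ f)(h)\), which is valid because \(f\) is continuous on \(\spec(h)\) and \(\psi\) is continuous on \(f(\spec(h))\). It remains to check that \(\psi\circ f=\phi\) pointwise on \(\spec(h)\). If \(\lambda<\mu\), the hypothesis forces \(f(\lambda)<\mu\), so both functions equal \(1\). If \(\lambda>\mu\), then \(f(\lambda)>\mu\), so both equal \(0\). Two continuous functions that agree on the spectrum give the same element, so \eqref{eq:same-fermi} follows. Equality of the projections in \(M_N(\mathcal A)\) then gives equality of the classes in \(K_0(\mathcal A)\).

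For the final clause I give a witness. Take \(\mathcal A=\CC\) and \(h=\operatorname{diag}(-1,1)\) with \(\mu=0\), and \(f(\lambda)=2\lambda\). Then \(f(h)\) has the same Fermi projection as \(h\), but its spectrum is \(\{-2,2\}\) rather than \(\{-1,1\}\).

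The only delicate point is ensuring that the indicator is evaluated through continuous functional calculus on both sides. That is exactly why the gap at \(\mu\) must be shown to persist for \(f(h)\). Once the sign condition rules out \(\mu\in\spec(f(h))\), the rest is bookkeeping.
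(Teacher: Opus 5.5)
Your proof is correct and follows the paper's argument. The sign condition keeps \(\mu\) out of \(\spec(f(h))=f(\spec(h))\), and continuous functional calculus, through the composition rule \(\psi(f(h))=(\psi\circ f)(h)\) that the paper uses implicitly, then identifies the two characteristic functions on \(\spec(h)\); your explicit continuous surrogates for the indicators and the witness \(h=\operatorname{diag}(-1,1)\), \(f(\lambda)=2\lambda\) for the final clause just make explicit what the paper leaves to the reader.
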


\begin{proof}
The strict sign condition says both that a spectral value lies below \(\mu\)
exactly when its image does and that no image lies at \(\mu\).  Since
\(\spec(h)\) is compact, the continuous function
\(\lambda\mapsto|f(\lambda)-\mu|\) has a positive minimum there; hence
\(f(h)\) remains gapped at \(\mu\).  Continuous functional calculus then
identifies the two characteristic functions on \(\spec(h)\), so their
spectral projections are equal.
\end{proof}

\Cref{prop:functional-calculus} is the operator-algebraic version of
\eqref{eq:fixed-projector-H}.  It applies even when the full noncommutative
observable algebra is used.  The \(K\)-class of a Fermi projection determines
neither the energy scale nor the density of states within either spectral
component.

\subsection{Comparison with the Euclidean noncommutative torus}

The noncommutative two-torus provides the cleanest test of what is gained by
passing from a commutative Brillouin torus to an algebra of magnetic
translations.  For \(\theta\in\RR\), let \(A_\theta\) be the universal unital
\(C^*\)-algebra generated by unitaries \(\mathsf U\) and \(\mathsf V\) with
\[
                  \mathsf V\mathsf U
                    =\e^{2\pi\ii\theta}\mathsf U\mathsf V .
\]
Equivalently,
\(A_\theta\cong C(S^1)\rtimes_{\alpha_\theta}\ZZ\), where
\(\alpha_\theta\) is rotation through angle \(2\pi\theta\).  In a Euclidean
magnetic lattice, \(\theta\) is the flux per cell in flux-quantum units and
the two generators are magnetic translations.

For irrational \(\theta\), the Pimsner--Voiculescu exact sequence gives
\[
                    K_0(A_\theta)\cong\ZZ^2,
              \qquad K_1(A_\theta)\cong\ZZ^2
\]
\cite{PimsnerVoiculescu1980}.  These abstract groups are independent of
\(\theta\).  The canonical trace \(\tau_\theta\), defined on finite Fourier
sums by
\(\tau_\theta(\sum a_{mn}\mathsf U^m\mathsf V^n)=a_{00}\), retains more:
\[
             \tau_{\theta,*}\bigl(K_0(A_\theta)\bigr)
                    =\ZZ+\theta\ZZ .
\]
For projections \(p\in A_\theta\), the possible normalized trace values lie in
\((\ZZ+\theta\ZZ)\cap[0,1]\).  Rieffel's projections realize these values,
and the ordered, traced \(K_0\)-group consequently sees magnetic information
that the bare group \(\ZZ^2\) forgets \cite{RieffelRotation1981}.

The canonical derivations of the smooth noncommutative torus define a cyclic
two-cocycle.  Its pairing with \([p]\in K_0(A_\theta)\) gives the noncommutative
first Chern number and, under the usual physical hypotheses, the quantized
Hall conductance
\cite{ConnesNCG,BellissardVanElstSchulzBaldes}.  This is an exemplary case in
which \(K\)-theory, supplemented by a trace and a cyclic cocycle, captures
exactly the robust quantity it is meant to capture.

It still does not reconstruct a magnetic Hamiltonian
\(h=h^*\in M_N(A_\theta)\).  Even after \(A_\theta\), its smooth subalgebra,
the canonical trace, and the derivations have been fixed,
\Cref{prop:functional-calculus} produces \(f(h)\) with the same Fermi
projection and the same \(K\)-theoretic Hall data but a different spectrum.
Conversely, the trace range illustrates that algebra, trace, and \(K\)-class
are genuinely different layers: retaining only the abstract \(K\)-groups
forgets \(\theta\), while retaining the traced ordered group remembers more
but still not \(h\).

There is a categorical lesson as well.  Strong Morita equivalence identifies
the appropriate module categories and therefore preserves \(K\)-theory.
For irrational rotation algebras, \(A_\theta\) and \(A_{\theta'}\) are
strongly Morita equivalent precisely when \(\theta\) and \(\theta'\) lie in
the same \(\mathrm{GL}(2,\ZZ)\)-orbit under fractional linear transformations
\cite{RieffelRotation1981}.  Such an equivalence does not, by itself, select a
particular Hamiltonian, trace normalization, differential calculus, or
metric.  Morita-invariant classification and geometric reconstruction are
again different problems.

The hyperbolic magnetic analogue is not another copy of \(A_\theta\).
Projective surface-group translations satisfy
\[
          T_\gamma T_\delta
             =\sigma(\gamma,\delta)T_{\gamma\delta},
       \qquad
          \sigma\in Z^2(\Gamma_g,\UU(1)),
\]
and lead to the twisted reduced group algebra
\(C_r^*(\Gamma_g,\sigma)\).  The nonabelian surface relation and the
conjugacy classes entering \eqref{eq:walk-trace} remain essential; replacing
\(\Gamma_g\) by \(\ZZ^2\) would erase precisely that information.  The
continuous and discrete quantum Hall effects on the hyperbolic plane have
indeed been formulated through an imprimitivity algebra and a pairing of
\(K\)-theory with cyclic cohomology
\cite{CareyHannabussMathaiMcCann1998}.  Marcolli and Mathai make the
Euclidean--hyperbolic comparison explicit: for Euclidean magnetic
translations the twisted group algebra
\(C_r^*(\ZZ^2,\sigma)\) is the noncommutative torus \(A_\theta\), whereas a
hyperbolic magnetic model replaces \(\ZZ^2\) by a cocompact Fuchsian group
\cite{MarcolliMathai2006}.  Noncommutative tori are therefore a highly
relevant comparison and a model of successful noncommutative
classification, but they should not be identified with the hyperbolic
Brillouin or character space.

\subsection{What gap labels determine}

Let \(\mathcal A\) be a unital \(C^*\)-algebra with a trace \(\tau\), extended
to matrix algebras using the ordinary matrix trace.  A fixed normalization per
fundamental cell or per orbital may be imposed afterwards.  If
\(h=h^*\in M_N(\mathcal A)\) and \(E\notin\spec(h)\), continuous functional
calculus gives the Fermi projection
\[
             p_E(h):=\mathbf1_{(-\infty,E)}(h)\in M_N(\mathcal A).
             \tag{9.5}\label{eq:gap-fermi-projection}
\]
The trace induces
\[
     \tau_*:K_0(\mathcal A)\longrightarrow\RR,
     \qquad
     \mathcal N_h(E):=\tau\!\left(p_E(h)\right)
                     =\tau_*[p_E(h)].
     \tag{9.6}\label{eq:trace-pairing}
\]
In a covariant lattice model, \(\mathcal N_h(E)\) is the integrated density
of states at a gap, up to the chosen normalization.

\begin{proposition}[Scope and invariance of a gap label]
\label{prop:gap-label-scope}
With the notation above:
\begin{enumerate}[label=\textup{(\alph*)},leftmargin=2.5em]
\item \(\mathcal N_h(E)\) belongs to
  \(\tau_*(K_0(\mathcal A))\), and it is constant as \(E\) moves within a
  connected component of \(\RR\setminus\spec(h)\).

\item Let \(s\mapsto h_s\) be a norm-continuous path of self-adjoint elements,
  and let \(s\mapsto E_s\) be continuous.  If
  \[
       \inf_{s\in[0,1]}
       \operatorname{dist}\!\left(E_s,\spec(h_s)\right)>0,
       \tag{9.7}\label{eq:uniform-gap}
  \]
  then \(p_{E_s}(h_s)\) is a norm-continuous path of projections.  Its
  \(K_0\)-class and trace label are independent of \(s\).

\item The common class and trace label do not determine the position or width
  of the gap, nor the spectral measure on either side.  Indeed, the
  functional-calculus deformation \(h\mapsto f(h)\) of
  \Cref{prop:functional-calculus} preserves \(p_E(h)\) whenever \(f\)
  preserves the spectral cut at \(E\), while it can move both gap edges and
  reshape the spectrum.
\end{enumerate}
\end{proposition}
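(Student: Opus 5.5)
The plan is to build everything from continuous functional calculus, applied on the spectrum of \(h\) and, for part (b), on spectra that vary with \(s\).

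\textbf{Part (a).} Since \(E\notin\spec(h)\), the indicator \(\mathbf1_{(-\infty,E)}\) agrees on \(\spec(h)\) with a continuous function. Hence \(p_E(h)\) is a genuine projection in \(M_N(\mathcal A)\), and \(\mathcal N_h(E)=\tau_*[p_E(h)]\) lies in \(\tau_*(K_0(\mathcal A))\) by definition of the induced map \eqref{eq:trace-pairing}. Now take \(E<E'\) in the same component of \(\RR\setminus\spec(h)\). The two indicators \(\mathbf1_{(-\infty,E)}\) and \(\mathbf1_{(-\infty,E')}\) coincide on \(\spec(h)\), because no spectrum lies in \([E,E']\). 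Therefore \(p_E(h)=p_{E'}(h)\) exactly, and the label is constant on the component.

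\textbf{Part (b).} Let \(\delta\) be the infimum in \eqref{eq:uniform-gap}. The first step is to prove norm continuity of \(s\mapsto p_{E_s}(h_s)\) locally in \(s\).

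Fix \(s_0\) and choose a neighbourhood on which \(\|h_s-h_{s_0}\|<\delta/4\) and \(|E_s-E_{s_0}|<\delta/4\). By the standard perturbation bound for self-adjoint spectra, \(\spec(h_s)\) lies in the \(\delta/4\)-neighbourhood of \(\spec(h_{s_0})\). Combined with \eqref{eq:uniform-gap}, this shows that the interval \((E_{s_0}-\delta/2,E_{s_0}+\delta/2)\) misses \(\spec(h_s)\). It also shows that \(E_s\) lies in that interval.

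Next fix one continuous function \(\phi\) with \(\phi=1\) on \((-\infty,E_{s_0}-\delta/2]\) and \(\phi=0\) on \([E_{s_0}+\delta/2,\infty)\). On this neighbourhood \(p_{E_s}(h_s)=\phi(h_s)\). This uses the same argument as in (a): \(E_s\) and the cut of \(\phi\) lie in a common spectral gap of \(h_s\).

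The map \(a\mapsto\phi(a)\) is norm continuous on self-adjoint elements, by uniform polynomial approximation of \(\phi\) on a fixed compact interval containing all the spectra. Hence \(s\mapsto p_{E_s}(h_s)\) is norm continuous near \(s_0\). Compactness of \([0,1]\) then gives continuity on the whole interval.

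Finally, projections at distance less than \(1\) are unitarily equivalent, so a norm-continuous path of projections has constant \(K_0\)-class. The trace label is then constant because it is \(\tau_*\) of that class.

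\textbf{Part (c).} I would take \(f\) satisfying the sign condition of \Cref{prop:functional-calculus} with \(\mu=E\). That proposition gives \(p_E(f(h))=p_E(h)\), so the class and the label agree. It remains to exhibit an \(f\) that changes the gap and the spectral measure.

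For the gap, choose \(f(\lambda)=E+c(\lambda-E)\) with \(c>0\), \(c\neq1\). Then \(\spec(f(h))=f(\spec(h))\) by the spectral mapping theorem, and the gap edges are rescaled about \(E\). The gap width is multiplied by \(c\), which changes it whenever the gap is bounded on at least one side. A translation-type map, piecewise affine and fixing the sign relative to \(E\), shifts the gap edges instead.

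For the spectral measure, push it forward under a nonlinear monotone \(f\) on one side of the cut; this reshapes the spectral measure there.

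\textbf{Main obstacle.} The only delicate point is the uniformity in (b): the continuous cutoff \(\phi\) must be chosen locally, and \(E_s\) must stay inside the gap where the cutoff is placed. The \(\delta/4\) bookkeeping above handles both requirements.
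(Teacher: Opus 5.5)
Your proposal is correct and follows the paper's proof. In (a) and (c) you use the same functional-calculus arguments, including \Cref{prop:functional-calculus} for (c), where you usefully add explicit witnesses. In (b) you, like the paper, write \(p_{E_s}(h_s)\) locally in \(s\) as one fixed function of \(h_s\) and then use homotopy invariance of \(K_0\); the only difference is that you use a continuous cutoff \(\phi\) and continuity of continuous functional calculus, whereas the paper uses a Riesz contour and resolvent continuity, and the two are interchangeable for self-adjoint paths. One small slip: rescaling about \(E\) changes the gap width only when the gap is bounded on both sides; for a gap bounded on one side it moves the finite edge, but the width stays infinite.
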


\begin{proof}
The projection in \eqref{eq:gap-fermi-projection} represents a class in
\(K_0(\mathcal A)\), which proves the first assertion in (a); moving \(E\)
inside one gap does not cross a spectral value and therefore does not change
the projection.  Under \eqref{eq:uniform-gap}, the Riesz projection can be
defined locally in \(s\) by a contour lying in the resolvent set.  Resolvent
continuity makes these local projections norm-continuous, and they agree on
overlaps.  A norm-continuous path is a homotopy of projections, proving (b).
Part (c) is precisely the functional-calculus mechanism of
\Cref{prop:functional-calculus}.
\end{proof}

For an aperiodic hull \(\Omega\) with an invariant probability measure
\(\mathfrak m\), the transformation-groupoid or crossed-product algebra
carries a trace \(\tau_{\mathfrak m}\) obtained by taking the coefficient at
the identity and averaging over \(\mathfrak m\).  Gap-labelling theorems
identify the allowed subgroup
\(\tau_{\mathfrak m,*}(K_0(\mathcal A))\).  This is a major success of
\(K\)-theory in periodic, quasiperiodic, and noncommutative settings
\cite{BellissardBovierGhez,KellendonkPutnam}.  Three points are important.

\begin{enumerate}[label=\textup{(\roman*)},leftmargin=2.6em]
\item The trace range is a group of \emph{allowed} labels.  A fixed
  Hamiltonian need not open a gap at every element of that group.
\item The label depends on the trace, hence on the invariant measure or
  normalization, not merely on the abstract groups \(K_*(\mathcal A)\).
\item If \(E\) lies in the spectrum, the discontinuous step function need not
  define a projection in \(\mathcal A\).  Mobility-gap formulations require
  additional localization and regularity hypotheses.
\end{enumerate}

A gap label therefore does not reconstruct the spectral measure between
gaps, decide localization, or give a gap's location or width.  Hyperbolic
periodic approximants illustrate both sides: \(K\)-theory predicts
topological gap classes, while converging boundary conditions and explicit
Hamiltonians are needed to decide which gaps are open and to calculate their
spectra \cite{LuxProdan2023}.  Because a hyperbolic lattice is nonamenable,
one must also justify how the trace is approximated by finite devices;
\Cref{prop:finite-cover-moments} supplies the elementary finite-range
statement used here.  These limitations are not shortcomings of gap
labelling.  Gap labelling determines the stable trace values associated with
open gaps; it is not designed to reconstruct the surrounding spectrum.

\subsection{Would differential \texorpdfstring{\(K\)}{K}-theory be enough?}

Differential \(K\)-theory does address one limitation of ordinary
\(K\)-theory: it refines a topological class by connection and
differential-form data.  On a smooth compact parameter manifold \(B\), it
comes with maps
\[
 I:\widehat K^0(B)\longrightarrow K^0(B),
 \qquad
 R:\widehat K^0(B)\longrightarrow
      \Omega_{\mathrm{cl}}^{\mathrm{even}}(B),
 \tag{9.8}\label{eq:differential-k-maps}
\]
where \(I\) forgets the differential refinement and \(R\) records its
curvature form.  They satisfy
\[
       [R(\widehat x)]_{\mathrm{dR}}
          =\ch\!\left(I(\widehat x)\right),
 \tag{9.9}\label{eq:differential-k-curvature}
\]
and one standard exact sequence is
\[
 K^{-1}(B)\xrightarrow{\ \ch\ }
 \frac{\Omega^{\mathrm{odd}}(B)}{\operatorname{im}\dd}
 \xrightarrow{\ a\ }\widehat K^0(B)
 \xrightarrow{\ I\ }K^0(B)\longrightarrow0.
 \tag{9.10}\label{eq:differential-k-sequence}
\]
In the Freed--Lott model, a cycle contains a Hermitian vector bundle, a
compatible connection, and an odd differential form modulo the appropriate
Chern--Simons relation \cite{FreedLott}.  Thus differential \(K\)-theory can
retain Chern--Weil curvature and flat or secondary holonomy information that
ordinary \(K^0(B)\) forgets.

For a gapped spectral projector \(P\), the occupied bundle and Berry
connection define a class
\[
 \widehat e_P
   :=[\mathscr E_P,h_P,\nabla^{\mathrm B},0]
      \in\widehat K^0(B).
 \tag{9.11}\label{eq:differential-band-class}
\]
This is a meaningful refinement, but it is still not a complete band datum.

\begin{proposition}[Differential \(K\)-theory does not determine band data]
\label{prop:differential-k-not-band}
\begin{enumerate}[label=\textup{(\alph*)},leftmargin=2.5em]
\item Fix the class \(\widehat e_P\).  It is compatible with infinitely
  many gapped Hamiltonians having different eigenvalue functions.
\item There are projector families with the same differential \(K\)-class
  and the same Berry connection but different quantum metrics.
\end{enumerate}
\end{proposition}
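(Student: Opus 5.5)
For part (a) I would reuse the construction \eqref{eq:fixed-projector-H} from the proof of \Cref{thm:qgt-not-k}(a). Fix the projector \(P\), and with it the Hermitian bundle \(\mathscr E_P\), its induced metric \(h_P\), and the Berry connection \(\nabla^{\mathrm B}=P\,\dd\). These depend only on \(P\) and the ambient trivialization, so \(\widehat e_P\) is fixed. For any smooth functions \(\eps_-<\eps_+\), the Hamiltonian \(H_{\eps_-,\eps_+}=\eps_-P+\eps_+(I-P)\) is gapped and has spectral projector \(P\) below any cut separating \(\eps_-\) from \(\eps_+\). Its differential class is therefore \(\widehat e_P\), while its eigenvalue functions are arbitrary. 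Choosing, for example, \(\eps_-=c\) and \(\eps_+=c+1\) for distinct constants \(c\) already gives infinitely many distinct dispersions. The only point to state carefully is that the cycle \([\mathscr E_P,h_P,\nabla^{\mathrm B},0]\) is built from \(P\) alone.

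For part (b) I need two projector families with the same bundle, the same Hermitian metric and the same Berry connection, but different \(\tfrac12\Tr((\dd P)^2)\). The cleanest route is to exploit the fact that \(\nabla^{\mathrm B}\) sees only the part of the differential that lies in the occupied space, while \(g^{\mathrm Q}\) sees the horizontal part \((I-P)\dd u\). On \(B=S^1\) I would take the real family \(u_a\) of \eqref{eq:metric-circle-family}. Because \(u_a\) is real and normalized, \(\braket{u_a}{\dd u_a}=0\), so the Berry connection form \(\ii\braket{u_a}{\dd u_a}\) vanishes identically for every \(a\). Under the global frame \(u_a\), every \(\mathscr E_{P_a}\) is the trivial Hermitian line bundle with the trivial connection, with zero form component. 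This identification is required in any case, since the bundles are different subbundles of \(\CC^2\). Hence all the classes \(\widehat e_{P_a}\) coincide with \([\underline{\CC},h_{\mathrm{std}},\dd,0]\). By \eqref{eq:circle-metric}, however, \(g^{\mathrm Q}_a=a^2\cos^2k\,\dd k^2\), which differs between \(a=0\) and \(a\neq0\).

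The main obstacle is interpretive rather than computational. "Same Berry connection" must be read after identifying the occupied bundles by a unitary isomorphism, because the subbundles themselves differ. Differential \(K\)-classes are defined only up to such isomorphisms, so the statement should be made explicit: there is a unitary bundle isomorphism \(\mathscr E_{P_0}\to\mathscr E_{P_a}\), namely \(u_0\mapsto u_a\), that intertwines the connections. This gives equality in \(\widehat K^0(S^1)\) on the nose, with no Chern--Simons correction needed. If one also wants an example on a higher-dimensional base, the same device works whenever the state can be chosen real. Alternatively, one could pull back by a map to a totally real submanifold of the Grassmannian, where the connection is flat and trivial but the differential is nonzero.
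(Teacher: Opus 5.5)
Your proposal is correct and follows the paper's proof essentially verbatim. Part (a) uses the fixed-projector Hamiltonians \eqref{eq:fixed-projector-H}, and part (b) uses the real circle family \eqref{eq:metric-circle-family}, whose globally parallel real frame makes every Berry connection trivial while \(g_a^{\mathrm Q}=a^2\cos^2(k)\,\dd k^2\) varies. Your explicit use of the unitary isomorphism \(u_0\mapsto u_a\) to identify the distinct subbundles makes precise what ``same Berry connection'' means, which the paper leaves implicit.
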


\begin{proof}
For (a), keep \(P\), \(h_P\), and \(\nabla^{\mathrm B}\) fixed and use the
Hamiltonians \(H_{\eps_-,\eps_+}\) in
\eqref{eq:fixed-projector-H}.  The functions \(\eps_-\) and \(\eps_+\) may
vary arbitrarily subject to the gap inequality.  Neither the occupied bundle
nor its connection changes, so \(\widehat e_P\) is fixed while the dispersion
changes.

For (b), use the real normalized sections \(u_a:S^1\to\CC^2\) from
\eqref{eq:metric-circle-family}.  They trivialize their spectral line bundles
globally and satisfy
\[
               \ii\braket{u_a}{\dd u_a}=0
               \qquad\text{for every }a.
 \tag{9.12}\label{eq:real-family-berry}
\]
Hence every induced Berry connection is the trivial connection and every
differential \(K\)-class \(\widehat e_{P_a}\) is the same.  Nevertheless,
\eqref{eq:circle-metric} gives
\[
               g_a^{\mathrm Q}=a^2\cos^2(k)\,\dd k^2,
 \]
which varies with \(a\).
\end{proof}

The proposition shows both what the refinement gains and what remains absent.
Differential \(K\)-theory can remember characteristic curvature forms and
secondary connection data, but it does not remember the eigenvalue functions,
the ambient embedding \(P:B\to\operatorname{Gr}(q,N)\), or the extrinsic
Fubini--Study metric pulled back by that embedding.  Nor does a class on \(B\)
by itself retain the hyperbolic metric and length spectrum of the real-space
surface, or a Higgs field and its spectral curve.

At the operator-algebraic level, a spectral triple
\((\mathcal A,\cH,D)\) carries substantially more metric information than
\(K_*(\mathcal A)\): the operator \(D\) supplies a differential calculus and,
in favourable commutative cases, a distance formula.  This reinforces rather
than evades the paper's conclusion, because the metric information resides in
\(D\), not in the \(K\)-groups alone.  Even a fixed spectral triple does not
select an arbitrary solid-state Hamiltonian \(h\in M_N(\mathcal A)\) unless
the model specifies a relation between \(h\) and \(D\).  The appropriate
repair is therefore to retain the geometric or spectral object from which a
\(K\)-class was extracted, with differential \(K\)-theory serving as a
valuable intermediate layer rather than a complete replacement.  The next
section tests this conclusion in finite and thermodynamic realizations, where
the choice of boundary condition and limiting trace becomes part of the
physical record.

\section{Quasicrystals, finite hyperbolic systems, and experiments}
\label{sec:quasicrystals-experiments}

\subsection{The quasicrystal comparison}

The conceptual parallel with quasicrystals is close but not exact.  For an
aperiodic pattern, one often replaces a Brillouin torus by a hull, a
transformation groupoid, and its \(C^*\)-algebra.  \(K\)-theory and trace
pairings label robust gaps, while the spectral measure and localization
depend on the pattern, frequencies, metric realization, and hopping law
\cite{BaakeGrimm,KellendonkPutnam}.  Two operators associated with the same
hull algebra can occupy the same \(K\)-class and still have different
spectra.

For example, let \(T\) be a finite-local-complexity pattern in \(\RR^d\), and
let \(\Omega_T\) be the closure of its translation orbit in the local
topology.  The transformation groupoid
\(\Omega_T\rtimes\RR^d\), or an equivalent transversal reduction, produces
the observable algebra \(\mathcal A_T\).  A finite-range covariant
Hamiltonian is a self-adjoint element of a matrix algebra over
\(\mathcal A_T\).  The hull specifies which local patterns may occur and how
they recur; it does not specify the hopping function assigned to those
patterns.  Consequently, changing hopping amplitudes, on-site terms, or
geometric weights can change the operator without changing \(\Omega_T\) or
\(\mathcal A_T\).  If a chosen Fermi gap remains open, the Fermi projection
stays in one \(K_0(\mathcal A_T)\)-class by
\Cref{prop:gap-label-scope}, although the bands or spectral measure on either
side of the gap may move.

The information loss may be displayed as
\[
 \bigl(\Omega_T,\mathcal A_T,h,\tau_{\mathfrak m}\bigr)
   \longrightarrow
 \bigl(\mathcal A_T,[p_E(h)],\tau_{\mathfrak m}\bigr)
   \longrightarrow
 \tau_{\mathfrak m,*}[p_E(h)].
 \tag{10.1}\label{eq:quasicrystal-forgetting}
\]
The first arrow forgets the particular Hamiltonian while retaining its
gapped projection; the second retains only the numerical gap label.  Each
step is useful, but neither is reversible.

Hyperbolic crystals share the failure of a two-dimensional Euclidean momentum
torus, but their replacement has a different origin.  A compact quotient
surface has a nonabelian fundamental group, its abelian characters form a
Jacobian, and higher-dimensional irreducible representations form
character-moduli spaces.  The geometry is not only the local arrangement of a
pattern.  It includes:
\[
\begin{gathered}
 \text{a hyperbolic metric, a complex structure, and a period matrix,}\\
 \text{together with moduli of flat or holomorphic bundles}.
\end{gathered}
\]
This makes hyperbolic matter a particularly transparent setting in which
topological and geometric data can be varied independently.

Thus a hull and a character variety should not be identified merely because
both replace an elementary Brillouin torus.  The hull organizes translates
and local configurations of an aperiodic pattern.  The character variety
organizes flat boundary conditions for the nonabelian fundamental group of a
compact quotient.  A genuinely aperiodic hyperbolic system may require both
structures, but the two sources of noncommutativity and the observables they
control remain distinct.

\subsection{Finite systems, boundary conditions, and the bulk trace}
\label{sec:finite-boundaries}

A disk-like patch of a hyperbolic lattice is not a bulk-dominated
approximation in the Euclidean sense.  For the standard regular hyperbolic
tilings, the universal-cover graph \(\widetilde{\mathcal Q}\) is
nonamenable: its edge-isoperimetric constant
\[
 h(\widetilde{\mathcal Q})
   :=\inf_{\substack{F\subset V(\widetilde{\mathcal Q})\\
                     0<|F|<\infty}}
       \frac{|\partial F|}{|F|}
   >0.
 \tag{10.2}\label{eq:hyperbolic-cheeger}
\]
Here \(\partial F\) is the set of edges leaving the finite vertex set \(F\).
Hence no exhaustion by open patches has a boundary-to-volume ratio tending
to zero.  Edge modes and boundary coordination can therefore contribute at
leading order to a normalized density of states.

One can retain this boundary deliberately rather than remove it.  Mathai and
Thiang use noncommutative T-duality to propose a fractional bulk--boundary
correspondence adapted to the geometry of the hyperbolic plane
\cite{MathaiThiang2019}.  That construction and the analysis below address
complementary questions: the former relates bulk and codimension-one boundary
indices, whereas the latter uses closed quotients to suppress edge
contributions and approximate the regular bulk trace.

Compact quotients and finite covers remove the physical boundary, but they
must also converge locally to the universal cover.  This condition has a
simple group-theoretic form.  Fix a word length \(|\cdot|\) on
\(\Gamma_g\), and let
\[
  \Gamma_n\triangleleft\Gamma_g,\qquad
  G_n:=\Gamma_g/\Gamma_n
\]
be finite-index normal subgroups and their deck groups.  Define the
group-theoretic injectivity length
\[
       \ell_n
        :=\min\{|\gamma|:\gamma\in\Gamma_n\setminus\{1\}\}.
 \tag{10.3}\label{eq:cover-injectivity-length}
\]
A sequence with \(\ell_n\to\infty\) contains no new relation of any fixed
word length for sufficiently large \(n\).  Geometrically, the corresponding
compact covers have injectivity radius tending to infinity on the scale
measured by the chosen lattice generators; cocompactness makes this word
metric comparable, up to uniform constants, with hyperbolic displacement.
Such sequences exist because surface groups are residually finite.  Indeed,
a nested normal chain with trivial intersection eventually excludes every
nonidentity element in each finite word ball, which is precisely
\(\ell_n\to\infty\).

Approximation of \(L^2\)-invariants by normalized invariants of finite
coverings goes back to L\"uck, while local or Benjamini--Schramm convergence
provides a broader spectral-measure formulation
\cite{Luck1994,AbertThomVirag2014}.  The result below is the elementary
finite-range specialization needed here.  Finite propagation strengthens
asymptotic moment convergence to an exact identity as soon as the quotient is
faithful on the word ball traversed by the relevant closed walks.  We include
the proof because it identifies that physical length scale directly.

After choosing a fundamental cell, any finite-range invariant Hamiltonian on
the universal cover can be written
\[
  \widetilde H
     =\sum_{|\gamma|\leq R}
        \lambda(\gamma)\otimes A_\gamma,
  \qquad
  A_{\gamma^{-1}}=A_\gamma^*,
 \tag{10.4}\label{eq:finite-range-group-operator}
\]
on \(\ell^2(\Gamma_g)\otimes\CC^N\).  Here \(N\) counts vertices and internal
orbitals in the cell, \(\lambda\) is the left regular representation, and
\(R\) is the hopping range in word length.  Let \(H_n\) be the finite matrix
obtained by replacing \(\lambda\) with the regular representation of \(G_n\).

\begin{proposition}[Exact finite-cover identity for fixed spectral moments]
\label{prop:finite-cover-moments}
Let \(\tau_{\Gamma_g}\) be the canonical group trace and
\(\tr_N=N^{-1}\Tr_N\).  If \(\ell_n>kR\), then
\[
   \frac{1}{N|G_n|}\Tr(H_n^k)
       =(\tau_{\Gamma_g}\otimes\tr_N)(\widetilde H^k).
 \tag{10.5}\label{eq:finite-cover-moment}
\]
Consequently, if \(\ell_n\to\infty\), the empirical spectral measures of
\(H_n\) converge weakly to the spectral measure of \(\widetilde H\) defined
by \(\tau_{\Gamma_g}\otimes\tr_N\).
\end{proposition}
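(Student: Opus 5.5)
Expand both sides as sums over words, in the spirit of the closed-walk formula of \Cref{thm:walk-formula}. Writing \(\widetilde H=\sum_{|\gamma|\le R}\lambda(\gamma)\otimes A_\gamma\), multilinearity gives
\[
\widetilde H^k=\sum_{\gamma_1,\ldots,\gamma_k}\lambda(\gamma_k\cdots\gamma_1)\otimes A_{\gamma_k}\cdots A_{\gamma_1},
\]
with each \(|\gamma_j|\le R\). Since \(\tau_{\Gamma_g}(\lambda(\gamma))=\delta_{\gamma,1}\), the right-hand side of \eqref{eq:finite-cover-moment} becomes
\[
\sum_{\gamma_k\cdots\gamma_1=1}\tr_N(A_{\gamma_k}\cdots A_{\gamma_1}).
\]
The matrix \(H_n\) is the same expression with \(\lambda\) replaced by the regular representation \(\lambda_n\) of \(G_n\), where \(\gamma\) acts through its image \(\bar\gamma=\gamma\Gamma_n\). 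The normalized trace of \(\lambda_n(\bar\delta)\) on \(\ell^2(G_n)\) equals \(\delta_{\bar\delta,\bar 1}\), because left translation by a nonidentity element has no fixed points. Hence \((N|G_n|)^{-1}\Tr(H_n^k)\) is the same sum, but over tuples with \(\gamma_k\cdots\gamma_1\in\Gamma_n\).

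The key step is comparing these two index sets. Set \(\delta=\gamma_k\cdots\gamma_1\). Then \(|\delta|\le kR<\ell_n\). By the definition \eqref{eq:cover-injectivity-length}, \(\delta\in\Gamma_n\) forces \(\delta=1\), and conversely \(\delta=1\) lies in \(\Gamma_n\). So the two sums agree term by term, which proves \eqref{eq:finite-cover-moment}. The only care needed is consistency of conventions. Normality of \(\Gamma_n\) makes \(G_n\) a group and \(\lambda_n\) a representation. One also has to check that the operator products are read in the same order on both sides; this is harmless here, since only the condition \(\delta=1\) matters.

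For the convergence statement, observe that all measures involved are supported in the fixed interval \([-\|\widetilde H\|',\|\widetilde H\|']\) with \(\|\widetilde H\|'=\sum_{|\gamma|\le R}\|A_\gamma\|\). This bound holds uniformly in \(n\), because each \(\lambda_n(\bar\gamma)\) is unitary. Any fixed moment \(k\) is eventually exact, since \(\ell_n\to\infty\). By Weierstrass approximation on that compact interval, convergence of all moments gives \(\int f\,d\mu_n\to\int f\,d\mu\) for every continuous \(f\), which is weak convergence. The limit measure is the spectral measure of \(\widetilde H\) with respect to the faithful normal trace \(\tau_{\Gamma_g}\otimes\tr_N\); it is a positive probability measure determined by its moments because it has compact support.

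The main obstacle is not analytic but combinatorial bookkeeping: identifying the trace of \(\lambda_n(\bar\gamma)\) with the indicator of \(\gamma\in\Gamma_n\), and matching the normalization \(N|G_n|\) on the left with \(\tr_N\) on the right. Once this is done, everything else follows directly from the exact-identity argument.
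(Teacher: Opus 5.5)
Your proposal is correct and follows the paper's proof: you expand \(\widetilde H^k\) in group elements, use that the normalized trace of \(\lambda_n(\bar\gamma)\) on \(\ell^2(G_n)\) is the indicator of \(\gamma\in\Gamma_n\), and use \(\ell_n>kR\) to leave only the identity term. You then get weak convergence from the uniform bound \(\sum_\gamma\|A_\gamma\|\) and polynomial approximation on a common compact interval. The only cosmetic difference is that you sum over \(k\)-tuples \((\gamma_1,\ldots,\gamma_k)\), whereas the paper first collects the coefficients \(B^{(k)}_\gamma\) by product element.
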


\begin{proof}
Expand
\[
        \widetilde H^k
           =\sum_{|\gamma|\leq kR}
               \lambda(\gamma)\otimes B_\gamma^{(k)}.
\]
The infinite-volume trace extracts the coefficient of the identity:
\[
  (\tau_{\Gamma_g}\otimes\tr_N)(\widetilde H^k)
       =\tr_N B_1^{(k)}.
\]
On \(\ell^2(G_n)\), the normalized trace of left translation by the image of
\(\gamma\) is one if \(\gamma\in\Gamma_n\) and zero otherwise.  Therefore
\[
 \frac{1}{N|G_n|}\Tr(H_n^k)
     =\sum_{\substack{\gamma\in\Gamma_n\\|\gamma|\leq kR}}
        \tr_N B_\gamma^{(k)}.
\]
If \(\ell_n>kR\), only \(\gamma=1\) occurs, proving
\eqref{eq:finite-cover-moment}.  The norms of all \(H_n\) are bounded by
\(\sum_\gamma\|A_\gamma\|\).  Equality of every fixed moment for all
sufficiently large \(n\), followed by polynomial approximation on this
common compact spectral interval, gives weak convergence.
\end{proof}

\begin{corollary}[Quantitative trace approximation for continuous observables]
\label{cor:finite-cover-functional-calculus}
Set
\[
                 M:=\sum_{|\gamma|\leq R}\|A_\gamma\|.
\]
For \(f\in C([-M,M];\RR)\), let
\[
 E_d(f;[-M,M])
   :=\inf_{\substack{p\in\RR[x]\\ \deg p\leq d}}
          \|f-p\|_{\infty,[-M,M]}
 \tag{10.5a}\label{eq:best-polynomial-approximation}
\]
be its best uniform approximation error by real polynomials of degree at
most \(d\).  If \(\ell_n>dR\), then
\[
\left|
 \frac{1}{N|G_n|}\Tr\!\bigl(f(H_n)\bigr)
 -
 (\tau_{\Gamma_g}\otimes\tr_N)
      \bigl(f(\widetilde H)\bigr)
\right|
 \leq 2E_d(f;[-M,M]).
\tag{10.5b}\label{eq:continuous-trace-bound}
\]
In particular, the two normalized traces agree exactly for every polynomial
of degree at most \(d\).
\end{corollary}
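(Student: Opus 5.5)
The plan is to reduce the corollary to \Cref{prop:finite-cover-moments} by polynomial approximation. Both normalized functionals are positive and unital, so each changes by at most the sup norm of the perturbation on the relevant spectrum. Two ingredients are needed.

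\textbf{Step 1: spectral localization.} Both \(H_n\) and \(\widetilde H\) have spectrum in \([-M,M]\), since their norms are bounded by \(\sum_{|\gamma|\leq R}\|A_\gamma\|=M\). This holds because \(\lambda(\gamma)\) and the regular representation of \(G_n\) are unitary. Consequently \(f(H_n)\) and \(f(\widetilde H)\) are defined by continuous functional calculus on \([-M,M]\), and
\[
\|f(H_n)-p(H_n)\|\leq\|f-p\|_{\infty,[-M,M]},
\qquad
\|f(\widetilde H)-p(\widetilde H)\|\leq\|f-p\|_{\infty,[-M,M]} .
\]
The same bounds hold for the corresponding operators \(f(\widetilde H)\) and \(p(\widetilde H)\) in the \(C^*\)-algebra \(C_r^*(\Gamma_g)\otimes M_N(\CC)\).

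\textbf{Step 2: normalized states.} Both \(\phi_n:=\frac{1}{N|G_n|}\Tr\) and \(\phi:=\tau_{\Gamma_g}\otimes\tr_N\) are states: positive, with \(\phi(I)=1\). Hence \(|\phi(a)|\leq\|a\|\) for self-adjoint \(a\).

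\textbf{Step 3: exact agreement on polynomials.} Take a real polynomial \(p\) of degree at most \(d\). By linearity, \(p(H_n)\) and \(p(\widetilde H)\) are combinations of \(H_n^k\) and \(\widetilde H^k\) with \(k\leq d\), plus identity terms, which both states send to \(1\). For each \(1\leq k\leq d\), the hypothesis \(\ell_n>dR\geq kR\) lets \eqref{eq:finite-cover-moment} apply, giving \(\phi_n(p(H_n))=\phi(p(\widetilde H))\). This already proves the final sentence of the corollary.

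\textbf{Step 4: triangle inequality.} Write
\[
\phi_n(f(H_n))-\phi(f(\widetilde H))
=\phi_n\bigl((f-p)(H_n)\bigr)-\phi\bigl((f-p)(\widetilde H)\bigr).
\]
Each term on the right is bounded by \(\|f-p\|_{\infty,[-M,M]}\), so the difference is at most \(2\|f-p\|_{\infty,[-M,M]}\). Taking the infimum over \(p\) gives \eqref{eq:continuous-trace-bound}. Existence of a best approximant is not needed, only the infimum.

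There is no serious obstacle. The only point needing care is Step 1: one must check that the functional calculus of \(\widetilde H\) taken in the reduced group \(C^*\)-algebra agrees with the one implicit in the trace \(\tau_{\Gamma_g}\otimes\tr_N\), and that the norm bound \(M\) holds in both realizations. This follows from unitarity of \(\lambda(\gamma)\) and of the regular representation of \(G_n\).
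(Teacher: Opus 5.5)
Your proposal is correct and follows the paper's proof essentially step for step. Both arguments use the common norm bound $M$, exact agreement of the two normalized traces on polynomials of degree at most $d$ via \Cref{prop:finite-cover-moments} applied monomial by monomial (constant terms agreeing trivially), and the fact that both functionals are states, which gives the bound $2\|f-p\|_{\infty,[-M,M]}$ before the infimum over $p$ is taken.
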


\begin{proof}
Both \(\widetilde H\) and \(H_n\) have norm at most \(M\).  If
\(\deg p\leq d\), \Cref{prop:finite-cover-moments}, applied to each
positive-degree monomial (with the constant terms agreeing trivially), gives
\[
 \frac{1}{N|G_n|}\Tr\!\bigl(p(H_n)\bigr)
 =
 (\tau_{\Gamma_g}\otimes\tr_N)
      \bigl(p(\widetilde H)\bigr).
\]
The two normalized traces are states.  Hence, for every such \(p\),
the left-hand side of \eqref{eq:continuous-trace-bound} is at most
\[
 \|f(H_n)-p(H_n)\|
 +\|f(\widetilde H)-p(\widetilde H)\|
 \leq 2\|f-p\|_{\infty,[-M,M]}.
\]
Taking the infimum over \(p\) proves the bound.
\end{proof}

For arithmetic congruence covers, systolic geometry supplies the missing
relation between local injectivity and matrix size.

\begin{corollary}[Arithmetic congruence-tower rates]
\label{cor:arithmetic-congruence-rate}
Suppose that \(X=\HH^2/\Gamma_g\) is a compact arithmetic surface and that
\(\Gamma_n\triangleleft\Gamma_g\) is a principal congruence tower, with
\(G_n=\Gamma_g/\Gamma_n\) and \(|G_n|\to\infty\).  There are constants
\(a,b>0\), depending on \(X\) and the chosen word metric, such that
\[
                    \ell_n\geq a\log|G_n|-b.
\tag{10.5c}\label{eq:arithmetic-injectivity-growth}
\]
If \(R=0\), the two normalized traces agree for every continuous \(f\).
Assume \(R\geq1\) and, for all sufficiently large \(n\), set
\[
 d_n:=\left\lfloor
          \frac{a\log|G_n|-b-1}{R}
       \right\rfloor .
\tag{10.5d}\label{eq:arithmetic-polynomial-degree}
\]
Then every \(f\in C([-M,M];\RR)\) satisfies
\[
\left|
 \frac{1}{N|G_n|}\Tr\!\bigl(f(H_n)\bigr)
 -
 (\tau_{\Gamma_g}\otimes\tr_N)\!\bigl(f(\widetilde H)\bigr)
\right|
 \leq 2E_{d_n}(f;[-M,M]).
\tag{10.5e}\label{eq:arithmetic-trace-rate}
\]
If \(f\) extends holomorphically to a complex neighborhood of \([-M,M]\),
then the left-hand side is at most \(C|G_n|^{-\eta}\) for some
\(C,\eta>0\).  If \(s\geq1\) is an integer and
\(f\in C^s([-M,M])\), it is
\(O((\log|G_n|)^{-s})\).
\end{corollary}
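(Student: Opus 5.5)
The proof has three steps. First I establish the injectivity bound \eqref{eq:arithmetic-injectivity-growth} from systolic geometry. Then I feed it into \Cref{cor:finite-cover-functional-calculus} with a degree chosen below the injectivity scale. Finally I insert classical polynomial-approximation rates.

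For the first step, I invoke the theorem of Katz, Schaps, and Vishne \cite{KatzSchapsVishne2007}. For principal congruence covers $X_n=\HH^2/\Gamma_n$ of a compact arithmetic surface, it gives
\[
\operatorname{sys}(X_n)\geq \tfrac43\log\operatorname{genus}(X_n)-c_0 .
\]
Riemann--Hurwitz (equivalently, Gauss--Bonnet) gives $\operatorname{genus}(X_n)-1=|G_n|(g-1)$, so
\[
\operatorname{sys}(X_n)\geq \tfrac43\log|G_n|-c_1 .
\]
Every $\gamma\in\Gamma_n\setminus\{1\}$ acts as a hyperbolic deck transformation of $X_n$, so its translation length is at least $\operatorname{sys}(X_n)$. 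Its displacement of a fixed base point $\widetilde x_0$ is at least its translation length. By the Švarc--Milnor lemma, cocompactness gives constants with
\[
d_{\HH^2}(\widetilde x_0,\gamma\widetilde x_0)\leq \Lambda|\gamma|+c_2 .
\]
Combining these yields $|\gamma|\geq \Lambda^{-1}\bigl(\tfrac43\log|G_n|-c_1-c_2\bigr)$, which is \eqref{eq:arithmetic-injectivity-growth}. The main point to verify here is that the systole bound applies to the full principal congruence tower attached to the chosen quaternion order; I take this as the hypothesis of the cited result.

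For the second step, the case $R=0$ is trivial. Then $\widetilde H=\lambda(1)\otimes A_1$ and $H_n=I\otimes A_1$, so both normalized traces equal $\tr_N f(A_1)$. For $R\geq1$, the definition \eqref{eq:arithmetic-polynomial-degree} gives $d_nR\leq a\log|G_n|-b-1<\ell_n$, so $\ell_n>d_nR$. Also $d_n\geq0$ once $n$ is large. \Cref{cor:finite-cover-functional-calculus} then yields \eqref{eq:arithmetic-trace-rate} directly.

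For the third step I use two standard approximation results.

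\emph{Holomorphic case.} If $f$ is holomorphic on a neighbourhood of $[-M,M]$, it is holomorphic inside some Bernstein ellipse with parameter $\varrho>1$. Bernstein's theorem then gives $E_d(f)\leq C'\varrho^{-d}$. Since $d_n\geq (a\log|G_n|-b-1)/R-1$, we get
\[
\varrho^{-d_n}\leq C''|G_n|^{-\eta},\qquad \eta=\frac{a\log\varrho}{R}.
\]

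\emph{Finitely differentiable case.} If $f\in C^s([-M,M])$, Jackson's theorem gives
\[
E_d(f)\leq C_s M^s d^{-s}\,\omega\bigl(f^{(s)},M/d\bigr)+\dots\leq C_s' d^{-s}\|f\|_{C^s}.
\]
With $d_n\asymp (a/R)\log|G_n|$, this is $O((\log|G_n|)^{-s})$.

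The only genuinely nontrivial input is the first step, namely the transfer from the hyperbolic systole to word length via Švarc--Milnor. I would check in particular that the word-length comparison is uniform across the tower. It is, because the base point and the generating set are both fixed in $\Gamma_g$.
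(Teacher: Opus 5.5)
Your proposal is correct and follows the paper's proof essentially step for step: the Katz--Schaps--Vishne systole bound together with \(\operatorname{genus}(X_n)-1=|G_n|(g-1)\), the passage from systole to word length through a base-point displacement estimate, the choice of \(d_n\) that makes \(\ell_n>d_nR\) so that \Cref{cor:finite-cover-functional-calculus} applies, and then the Bernstein and Jackson rates. The only difference is cosmetic: you invoke \v{S}varc--Milnor, whereas the paper uses the triangle-inequality bound \(d(o,\gamma o)\leq A|\gamma|\), with \(A\) the largest displacement of \(o\) by a generator; that is the only direction the argument needs, so your constant \(c_2\) can be taken to be zero.
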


\begin{proof}
If \(R=0\), then \(\widetilde H=I\otimes A_1\) and
\(H_n=I_{|G_n|}\otimes A_1\), which proves the first assertion.
The logarithmic systole theorem of Katz, Schaps, and Vishne gives
\[
 \operatorname{sys}_{\pi_1}(\HH^2/\Gamma_n)
      \geq \frac43\log g(\HH^2/\Gamma_n)-c
\]
for a constant \(c\) depending only on the arithmetic base surface
\cite{KatzSchapsVishne2007}.  Since
\[
 g(\HH^2/\Gamma_n)-1
       =|G_n|\bigl(g(X)-1\bigr),
\]
this is bounded below by \(c_0\log|G_n|-c_1\) for suitable
\(c_0,c_1>0\).  Fix a point \(o\in\HH^2\).  If \(A\) is the largest
displacement of \(o\) by a chosen finite set of generators, then
\[
 \operatorname{sys}_{\pi_1}(\HH^2/\Gamma_n)
 \leq \inf_{\gamma\in\Gamma_n\setminus\{1\}}d(o,\gamma o)
 \leq A\ell_n.
\]
After renaming constants, this proves
\eqref{eq:arithmetic-injectivity-growth}.  The definition of \(d_n\) gives
\(d_nR<a\log|G_n|-b\leq\ell_n\), so
\Cref{cor:finite-cover-functional-calculus} gives
\eqref{eq:arithmetic-trace-rate}.

For a function holomorphic in a complex neighborhood of the interval,
Bernstein approximation gives \(E_d\leq C_0\rho^{-d}\) for some
\(\rho>1\).  Since \(d_n\geq (a/R)\log|G_n|-C_1\), it follows that
\[
 E_{d_n}(f;[-M,M])
    \leq C_2|G_n|^{-a\log(\rho)/R}.
\]
For \(f\in C^s\), Jackson approximation gives
\(E_d=O(d^{-s})\), and hence the stated logarithmic rate.
\end{proof}

The first corollary turns local faithfulness into a quantitative error
estimate without imposing amenability; the arithmetic specialization shows
what additional systolic information buys.  Outside such quantitatively
controlled towers, there is no universal rate in the cover index \(|G_n|\)
without further information about the residual chain.  Neither bound by
itself controls a discontinuous Fermi projection or excludes a vanishing
fraction of spurious eigenvalues.

The finite quotient also has a precise sector interpretation.  The regular
representation decomposes into irreducibles of \(G_n\), and therefore
\[
 \frac{1}{N|G_n|}\Tr(H_n^k)
   =
 \sum_{\pi\in\widehat{G_n}}
    \frac{(\dim\pi)^2}{|G_n|}
    \left(
      \frac{1}{N\dim\pi}\Tr(H_\pi^k)
    \right),
 \tag{10.6}\label{eq:plancherel-sector-average}
\]
where \(H_\pi\) is the Bloch matrix obtained from
\eqref{eq:finite-range-group-operator} using \(\pi\).  A compact quotient
thus computes a Plancherel-weighted average over the representation sectors
factoring through \(G_n\).  It does not sample those sectors uniformly, and a
calculation in one abelian or nonabelian sector is not the same observable as
the regular finite-volume trace.  Different sequences of covers can have
different finite-size sector content even when they converge to the same
group-trace limit
\cite{MaciejkoRayan2022,LenggenhagerEtAl2023}.

Equation \eqref{eq:plancherel-sector-average} is a finite-group identity.  It
must not be promoted to a Peter--Weyl decomposition of the infinite-volume
operator into finite-dimensional irreducible representations of
\(\Gamma_g\).  The obstruction and the vanishing-weight statement can be
separated cleanly.  The first part of the next proposition is a classical
consequence of weak containment, Fell absorption, and Hulanicki's amenability
criterion \cite{Hulanicki1964}; its short proof is recorded to connect that
operator-algebraic fact directly with the finite-group weights in the second
part.

\begin{proposition}[No fixed finite-dimensional bulk sector]
\label{prop:no-finite-dimensional-bulk}
Let \(g\geq2\).
\begin{enumerate}[label=\textup{(\alph*)},leftmargin=2.5em]
\item No nonzero finite-dimensional unitary representation of \(\Gamma_g\)
  extends to a representation of \(C_r^*(\Gamma_g)\).  In particular, the
  regular representation of \(\Gamma_g\) has no nonzero finite-dimensional
  subrepresentation.
\item Let \(G_n=\Gamma_g/\Gamma_n\) be a locally faithful sequence as above.
  If a fixed \(d\)-dimensional irreducible representation factors through
  \(G_n\) for all sufficiently large \(n\), its individual atom in the
  finite-group Plancherel formula has weight
  \[
                         \frac{d^2}{|G_n|}\longrightarrow0.
  \]
\end{enumerate}
Neither assertion prevents Plancherel-weighted aggregates of representations
whose dimensions and multiplicities vary with \(n\) from converging to the
regular trace.
\end{proposition}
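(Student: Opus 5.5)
For part (a), the plan is to use weak containment together with Hulanicki's criterion. Suppose \(\pi:\Gamma_g\to\UU(d)\), \(d\geq1\), extends to a \(*\)-representation of \(C_r^*(\Gamma_g)\). Then \(\pi\) is weakly contained in the left regular representation \(\lambda\). By Fell absorption, \(\lambda\otimes\bar\pi\) is unitarily equivalent to \(d\) copies of \(\lambda\), so \(\pi\otimes\bar\pi\prec\lambda\otimes\bar\pi\cong d\lambda\), and hence \(\pi\otimes\bar\pi\prec\lambda\). Since \(\pi\) is finite-dimensional, \(\pi\otimes\bar\pi\cong\End(\CC^d)\) with the conjugation action contains the trivial representation, spanned by the identity matrix. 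Weak containment is transitive, so the trivial representation \(1_{\Gamma_g}\) is weakly contained in \(\lambda\). Hulanicki's theorem says this forces \(\Gamma_g\) to be amenable. That is false for \(g\geq2\), because \(\Gamma_g\) contains a free subgroup of rank two (for instance, via the surjection onto \(F_g\) sending each \(b_j\) to \(1\)). The second sentence of (a) then follows directly: any subrepresentation of \(\lambda\) is in particular weakly contained in \(\lambda\), so it extends to \(C_r^*(\Gamma_g)\) and must be zero. I would also mention the quicker direct argument for this second claim, since it avoids the heavy machinery. A finite-dimensional invariant subspace of \(\ell^2(\Gamma_g)\) would give a nonzero matrix coefficient \(\gamma\mapsto\langle\lambda(\gamma)\xi,\xi\rangle\) that is almost periodic and lies in \(c_0(\Gamma_g)\), because \(\Gamma_g\) is infinite. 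Such a coefficient would have to vanish, which is a contradiction.

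Part (b) is a counting statement and should be short. The finite-group Plancherel formula assigns the atom of an irreducible \(\pi\in\widehat{G_n}\) the weight \((\dim\pi)^2/|G_n|\); this is the weighting already written in \eqref{eq:plancherel-sector-average}. For a fixed \(d\), this weight tends to zero once \(|G_n|\to\infty\). I would derive \(|G_n|\to\infty\) from local faithfulness. If \(\ell_n>k\), then the quotient map is injective on the word ball of radius \(k/2\), so \(|G_n|\) is at least the cardinality of that ball. Because \(\Gamma_g\) is infinite, these ball sizes are unbounded in \(k\), and \(\ell_n\to\infty\) then gives \(|G_n|\to\infty\).

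For the closing remark, I would refer back to \Cref{prop:finite-cover-moments}. It shows that the full Plancherel-weighted sum in \eqref{eq:plancherel-sector-average} converges to the regular trace, even though every individual fixed atom vanishes. So no contradiction arises. The main care point is the Fell absorption step. One has to check that weak containment is preserved under tensoring with \(\bar\pi\) and that it is transitive, and I would state these as standard facts, citing \cite{Hulanicki1964}, rather than reprove them.
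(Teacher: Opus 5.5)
Your proposal is correct and follows essentially the same route as the paper. For (a) you use weak containment, Fell absorption \(\lambda\otimes\bar\pi\cong\lambda^{\oplus d}\), the invariant identity in \(\pi\otimes\bar\pi\), and Hulanicki's criterion; for (b) you derive \(|G_n|\to\infty\) from local faithfulness, which you make explicit through injectivity on word balls of radius \(k/2\). Your extra \(c_0\)/almost-periodicity argument for the second clause of (a) is also valid, and it has the advantage of using only that \(\Gamma_g\) is infinite, not that it is nonamenable.
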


\begin{proof}
Let \(\lambda\) be the regular representation.  A unitary representation
\(\pi\) extends continuously to \(C_r^*(\Gamma_g)\) exactly when
\(\pi\) is weakly contained in \(\lambda\).  Suppose that such a nonzero
finite-dimensional \(\pi\) exists, and let \(\overline\pi\) be its conjugate.
Weak containment is preserved by tensoring, while Fell absorption gives
\[
       \lambda\otimes\overline\pi
           \cong \lambda^{\oplus\dim\pi}.
\]
The tensor product \(\pi\otimes\overline\pi\) contains the trivial
representation: under
\(\pi\otimes\overline\pi\cong\End(\CC^{\dim\pi})\), the identity operator is
invariant.  It would follow that the trivial representation is weakly
contained in \(\lambda\).  By Hulanicki's regular-representation criterion
for amenability, \(\Gamma_g\) would then be amenable
\cite{Hulanicki1964}.  This is impossible for a closed surface group of genus
\(g\geq2\), which is a non-elementary cocompact Fuchsian group and is
nonamenable.  This proves the first assertion.  A finite-dimensional
subrepresentation of \(\lambda\) would, in particular, extend to
\(C_r^*(\Gamma_g)\), so none exists.

For the second, local faithfulness \(\ell_n\to\infty\) forces
\(|G_n|\to\infty\): otherwise a subsequence would range over finite groups of
uniformly bounded order and could not be injective on arbitrarily large word
balls.  The displayed limit now follows from the exact Plancherel weight in
\eqref{eq:plancherel-sector-average}.
\end{proof}

A finite-dimensional unitary representation of \(\Gamma_g\) does extend to
the full group \(C^*\)-algebra, but by
\Cref{prop:no-finite-dimensional-bulk} it does not factor through the reduced
algebra in which the physical regular-representation spectrum lives.  Such a
sector can consequently contain spectral values absent from the bulk
spectrum \cite{LuxProdanCayley2024}.  Moreover,
\eqref{eq:plancherel-sector-average} has no canonical atomic Peter--Weyl
limit that assigns positive mass to any fixed finite-dimensional
\(\Gamma_g\)-sector.  Representations whose dimensions grow along the
quotient sequence can contribute collectively, but what converges
canonically is the normalized trace, or equivalently the spectral measure.
This is the operator-algebraic content of the coherent approximation
emphasized by Lux and Prodan
\cite{LuxProdanCayley2024,LuxProdan2023}.

The same conclusion appears from the hyperbolic-Bloch side.  Mosseri and
Vidal found that the rank-one Bloch-like spectrum carries a vanishing fraction
of the thermodynamic spectral weight \cite{MosseriVidal2023}.  Shankar and
Maciejko proved more generally that, for every fixed \(k\), the normalized
\(\UU(r)\) hyperbolic-Bloch moment \(m_k^{(r)}\) satisfies
\[
   (\tau_{\Gamma_g}\otimes\tr_N)(\widetilde H^k)
      =\lim_{r\to\infty}m_k^{(r)}.
 \tag{10.7}\label{eq:large-rank-hbt-moments}
\]
The average defining \(m_k^{(r)}\) uses the natural
Atiyah--Bott--Goldman, equivalently topological Yang--Mills, measure on the
rank-\(r\) character space.  Fixed \(r\) is therefore an approximation level,
not a superselection component of positive bulk weight
\cite{ShankarMaciejko2024}.

This averaged moment theorem is compatible with, but logically distinct
from, the asymptotic norm-recovery result for the hyperbolic Bloch transform
\cite{NagyRayan2024}.  The former identifies the limit of normalized traces
of powers of a Hamiltonian; the latter recovers inner products of
compactly supported states from a sequence of finite-rank transforms.
Together they support an infinite-rank reciprocal description, but neither
turns any fixed rank into a positive-weight component of the regular
representation.

\Cref{prop:finite-cover-moments} controls moments and weak convergence.
\Cref{cor:finite-cover-functional-calculus} adds a trace-error bound for
continuous functions, but neither result guarantees that a
proposed open interval is free of all finite-volume eigenvalues: a vanishing
fraction of spurious states is invisible to the limiting measure.  Detecting
a bulk gap requires uniform resolvent control or a converging
periodic-boundary construction adapted to the tiling, as emphasized in
\cite{LuxProdanCayley2024,LuxProdan2023}.  Moreover, a rate expressed in the
matrix size requires quantitative control of how \(\ell_n\) grows with
\(|G_n|\).  \Cref{cor:arithmetic-congruence-rate} supplies this control for
arithmetic principal congruence towers, but residual finiteness alone does
not.

This distinction affects the interpretation of every observable in the
paper.  Before comparing geometry with topology one must specify:

\begin{enumerate}[label=\textup{(\roman*)},leftmargin=2.6em]
\item whether the system is the infinite lattice, a compact quotient, or an
  open patch;
\item the subgroup chain, local injectivity scale, and boundary conditions;
\item whether the observable is sector-resolved, averaged with the
  finite-group Plancherel weights in
  \eqref{eq:plancherel-sector-average}, or defined thermodynamically by the
  canonical group trace;
\item whether couplings encode hyperbolic distances and angles or only graph
  adjacency.
\end{enumerate}

Once these choices are fixed, the nonfactorization results apply without
ambiguity.  Compactification controls the boundary problem; it does not
restore the geometric or spectral information discarded by a \(K\)-class.

\subsection{Six experimental tests}

The mathematical distinction can be tested with existing circuit and
synthetic-matter techniques.

\paragraph{Test 1: topology-preserving, geometry-varying couplings.}
Fabricate or program two systems with the same quotient graph and side
pairings.  In the first, choose couplings to approximate a prescribed
hyperbolic metric.  In the second, deform those couplings continuously while
keeping the graph, gap, and symmetry class fixed.  Compare resonance
frequencies and localization profiles.  The 2025 superconducting framework
of \cite{XuEtAl2025} is especially well suited to this test because its
capacitive couplings encode the metric.

\paragraph{Test 2: Wilson spectroscopy.}
Insert controllable phase twists
\(\bm\theta\in(\RR/2\pi\ZZ)^{2g}\) through the surface cycles and measure
\(\Tr(H_{\bm\theta}^n)\), or reconstruct it from resonances.  Fourier analysis
in \(\bm\theta\) recovers coefficients \(A_{n,\bm q}\) in
\eqref{eq:moment-fourier}.  A nonzero \(\bm q\) coefficient is a direct
witness that the spectrum varies across flat line bundles with the same
\(K^0(S)\)-class.

\paragraph{Test 3: quantum-metric tomography.}
Modulate two flux angles weakly and use integrated excitation rates to
estimate components of \(g^{\mathrm Q}\), following the periodic-driving
principle of \cite{OzawaGoldman}.  Adiabatic cycle measurements provide Berry
curvature.  The comparison tests \eqref{eq:det-bound} and
\eqref{eq:trace-bound}.

\paragraph{Test 4: holomorphicity relative to the Jacobian.}
Use the period matrix to choose complex pairs of flux controls.  On each
two-dimensional slice, test whether the trace bound is saturated.  Saturation
with a consistent orientation is evidence that the band map is
pseudoholomorphic relative to the chosen complex structure.  Repeating the
measurement after a geometry-changing but topology-preserving deformation
tests whether that holomorphic structure is dynamically selected.

\paragraph{Test 5: quantized versus partially filled Hall response.}
On a two-torus of boundary twists, prepare a fully occupied gapped band and
measure the integrated transverse response.  Then move the chemical potential
through the same band without changing its global \(K\)-class.  The first
measurement tests the integer in \eqref{eq:tknn-hall}; the second probes the
dispersion-weighted curvature in \eqref{eq:partial-hall}.  Two devices can
therefore agree in quantized Hall conductance and disagree away from the
quantized limit.

\paragraph{Test 6: nodal geometry at fixed charge.}
Tune a circuit model through a Dirac, Weyl-type, or higher-Clifford band
crossing and reconstruct its resonance cone.  Measure the node location,
tilt, transverse velocity matrix, and low-energy density of states.  A
gap-preserving deformation on a linking sphere leaves the local Chern charge
fixed while these quantities vary.  In a nonabelian hyperbolic semimetal, the
same protocol should be carried out across the representation sectors that
support the codimension-five nodal manifold
\cite{TummuruEtAl2024}.

Each test compares systems whose stable topological data can be held fixed.
The measured differences are therefore non-topological by construction and
arise from the retained geometric or Hamiltonian data, not from a residual
change of phase label.  These operational distinctions, together with the
finite-cover and regular-trace results, provide the ingredients for the
observable-dependent completion assembled in the final section.

\section{Observable-dependent geometric completions of topological band data}
\label{sec:completion}

\subsection{The information hierarchy}

The word ``hierarchy'' refers to factorization of information, not to a
ranking of mathematical importance.  The relevant notion can be stated
without choosing coordinates or a particular Hamiltonian model.

\begin{definition}[Information order and observable factorization]
\label{def:information-order}
Let \(\mathcal D\) and \(\mathcal D'\) be classes of physical records, taken
modulo the gauge equivalences appropriate to the problem.  A map
\(\pi:\mathcal D\to\mathcal D'\) is a \emph{forgetful map} if
\(\pi(D)\) is obtained from \(D\) by discarding specified structure.  We
write \(\mathcal D\succeq\mathcal D'\).  An observable
\(\mathcal O:\mathcal D\to Y\) is determined by \(\mathcal D'\) if there is a
map \(\overline{\mathcal O}:\mathcal D'\to Y\) such that
\[
                    \mathcal O=\overline{\mathcal O}\circ\pi .
\]
The loss is \emph{strict for \(\mathcal O\)} if two records have the same
image under \(\pi\) but different values of \(\mathcal O\).
\end{definition}

This definition is the formal content of the paper's nonfactorization
statements.  It also prevents a common ambiguity: saying that a \(K\)-class
``does not contain'' a quantity means that the quantity fails to descend
along a specified map to \(K\)-theory.

The set-level formulation is sufficient for all counterexamples in this
paper, but the natural domain is often a groupoid.  Let \(\mathbf D\) be the
groupoid whose objects are physical records and whose arrows are the chosen
gauge equivalences.  A structure-forgetting operation is then a functor
\[
                       F:\mathbf D\longrightarrow\mathbf D'.
\]
An observable with values in a category \(\mathbf Y\) descends through \(F\)
if there is a functor
\(\overline{\mathcal O}:\mathbf D'\to\mathbf Y\) and a natural isomorphism
\(\mathcal O\cong\overline{\mathcal O}\circ F\).  Ordinary numerical or
set-valued observables are recovered by taking \(\mathbf Y\) to be a discrete
groupoid.  Passing to connected components
\(\pi_0\mathbf D\) recovers \Cref{def:information-order}.  A functor can lose
information either by identifying nonisomorphic objects or by erasing
automorphisms; the latter is exactly what a coarse moduli space does at a
sector with nontrivial stabilizer.

For a smooth parameter region \(B\), let
\(\rho_B:B\to\cR_r(S)\) be the sector map introduced in
\Cref{def:geometric-band-datum}, and write \(H_x\) for the Hamiltonian in the
sector \(x\in B\).  Locally on a framed representation space, one may choose
a representative \(\rho_x\) of \(\rho_B(x)\).  The geometric input and the
derived band data may be abbreviated as
\begin{align*}
\mathfrak D_{\mathrm{geom}}
  &:=
  (S,\Gamma_g,J,h_J,\mathcal Q,\gamma,\mathsf t;\,
       B,\rho_B,H),\\
\mathfrak D_{\mathrm{band}}
  &:=
  \bigl(\{E_n\},P,\mathscr E_P,
        \nabla^{\mathrm B},F^{\mathrm B},g^{\mathrm Q}\bigr).
\end{align*}
Here \(J\) and \(h_J\) are related by uniformization rather than treated as
independent variables; \(\gamma=\{\gamma_e\}\) denotes the group labels of
\(\mathcal Q\); and \(\mathsf t\) denotes the on-site, hopping, and transport
coefficients.  The family \(H\) is included because distinct microscopic
inputs can yield unitarily equivalent Hamiltonians, while the converse
reconstruction is generally nonunique.

The parameter-space hierarchy is
\[
\begin{tikzcd}[row sep=1.65em]
\mathfrak D_{\mathrm{geom}}
  \arrow[d, "\text{spectral analysis}"]\\
\mathfrak D_{\mathrm{band}}
  \arrow[d, "\text{forget energies, embedding, and metric}"]\\
(\mathscr E_P,\nabla^{\mathrm B})
  \arrow[d, "\text{forget connection}"]\\
{\mathscr E_P}
  \arrow[d, "\text{stable class}"]\\
{[\mathscr E_P]\in K^0(B)}.
\end{tikzcd}
\tag{11.1}\label{eq:information-hierarchy}
\]
The first arrow discards the real-space realization and retains the selected
spectral data.  The second forgets the eigenvalue functions, the embedding
\(P:B\to\operatorname{Gr}(q,N)\), and the quantum metric, but retains the
abstract spectral bundle with its Berry connection.  The third retains only
the topological isomorphism class of that bundle.  The last passes to stable
equivalence.  Curvature \(F^{\mathrm B}\) is written in
\(\mathfrak D_{\mathrm{band}}\) although it is determined by
\(\nabla^{\mathrm B}\), because it is the directly integrated quantity in
Chern and Hall pairings.

There is a distinct sectorwise hierarchy over position space:
\[
\begin{tikzcd}[row sep=1.4em]
(S,\Gamma_g,J,h_J,\mathcal Q,\gamma,\mathsf t;\rho)
  \arrow[d, "\text{construct}"]\\
(E_\rho,\nabla_\rho,H_\rho)
  \arrow[d, "\text{forget }H_\rho"]\\
(E_\rho,\nabla_\rho)
  \arrow[d, "\text{forget }\nabla_\rho"]\\
{E_\rho}
  \arrow[d, "\text{stable class}"]\\
{[E_\rho]\in K^0(S)}.
\end{tikzcd}
\tag{11.2}\label{eq:real-space-hierarchy}
\]
The bundle \(E_\rho\to S\) in \eqref{eq:real-space-hierarchy} and the spectral
bundle \(\mathscr E_P\to B\) in \eqref{eq:information-hierarchy} have
different bases and answer different questions.  The former records a Bloch
sector as a flat bundle over physical position space.  The latter records a
family of eigenspaces over momentum or representation space.

Categorically, the central portion of
\eqref{eq:information-hierarchy} is the shadow on isomorphism classes of
functors
\[
 \mathsf{BandGeom}(B)
   \longrightarrow
 \mathsf{HermVect}^{\nabla}(B)
   \longrightarrow
 \mathsf{Vect}_{\CC}(B).
\]
Here the first groupoid retains an isolated spectral cluster together with
its energies, projector, and quantum geometry; the second retains a Hermitian
bundle with Berry connection; and the third retains only the underlying
complex vector bundle.  The final passage to \(K^0(B)\) is slightly
different: for compact \(B\), it takes the commutative monoid
\(\pi_0\mathsf{Vect}_{\CC}(B)\), under direct sum, to its Grothendieck group
\cite{AtiyahKTheory,KaroubiKTheory}.  Thus ``forget the connection'' and
``stabilize and group-complete'' should not be treated as the same categorical
operation.  The sectorwise hierarchy has the same form with the groupoid of
unitary local systems over \(S\) at its left.

The location of strictness depends on the base and on which equivalences have
already been imposed.  In \eqref{eq:real-space-hierarchy}, all degree-zero
\(E_\rho\) of fixed rank are smoothly isomorphic, so forgetting
\(\nabla_\rho\) is already a drastic collapse.  The final map
\(E_\rho\mapsto[E_\rho]\) causes no additional loss at fixed rank on a
surface, as explained after \eqref{eq:forgetful-flat-holo-k}.  In
\eqref{eq:information-hierarchy}, the final stabilization map may introduce
additional ambiguity when \(B\) has higher dimension, but the earlier arrows
are already strict for dispersion and differential-geometric observables.
Thus strictness depends on the base, the equivalences already imposed, and the
observable under consideration.  The theorem below records the losses
established by the preceding examples.

\begin{theorem}[Strictness of the hierarchy]
\label{thm:strict-hierarchy}
Within hyperbolic band theory, each of the following losses is strict in the
sense of \Cref{def:information-order}:

\begin{enumerate}[label=\textup{(\alph*)},leftmargin=2.5em]
\item \((E_\rho,\nabla_\rho)\mapsto E_\rho\): the representation and flat
  connection are not determined by the smooth bundle, and hence not by
  \([E_\rho]\in K^0(S)\);
\item \((E_\rho,\nabla_\rho,H_\rho)\mapsto[E_\rho]\): the spectrum is not
  determined by the flat-bundle \(K\)-class;
\item \((J,h_J)\mapsto(S,K^*(S))\): the hyperbolic metric and complex
  structure are not determined by the topology or its \(K\)-groups;
\item \((\{E_n\},P)\mapsto P\): dispersion is not determined by an occupied
  or isolated-band projector;
\item \((P,\nabla^{\mathrm B},g^{\mathrm Q})\mapsto[\mathscr E_P]\): Berry
  holonomy, curvature profile, and quantum metric are not determined by the
  spectral-bundle \(K\)-class;
\item \((E,\Phi)\mapsto[E]\): a Higgs field and its spectral curve are not
  determined by the underlying bundle \(K\)-class;
\item the position, embedding, tilt, and transverse velocity tensor of a
  topologically protected nodal manifold are not determined by its local
  \(K\)-theoretic charge.
\end{enumerate}
\end{theorem}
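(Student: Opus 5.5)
The proof will go item by item. For each loss in the list we exhibit two records with the same image under the stated forgetful map but different values of the observable, as \Cref{def:information-order} requires. Almost every witness already appears earlier in the paper, so the work consists in matching each forgetful map to the earlier result whose proof supplies the pair of records.

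For (a), take two characters \(\chi\neq\chi'\in B_{\mathrm{ab}}\). Both line bundles are smoothly trivial: \(\operatorname{Vect}^{\mathrm{top}}_{1,0}(S)\) is a single point, as noted after \eqref{eq:forgetful-flat-holo-k}. Their holonomies along some loop nevertheless differ, so the Wilson function \(W_\gamma\) separates them. This is the strict arrow in \eqref{eq:real-space-hierarchy}, and composing with \Cref{thm:k-collapse} gives the statement for \([E_\rho]\).

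For (b), \Cref{ex:one-loop} gives characters with different one-level spectra, while \Cref{thm:k-collapse} gives equal \(K\)-classes. Alternatively one can cite \Cref{prop:positive-hopping-variation} and \Cref{thm:spectral-nonfactorization}.

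For (c), \Cref{thm:metric-nonfactorization} supplies points of \(\cT_g\) with different \(\lambda_1\), hence distinct hyperbolic metrics, on the fixed surface \(S\) with fixed \(K^*(S)\). One could also appeal to the fact that the Hodge metric \(G_J\) varies, using \Cref{thm:hodge-hessian}.

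For (d), use \eqref{eq:fixed-projector-H}, \Cref{thm:qgt-not-k}(a). For (e), use \Cref{thm:qgt-not-k}(b),(c): the families \eqref{eq:berry-circle-family} and \eqref{eq:metric-circle-family}, together with the isotopic reparametrization \(P\circ\varphi\).

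For (f), use \Cref{prop:higgs-collapse} and \Cref{ex:higgs-scaling}. The family \((E,s\Phi)\) has constant \([E]\) and characteristic coefficients \(s^ja_j(\Phi)\), so when \(a_j(\Phi)\neq0\) the Hitchin image, and hence the spectral curve, changes with \(s\).

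For (g), use \Cref{prop:clifford-nodes}(c). Concretely, the Weyl linearization \eqref{eq:weyl-linearization} with \(\bm k_*\) translated, or with \(v\) rescaled by a positive factor, keeps \(\operatorname{sgn}\det v\) but changes the node position and \(v^{\mathsf T}v\).

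The one point needing care is the claim \enquote{within hyperbolic band theory}. For (d), (e) and (g) the earlier witnesses live on abstract bases \(S^1\), \(S^2\) or local charts, so I will realize them on hyperbolic parameter spaces. Pull the family back along a smooth map from \(B_{\mathrm{ab}}\), or from a coordinate slice of it, to \(S^1\), for instance a single flux angle \(\theta_{a_1}\). Homotopies and equal \(K\)-classes pull back, and a nonconstant pullback keeps the distinguishing Berry phase along the \(a_1\) circle and the distinguishing quantum metric in that direction. For the curvature-profile witness I take a two-torus slice \(C\) of the Jacobian, a degree-one projector on \(C\), and precompose with a non-area-preserving diffeomorphism isotopic to the identity, exactly as in the proof of \Cref{thm:qgt-not-k}.

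I expect this transfer step to be the main obstacle, mild though it is. One has to check that each distinguishing observable survives the pullback rather than being integrated away. Everything else is bookkeeping of earlier results, and I will close the proof by noting that strictness is asserted only for the named observables, consistent with \Cref{def:information-order}.
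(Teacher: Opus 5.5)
Your proposal is correct and follows the paper's proof item by item, using the same witnesses: \Cref{ex:one-loop} with \Cref{thm:k-collapse}, \Cref{thm:metric-nonfactorization}, \eqref{eq:fixed-projector-H}, the circle and reparametrization families of \Cref{thm:qgt-not-k}, the scaling family of \Cref{prop:higgs-collapse}, and \Cref{prop:clifford-nodes}. Your extra step of pulling the \(S^1\) and \(S^2\) witnesses back to flux slices of \(B_{\mathrm{ab}}\) is not in the paper and is sound, with one adjustment: on a two-torus slice the curvature \(F\) of a degree-one projector must vanish somewhere (otherwise \(T^2\to\CC\mathrm P^1\) would be a covering), so "non-area-preserving" no longer forces \(\varphi^*F\neq F\); choose \(\varphi\) directly with \(\varphi^*F\neq F\), e.g.\ a small dilation isotopic to the identity centred at a point where \(F\neq0\).
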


\begin{proof}
For \textup{(a)}, the positive-dimensional spaces \(\cR_r(S)\) contain
inequivalent flat connections, while \Cref{thm:k-collapse} and the
classification of smooth bundles on a surface identify all of their
underlying degree-zero bundles at fixed rank.  For \textup{(b)}, the
one-loop model in \Cref{ex:one-loop} gives an explicit interval of spectral
values over that single \(K\)-class; the general obstruction is
\Cref{thm:spectral-nonfactorization}.  Part \textup{(c)} is witnessed by the
pinching family in \Cref{thm:metric-nonfactorization}, and the stronger local
reconstruction statement is \Cref{thm:hodge-hessian}.

For \textup{(d)}, the Hamiltonians in
\eqref{eq:fixed-projector-H} have the same projector and arbitrary gapped
energy functions.  The circle families and curvature-redistribution
construction in \Cref{thm:qgt-not-k} prove \textup{(e)}.  Part \textup{(f)}
is \Cref{prop:higgs-collapse}, with the scaling family
\eqref{eq:higgs-scaling} as an explicit witness.  Finally,
\Cref{prop:clifford-nodes} proves \textup{(g)} by preserving the
linking-sphere degree while varying the differential and embedded geometry
of the node.
\end{proof}

The theorem does not identify a single universally preferred object at the
top of the hierarchy.  Rather, the data that must be retained depend on the
observable.  A quantized Chern pairing may factor through
\([\mathscr E_P]\), while a density of states, an optical matrix element, or
a metric-reconstruction experiment does not.  The next subsection makes this
dependence explicit.

\subsection{What should be retained}

There is no single minimal completion for every physical question.  A Chern
pairing, a density of states, and an optical transition rate factor through
different quotients of the same Hamiltonian data.  It is nevertheless useful
to specify a baseline record from which the principal single-particle
observables in this paper can be recovered.

\begin{definition}[Geometry-complete gapped band record]
\label{def:geometry-complete}
Let \(B\) be a smooth parameter region, let
\(\rho_B:B\to\cR_r(S)\) be a sector map, and let
\[
                  H:B\longrightarrow\operatorname{Herm}(N)
\]
be a smooth finite Hamiltonian family.  Choose a rank-\(q\) spectral cluster
with \(0<q<N\), separated from the rest of the spectrum.  Denote by
\(\boldsymbol E_{\mathcal I}(x)\) its unordered eigenvalue multiset, counted
with multiplicity, and by
\(\boldsymbol E_{\mathcal I^c}(x)\) the complementary eigenvalue multiset.
Define
\[
 \delta(x)
   :=\min_{\substack{
        E\in\boldsymbol E_{\mathcal I}(x)\\
        E'\in\boldsymbol E_{\mathcal I^c}(x)}}
       |E-E'|>0
\]
its gap to the complementary spectrum.  A
\emph{geometry-complete gapped band record} is
\[
\begin{aligned}
\mathfrak B_{\mathrm{gap}}
=\bigl(&
 \underbrace{S,\Gamma_g,J,h_J,\mathcal Q,\gamma,\mathsf t}
             _{\text{real-space realization}};\;
 \underbrace{B,\rho_B,\dd\mu_B}
             _{\text{sector parametrization and sampling}};\notag\\[-1mm]
&\underbrace{H,\mathcal I,\delta,
             \boldsymbol E_{\mathcal I},P}
             _{\text{spectral data}};\;
 \underbrace{\mathscr E_P,\nabla^{\mathrm B},
             F^{\mathrm B},g^{\mathrm Q}}
             _{\text{band geometry}}
\bigr).
\end{aligned}
\tag{11.3}\label{eq:complete-record}
\]
Here \(\mathcal I\) labels the chosen cluster,
\(P\) is its Riesz projector, and \(\dd\mu_B\) is included whenever sectors
are averaged to form a density of states or response.  In a purely
sectorwise question the measure may be omitted.
\end{definition}

The record in \eqref{eq:complete-record} is a finite-dimensional
control-family or sector-ensemble record.  When
\(B\subset\cR_r(S)\) has fixed finite rank, \(\dd\mu_B\) describes how that
family is scanned experimentally or sampled computationally.  It is not a
thermodynamic Plancherel measure for the nonamenable group \(\Gamma_g\).  For
the exact infinite-lattice density of states, the corresponding record must
instead retain
\[
  \widetilde H\in M_N(C_r^*(\Gamma_g)),\qquad
  \lambda:\Gamma_g\to\mathcal U(\ell^2\Gamma_g),\qquad
  \tau_{\Gamma_g},
\]
and, when an approximation is used, the coherent finite-cover chain or
large-rank \(\UU(r)\) sequence.  The bulk spectral measure
\(\nu_{\mathrm{bulk}}\) is characterized by
\[
 \int_{\RR}f(E)\,\dd\nu_{\mathrm{bulk}}(E)
   =(\tau_{\Gamma_g}\otimes\tr_N)\bigl(f(\widetilde H)\bigr)
\]
for continuous \(f\).  This regular-trace record and the finite-rank record
answer different questions, even when the latter converges to the former.

The use of the unordered multiset
\(\boldsymbol E_{\mathcal I}\), rather than globally labelled functions
\(E_n\), is deliberate.  Individual eigenvalue labels may permute around
internal degeneracies even though the cluster projector \(P\) remains
smooth.  On a local chart without such monodromy one may write
\(\boldsymbol E_{\mathcal I}=\{E_n\}_{n\in\mathcal I}\), recovering the
notation used in the response formulae.  Likewise, \(F^{\mathrm B}\) is the
Hermitian Berry curvature defined in
\eqref{eq:berry-connection-curvature}, while \(g^{\mathrm Q}\) is the
Grassmannian metric pulled back by \(P\).

The record in \eqref{eq:complete-record} is intentionally redundant.
\(H\) determines its spectral projectors, \(P\) determines
\(\mathscr E_P\) as a subbundle of \(B\times\CC^N\), and the ambient
Hermitian structure determines \(\nabla^{\mathrm B}\) and
\(g^{\mathrm Q}\).  Retaining the derived data makes the relevant
forgetful maps and experimentally accessible quantities explicit.  More
importantly, the left-hand portion cannot be reconstructed from the
right-hand portion: the same band geometry can have different real-space
realizations, and the same \(K\)-class can arise from different Hamiltonians,
sector measures, and hyperbolic metrics.

For a continuum model, \(H\) in \eqref{eq:complete-record} is replaced by a
smooth family of self-adjoint elliptic operators with a common domain, or by
an equivalent norm-resolvent formulation; \(\mathcal Q,\gamma,\mathsf t\) are
then replaced by the kinetic operator, potential, and geometric connection.
For a Higgs-coupled model, \((E,\Phi)\), the coupling that inserts \(\Phi\)
into \(H\), and the relevant spectral-curve data must be included.  Merely
recording a point of Higgs moduli without its physical coupling does not make
the holomorphic data observable.

The equivalence relation on records also has to be stated.  Smooth unitary
conjugation of \(H_x\), representation conjugation, and changes of quotient
vertex lifts are gauge equivalences; they transport \(P\),
\(\nabla^{\mathrm B}\), and the remaining tensors in the usual way.  A
mapping-class transformation is different.  It may be quotiented out in an
unmarked problem, but it should be retained in flux tomography because it
relabels the \(2g\) measured cycle directions.  The marking is precisely what
allows Hessians, period matrices, and Wilson coefficients from two devices
to be compared componentwise.  These flux coordinates provide a homology
marking, not a full Teichm\"uller marking: mapping classes in the Torelli
group act trivially on \(H_1(S;\ZZ)\) and are therefore invisible to this
comparison \cite{FarbMargalit}.

The amount of the record actually needed is observable-dependent:
\begingroup
\renewcommand{\arraystretch}{1.12}
\begin{center}
\small
\begin{tabular}{@{}
 >{\raggedright\arraybackslash}p{0.22\textwidth}
 >{\raggedright\arraybackslash}p{0.30\textwidth}
 >{\raggedright\arraybackslash}p{0.38\textwidth}@{}}
\toprule
\textbf{question} & \textbf{sufficient layer} & \textbf{additional datum that must be fixed}\\
\midrule
quantized Hall pairing
 & \([\mathscr E_P]\in K^0(B)\)
 & oriented two-cycle, charge convention, and a filled gapped cluster\\
density of states or Fermi geometry in a controlled sector family
 & \(\boldsymbol E_{\mathcal I}\) and \(\dd\mu_B\)
 & chemical potential, temperature, and sector-sampling prescription\\
thermodynamic bulk density of states
 & \(\widetilde H\in M_N(C_r^*(\Gamma_g))\) and
   \(\tau_{\Gamma_g}\)
 & coherent finite-cover or large-rank approximation scheme, if one is used\\
partially filled Hall response
 & \(\boldsymbol E_{\mathcal I}\), \(F^{\mathrm B}\), and
   \(\dd\mu_B\)
 & filling function, response directions, and a controlled sector family\\
adiabatic transport or localization
 & \(P\), \(\nabla^{\mathrm B}\), and \(g^{\mathrm Q}\)
 & path, control metric, and real-space localization convention\\
Wilson or closed-walk tomography
 & \(H\) over the sector scan
 & marking, group labels, and separation of hopping from holonomy\\
complex-structure reconstruction
 & Hessian of the bottom twisted band
 & homology-marked intersection lattice and the curvature-\(-1\) area normalization\\
\bottomrule
\end{tabular}
\end{center}
\endgroup

The table makes this observable dependence explicit.  Given a chosen family
\(\mathfrak O\) of observables, define
\[
 D_1\sim_{\mathfrak O}D_2
   \quad\Longleftrightarrow\quad
 \mathcal O(D_1)=\mathcal O(D_2)
 \ \text{for every }\mathcal O\in\mathfrak O .
 \tag{11.4}\label{eq:observable-equivalence}
\]
The quotient by \(\sim_{\mathfrak O}\) is the formally minimal information
space for that family.  It need not be represented by a familiar geometric
object, and different observable families yield incomparable quotients.
Equation \eqref{eq:complete-record} is therefore a practical common
refinement.  Categorical minimality belongs instead to the
observable-dependent quotient in \eqref{eq:observable-equivalence}.

For a gapless system, the record must be modified rather than merely
truncated.  At chemical potential \(\mu_{\mathrm F}\), one should retain
\[
\begin{aligned}
\mathfrak B_{\mathrm{gapless}}
=\bigl(&B,\dd\mu_B,H,\mu_{\mathrm F};
   \mathcal F_{\mu_{\mathrm F}},\mathcal Z;\,
   P|_{B\setminus\mathcal Z};\notag\\[-1mm]
 &\{[P|_{S_x^\perp}]\}_{x\in\mathcal Z},
   \dd_\perp H,\text{ higher normal jets as required}\bigr).
\end{aligned}
\tag{11.5}\label{eq:gapless-record}
\]
Here
\(\mathcal F_{\mu_{\mathrm F}}
  =\{x\in B:\mu_{\mathrm F}\in\spec(H_x)\}\)
is the Fermi level set, \(\mathcal Z\) is the band-degeneracy locus,
\(S_x^\perp\) is a small sphere in the normal directions linking
\(\mathcal Z\) at \(x\), and \(\dd_\perp H\) is the normal linearization
that supplies the velocity tensor.  The local class on \(S_x^\perp\)
records the nodal charge; it does not replace the embedding of
\(\mathcal Z\), the Fermi level set, or the normal derivative.  If a regular
Fermi surface rather than a nodal degeneracy is the object of interest, its
induced measure and \(|\nabla E_n|\) are retained as in
\eqref{eq:fermi-coarea}.  As in the gapped case, this is a finite-rank sector
record.  A thermodynamic gapless record must instead use the spectral measure
and projections of the regular operator, together with a specified large-rank
or finite-cover limit if one wishes to relate bulk scaling to finite-rank
nodal geometry.

Finally, the word ``complete'' remains relative to single-particle band
geometry.  It does not include every disorder realization, bath coupling, or
many-body state.  In an interacting system one must retain at least the
relevant Green function or self-energy, and generally the vertex functions
or current correlators required by the chosen response.  There is therefore
no geometry-complete description of a strange metal consisting only of a
noninteracting occupied bundle.  In such a regime even a single-particle
Hamiltonian \(H\), not only its \(K\)-class, is insufficient without
frequency-dependent dynamical data.

\subsection{Open mathematical problems}

The perspective suggests several problems that are both geometric and
physical.

\begin{enumerate}[label=\arabic*.,leftmargin=2.0em]
\item \textbf{Character-variety quantum geometry.}
  Determine the Berry and quantum-metric tensors of nonabelian Bloch bands on
  smooth character-variety strata.  Relate them to the Goldman symplectic form
  and to the K\"ahler metrics obtained through
  Narasimhan--Seshadri theory.

\item \textbf{Holomorphic band maps.}
  Classify finite hyperbolic Bloch Hamiltonians whose isolated-band projector
  is holomorphic relative to \(\Jac(X)\), or pseudoholomorphic on natural
  two-cycles.  Establish finite-range obstructions and approximation
  theorems analogous to those for K\"ahler Chern bands
  \cite{MeraOzawaFlat}.

\item \textbf{Metric reconstruction.}
  Identify which combinations of closed-walk moments or twisted heat traces
  recover geodesic-length and hopping data.  The correct question is not
  whether the spectrum determines the metric uniquely, but which geometric
  features are stable and experimentally resolvable.

\item \textbf{Families over moduli.}
  Construct the occupied-state object globally over representation and Higgs
  moduli, accounting for stabilizers and the absence of an ordinary universal
  bundle on coarse spaces.  Compare equivariant, stack-theoretic, and
  diffeological formulations.

\item \textbf{Noncommutative differential refinement.}
  Develop an invariant that combines gap topology with Wilson data, spectral
  moments, and quantum geometry for \(C_r^*(\Gamma_g,\sigma)\).  The
  noncommutative-torus comparison suggests retaining a trace and cyclic
  cocycles, while the present results require additional metric and
  Hamiltonian data.  A spectral triple or a differential refinement of a
  crossed-product cycle is a more plausible starting point than \(K\)-groups
  alone.

\item \textbf{Large-rank thermodynamic Bloch limit.}
  The starting point must be the target of the limiting process.  For a
  finite-range Hamiltonian
  \(\widetilde H\in M_N(C_r^*(\Gamma_g))\), the bulk density of states is
  defined by the canonical state
  \(\tau_{\Gamma_g}\otimes\tr_N\).  It is not obtained from an atomic
  Peter--Weyl measure on the finite-dimensional unitary dual.  Indeed, no
  nonzero finite-dimensional representation extends to
  \(C_r^*(\Gamma_g)\), and the finite-group Plancherel weight of any fixed
  \(d\)-dimensional sector, whenever it occurs in \(G_n\), is
  \(d^2/|G_n|\to0\); see
  \Cref{prop:no-finite-dimensional-bulk}.  The rank-one vanishing observed
  by Mosseri and Vidal is the first case of this thermodynamic distinction
  \cite{MosseriVidal2023,LuxProdanCayley2024}.

  Two finite-dimensional approximation mechanisms are nevertheless
  available, and they should not be conflated.  Along a locally faithful
  finite-quotient chain, every fixed polynomial trace is eventually
  \emph{exact}, with the continuous-functional-calculus error controlled by
  \eqref{eq:continuous-trace-bound}; for arithmetic principal congruence
  towers, \Cref{cor:arithmetic-congruence-rate} converts this into explicit
  quotient-size rates.  Separately, Shankar and Maciejko prove that the
  normalized \(\UU(r)\) hyperbolic-Bloch moment, averaged with the
  Atiyah--Bott--Goldman or topological Yang--Mills measure, converges for
  each fixed moment order as \(r\to\infty\)
  \cite{ShankarMaciejko2024}.  The asymptotic norm-recovery theorem of Nagy
  and the author for the hyperbolic Bloch transform is a third, compatible
  statement about states rather than density-of-states moments
  \cite{NagyRayan2024}.

  The open problem is therefore quantitative and structural, not merely the
  existence of a vague ``large-representation limit.''  Beyond the arithmetic
  congruence case, determine the relation between cover injectivity length,
  quotient size, and errors for resolvents, Fermi projections, and response
  functions; compute the
  large-rank corrections and the admissible joint limits in moment order
  and rank; identify which Berry, quantum-metric, Higgs, or
  character-variety data have regular-trace descendants; and decide whether
  finite-cover Plancherel averages, Yang--Mills averages, asymptotic Bloch
  transforms, and supercell constructions admit a common comparison framework
  \cite{LuxProdan2023,LenggenhagerEtAl2023}.

\item \textbf{Nodal geometry on character spaces.}
  Develop transversality and local-index theorems for Dirac- and
  Clifford-type nodes on smooth character-variety strata, then study how the
  nodal sets meet reducible or singular strata.  Determine which signatures
  of finite-rank nodal geometry persist under the large-rank or
  regular-representation limit and contribute to the bulk scaling, since no
  fixed finite-rank character space has positive thermodynamic spectral
  weight.

\item \textbf{Interacting hyperbolic metals.}
  Starting from a specified Hubbard, gauge-coupled, or other many-body model,
  determine how hyperbolic density of states, frustration, connectivity, and
  quantum geometry enter the self-energy, vertex functions, and transport.
  A proposed strange-metal or Planckian regime must be tested through
  quasiparticle lifetimes, spectral functions, and the relevant temperature
  and frequency scalings.

\end{enumerate}

\subsection{Conclusion}

The explanatory and theorem-level strands meet here.  Hyperbolic band theory
is moduli-theoretic and differential-geometric before it is \(K\)-theoretic:
rank-one sectors are flat line bundles organized by a polarized Jacobian,
nonabelian sectors are flat vector bundles organized by stratified character
spaces, and Higgs fields introduce spectral curves.  Holomorphy is
kinematical when it organizes these spaces, but dynamical when the Hodge star,
kinetic operator, metric-dependent couplings, transport, Higgs field, or
spectral projector responds to the complex structure.  It then controls
dispersion, heat traces, effective mass, Berry transport, quantum metric, and
pseudoholomorphic saturation.  The passage from geometric organization to
physical dependence is the explanatory centre of the manuscript.

The mathematical argument gives this explanation a precise form.  Its first
principal component is a common observable-factorization framework across
fibrewise \(K^0(S)\), occupied-family \(K^0(B)\), and operator-algebraic
\(K\)-theory, together with explicit strictness results for Wilson-weighted
spectra, Higgs data, band geometry, partially filled response, and Fermi and
nodal geometry.  Its second is a sector--bulk synthesis that places the
classical weak-containment obstruction, exact finite-propagation identities,
the continuous-functional-calculus estimate, its arithmetic congruence-tower
rates, and the large-rank thermodynamic results in one hyperbolic-band
framework.  Together they
identify the datum that is forgotten, exhibit an observable that still
depends on it, and specify when that dependence enters sector-resolved or
bulk physics.

\(K\)-theory nevertheless sees exact and indispensable parts of the
structure.  It records stable bundle classes, quantized Chern pairings, gap
labels, and robust nodal charges.  These are positive results for topology,
not exceptions to the argument.  Beyond this stable information, however, the
same class does not determine a Wilson-weighted spectrum, a gap width, Berry
holonomy, curvature distribution, quantum metric, Fermi geometry, or the
position, velocity, tilt, and embedding of a protected node.  In interacting
regimes such as strange metals, where quasiparticle bands may cease to be the
correct organizing objects, still more frequency-dependent dynamical and
correlation data are required.

The sector--bulk distinction is equally essential to the conclusion.
Finite-rank geometric structures govern sector-resolved observables and
finite periodic devices, but they do not form a Peter--Weyl decomposition of
the infinite-lattice spectrum.  The thermodynamic density of states is the
spectral measure of the regular representation: each fixed
finite-dimensional sector has zero bulk weight, while coherent finite-cover
and large-rank hyperbolic-Bloch sequences can recover its moments, with
explicit quotient-size rates for arithmetic congruence covers.  Thus the
geometric content of controlled sectors and the spectral content of the bulk
are related by specified limits, not by an atomic measure on finite-rank
representations.

Earlier results fit naturally into this synthesis.  The Kotani--Sunada
Hessian formula and the Hodge--Torelli reconstruction mechanism show exactly
how geometric information can reappear in ground-state spectral data.  The
recovered period data determine the underlying surface but not a full
Teichm\"uller marking.  Thermodynamic Bloch theory, meanwhile, supplies the
large-rank density-of-states limit.  This article brings those mechanisms
together with the discrete, holomorphic, quantum-geometric, finite-cover, and
operator-algebraic nonfactorization results developed here.  The result is a
unified account of which physical quantities stable topology determines and
which continue to depend on geometry, holomorphy, or the choice of
thermodynamic limit.

The appropriate conclusion is therefore constructive:
\[
\boxed{\begin{gathered}
\text{\(K\)-theory classifies what remains stable;}\\[-1mm]
\text{geometry determines much of what remains physical.}
\end{gathered}}
\]
Hyperbolic matter makes both halves of this statement unusually visible.

\appendix

\section{The \texorpdfstring{\(K\)}{K}-groups of a closed oriented surface}
\label{app:k-surface}

For completeness, we recall the computation used in
\Cref{thm:k-collapse}; standard foundations and conventions for complex
\(K\)-theory may be found in \cite{AtiyahKTheory,KaroubiKTheory}.  Give \(S\)
its standard CW structure with one zero-cell, \(2g\) one-cells, and one
two-cell attached by the word in \eqref{eq:surface-group}.  The
Atiyah--Hirzebruch spectral sequence for complex \(K\)-theory has
\[
             E_2^{p,q}=H^p(S;K^q(\mathrm{pt})).
             \tag{A.1}\label{eq:AHSS}
\]
Complex Bott periodicity gives \(K^{2k}(\mathrm{pt})=\ZZ\) and
\(K^{2k+1}(\mathrm{pt})=0\).  Because \(S\) has dimension two, there are no
possible nonzero differentials affecting total degrees zero or one.
Therefore the associated graded groups are
\begin{align}
\operatorname{gr}K^0(S)
   &\cong H^0(S;\ZZ)\oplus H^2(S;\ZZ)
     \cong\ZZ\oplus\ZZ, \notag\\
\operatorname{gr}K^1(S)
   &\cong H^1(S;\ZZ)
     \cong\ZZ^{2g}.
     \tag{A.2}\label{eq:graded-k}
\end{align}
All groups involved are free abelian, so the extension splits as a group.
The \(H^0\)-component is rank, and the \(H^2\)-component is the first Chern
class.  This proves \eqref{eq:k-surface}.

One may see the reduced statement directly.  Stabilized complex bundles are
classified by homotopy classes of maps to \(BU\).  On a two-complex, only the
two-stage Postnikov truncation matters:
\[
                       BU[2]\simeq K(\ZZ,2).
                       \tag{A.3}\label{eq:BU-two-stage}
\]
The resulting bijection
\[
             [S,BU]\cong H^2(S;\ZZ)
             \tag{A.4}\label{eq:BU-H2}
\]
is induced by \(c_1\).  In particular, a rank-\(r\) complex bundle of degree
zero is stably trivial.  Since complex vector bundles of fixed rank over a
surface are themselves classified by \(c_1\), it is also smoothly isomorphic
to the trivial rank-\(r\) bundle.  Its flat or holomorphic structure can
nevertheless be nontrivial, which is the distinction exploited in the main
text.

\section{Explicit genus-two models}
\label{app:genus-two}

\subsection{A scalar four-flux band}

For \(g=2\),
\[
 \Gamma_2=
 \langle a_1,b_1,a_2,b_2\mid[a_1,b_1][a_2,b_2]=1\rangle.
 \tag{B.1}\label{eq:genus-two-group}
\]
A character is determined by four angles
\[
 \chi(a_j)=\e^{\ii\theta_{a_j}},
 \qquad
 \chi(b_j)=\e^{\ii\theta_{b_j}}.
 \tag{B.2}\label{eq:four-fluxes}
\]
Consider a one-site quotient with loops carrying these four labels and one
additional loop labelled \(a_1a_2\).  With scalar hoppings
\(t_{a_j},t_{b_j},s\), its single energy is
\begin{align}
E(\bm\theta)
={}&v
+2\sum_{j=1}^2
  \left(
    |t_{a_j}|\cos(\theta_{a_j}+\phi_{a_j})
   +|t_{b_j}|\cos(\theta_{b_j}+\phi_{b_j})
  \right)\notag\\
&+2|s|\cos(\theta_{a_1}+\theta_{a_2}+\phi_s),
\tag{B.3}\label{eq:genus-two-scalar}
\end{align}
where the \(\phi\)'s are hopping phases.  Its gradient is a vector of
cycle-current responses.  For example,
\[
 \frac{\partial E}{\partial\theta_{a_1}}
 =-2|t_{a_1}|\sin(\theta_{a_1}+\phi_{a_1})
  -2|s|\sin(\theta_{a_1}+\theta_{a_2}+\phi_s).
 \tag{B.4}\label{eq:genus-two-velocity}
\]
All four-flux points represent line bundles with class
\([\cO_X]\in K^0(X)\).  The band can have arbitrary nonzero width.  Because
the Hilbert space is one-dimensional, its projector is constant and its
quantum metric vanishes.  This example cleanly separates spectral variation
from eigenstate geometry: it proves spectral nonfactorization without using
projector variation.  Nontrivial quantum geometry requires at least two
orbitals, as in the next example.

\subsection{A two-orbital band with quantum geometry}

Let
\[
        H(\bm\theta)=d_0(\bm\theta)I_2+
                     \bm d(\bm\theta)\cdot\bm\sigma,
        \qquad \bm d(\bm\theta)\neq0,
        \tag{B.5}\label{eq:two-band}
\]
where \(\bm\sigma=(\sigma_1,\sigma_2,\sigma_3)\) is the vector of Pauli
matrices and every component of \(\bm d\) is a finite Fourier polynomial in
the four fluxes, as in \eqref{eq:moment-fourier}.  The energies are
\[
                  E_\pm=d_0\pm|\bm d|.
                  \tag{B.6}\label{eq:two-band-energies}
\]
Writing \(\widehat{\bm d}=\bm d/|\bm d|\), the lower-band quantum geometry is
\begin{align}
 g_{\mu\nu}^{\mathrm Q}
   &=\frac14\,
     \partial_\mu\widehat{\bm d}\cdot
     \partial_\nu\widehat{\bm d},\notag\\
 F_{\mu\nu}
   &=\frac12\,
     \widehat{\bm d}\cdot
     \left(
       \partial_\mu\widehat{\bm d}\times
       \partial_\nu\widehat{\bm d}
     \right),
     \tag{B.7}\label{eq:two-band-qgt}
\end{align}
with the curvature sign reversed for the upper band.  A gap-preserving
deformation of \(\bm d\) leaves the \(K\)-class fixed but can change the
tensors in \eqref{eq:two-band-qgt}.  Precomposing
\(\widehat{\bm d}\) with a
diffeomorphism of the four-torus isotopic to the identity gives an explicit
curvature-redistribution family of the type used in
\Cref{thm:qgt-not-k}.

On a two-dimensional flux slice \(C\subset B_{\mathrm{ab}}\), the map
\(\widehat{\bm d}|_C:C\to S^2\cong\CC\mathrm P^1\) is
pseudoholomorphic exactly when the horizontal derivatives satisfy
\eqref{eq:horizontal-CR} with a consistent choice of sign.  The equality in
\eqref{eq:holomorphic-saturation} can therefore be checked directly from the
Fourier coefficients of a genus-two Bloch matrix.

\section*{Acknowledgements}

The author acknowledges Junaid Aftab, Johanna Erdmenger, Cameron Krulewski,
Joseph Maciejko, Ahmed Adel Mahmoud Taha Hassan, Matilde Marcolli, Ren\'e
Meyer, Riccardo Sorbello, and Ronny Thomale for insightful discussions on
these topics. The author extends a special gratitude
to Joseph Maciejko for a careful reading and for pointing out relevant results
in the hyperbolic setting regarding the thermodynamic Bloch limit. Crucial
steps in completing this work occurred during the Workshop on Quasicrystals
in Fundamental Physics, a programme of the International Centre for
Mathematical Sciences (ICMS) that was held during July 20--24, 2026 at the
Edinburgh Futures Institute (EFI).
The author is grateful to the organizers---Latham Boyle, Johanna Erdmenger,
and Justin Kulp---as well as the ICMS and EFI for facilitating a stimulating,
interdisciplinary environment.

The author acknowledges partial support from a Natural Sciences and
Engineering Research Council of Canada (NSERC) Discovery Grant.

\begingroup
\small
\bibliographystyle{unsrtnat}
\bibliography{BeyondKTheory}

\begin{thebibliography}{89}
\providecommand{\natexlab}[1]{#1}
\providecommand{\url}[1]{\texttt{#1}}
\expandafter\ifx\csname urlstyle\endcsname\relax
  \providecommand{\doi}[1]{doi: #1}\else
  \providecommand{\doi}{doi: \begingroup \urlstyle{rm}\Url}\fi

\bibitem[Kitaev(2009)]{KitaevPeriodicTable}
Alexei Kitaev.
\newblock Periodic table for topological insulators and superconductors.
\newblock In Vladimir Lebedev and Mikhail Feigel'man, editors, \emph{Advances
  in Theoretical Physics: Landau Memorial Conference}, volume 1134 of \emph{AIP
  Conference Proceedings}, pages 22--30. American Institute of Physics,
  Melville, NY, 2009.
\newblock \doi{10.1063/1.3149495}.

\bibitem[Prodan and Schulz-Baldes(2016)]{ProdanSchulzBaldes}
Emil Prodan and Hermann Schulz-Baldes.
\newblock \emph{Bulk and Boundary Invariants for Complex Topological
  Insulators: From {K}-Theory to Physics}.
\newblock Mathematical Physics Studies. Springer, Cham, 2016.
\newblock \doi{10.1007/978-3-319-29351-6}.

\bibitem[Narasimhan and Seshadri(1965)]{NarasimhanSeshadri}
M.~S. Narasimhan and C.~S. Seshadri.
\newblock Stable and unitary vector bundles on a compact {Riemann} surface.
\newblock \emph{Ann. of Math. (2)}, 82\penalty0 (3):\penalty0 540--567, 1965.
\newblock \doi{10.2307/1970710}.

\bibitem[Donaldson(1983)]{DonaldsonNS}
S.~K. Donaldson.
\newblock A new proof of a theorem of {Narasimhan} and {Seshadri}.
\newblock \emph{J. Differential Geom.}, 18\penalty0 (2):\penalty0 269--277,
  1983.
\newblock \doi{10.4310/jdg/1214437664}.

\bibitem[Kotani and Sunada(2000)]{KotaniSunada2000}
Motoko Kotani and Toshikazu Sunada.
\newblock Albanese maps and off diagonal long time asymptotics for the heat
  kernel.
\newblock \emph{Comm. Math. Phys.}, 209\penalty0 (3):\penalty0 633--670, 2000.
\newblock \doi{10.1007/s002200050033}.

\bibitem[Farb and Margalit(2012)]{FarbMargalit}
Benson Farb and Dan Margalit.
\newblock \emph{A Primer on Mapping Class Groups}, volume~49 of \emph{Princeton
  Mathematical Series}.
\newblock Princeton University Press, Princeton, 2012.
\newblock \doi{10.1515/9781400839049}.

\bibitem[Colbois et~al.(2025)Colbois, Provenzano, and
  Savo]{ColboisProvenzanoSavo2025}
Bruno Colbois, Luigi Provenzano, and Alessandro Savo.
\newblock Magnetic ground states and the conformal class of a surface.
\newblock \href{https://arxiv.org/abs/2503.16940} {arXiv:2503.16940}, 2025.
\newblock Preprint.

\bibitem[Maciejko and Rayan(2021)]{MaciejkoRayan2021}
Joseph Maciejko and Steven Rayan.
\newblock Hyperbolic band theory.
\newblock \emph{Sci. Adv.}, 7\penalty0 (36):\penalty0 eabe9170, 2021.
\newblock \doi{10.1126/sciadv.abe9170}.

\bibitem[Boettcher et~al.(2022)Boettcher, Gorshkov, Koll\'ar, Maciejko, Rayan,
  and Thomale]{BoettcherEtAl2022}
Igor Boettcher, Alexey~V. Gorshkov, Alicia~J. Koll\'ar, Joseph Maciejko, Steven
  Rayan, and Ronny Thomale.
\newblock Crystallography of hyperbolic lattices.
\newblock \emph{Phys. Rev. B}, 105\penalty0 (12):\penalty0 125118, 2022.
\newblock \doi{10.1103/PhysRevB.105.125118}.

\bibitem[Maciejko and Rayan(2022)]{MaciejkoRayan2022}
Joseph Maciejko and Steven Rayan.
\newblock Automorphic {Bloch} theorems for hyperbolic lattices.
\newblock \emph{Proc. Natl. Acad. Sci. USA}, 119\penalty0 (9):\penalty0
  e2116869119, 2022.
\newblock \doi{10.1073/pnas.2116869119}.

\bibitem[Kienzle and Rayan(2022)]{KienzleRayan2022}
Elliot Kienzle and Steven Rayan.
\newblock Hyperbolic band theory through {Higgs} bundles.
\newblock \emph{Adv. Math.}, 409:\penalty0 108664, 2022.
\newblock \doi{10.1016/j.aim.2022.108664}.

\bibitem[Nagy and Rayan(2024)]{NagyRayan2024}
\'Akos Nagy and Steven Rayan.
\newblock On the hyperbolic {Bloch} transform.
\newblock \emph{Ann. Henri Poincar\'e}, 25\penalty0 (3):\penalty0 1713--1732,
  2024.
\newblock \doi{10.1007/s00023-023-01336-8}.

\bibitem[Koc\'abov\'a and \v{S}\v{t}ov\'i\v{c}ek(2008)]{KocabovaStovicek2008}
Pavla Koc\'abov\'a and Pavel \v{S}\v{t}ov\'i\v{c}ek.
\newblock Generalized {Bloch} analysis and propagators on {Riemannian}
  manifolds with a discrete symmetry.
\newblock \emph{J. Math. Phys.}, 49\penalty0 (3):\penalty0 033518, 2008.
\newblock \doi{10.1063/1.2898484}.

\bibitem[Gruber(2001)]{Gruber2001}
Michael~J. Gruber.
\newblock Noncommutative {Bloch} theory.
\newblock \emph{J. Math. Phys.}, 42\penalty0 (6):\penalty0 2438--2465, 2001.
\newblock \doi{10.1063/1.1369122}.

\bibitem[Marcolli and Mathai(1999)]{MarcolliMathai1999}
Matilde Marcolli and Varghese Mathai.
\newblock Twisted index theory on good orbifolds, {I}: noncommutative {Bloch}
  theory.
\newblock \emph{Commun. Contemp. Math.}, 1\penalty0 (4):\penalty0 553--587,
  1999.
\newblock \doi{10.1142/S0219199799000213}.

\bibitem[Marcolli and Mathai(2001)]{MarcolliMathai2001}
Matilde Marcolli and Varghese Mathai.
\newblock Twisted higher index theory on good orbifolds, {II}: fractional
  quantum numbers.
\newblock \emph{Comm. Math. Phys.}, 217\penalty0 (1):\penalty0 55--87, 2001.
\newblock \doi{10.1007/s002200000351}.

\bibitem[Marcolli and Mathai(2006)]{MarcolliMathai2006}
Matilde Marcolli and Varghese Mathai.
\newblock Towards the fractional quantum {Hall} effect: a noncommutative
  geometry perspective.
\newblock In Caterina Consani and Matilde Marcolli, editors,
  \emph{Noncommutative Geometry and Number Theory}, volume E37 of \emph{Aspects
  of Mathematics}, pages 235--261. Vieweg, Wiesbaden, 2006.
\newblock \doi{10.1007/978-3-8348-0352-8_12}.

\bibitem[Rayan(2026)]{RayanQuantumMatter2026}
Steven Rayan.
\newblock Quantum matter from algebraic geometry and number theory.
\newblock In Yang-Hui He, Mo-Lin Ge, Cheng-Ming Bai, Jiakang Bao, and Edward
  Hirst, editors, \emph{Nankai Symposium on Mathematical Dialogues}, pages
  289--292. Springer Nature Singapore, Singapore, 2026.
\newblock \doi{10.1007/978-981-19-2328-9_33}.
\newblock CIM-DIALOGUES 2021.

\bibitem[Mosseri and Vidal(2023)]{MosseriVidal2023}
R.~Mosseri and J.~Vidal.
\newblock Density of states of tight-binding models in the hyperbolic plane.
\newblock \emph{Phys. Rev. B}, 108\penalty0 (3):\penalty0 035154, 2023.
\newblock \doi{10.1103/PhysRevB.108.035154}.

\bibitem[Lux and Prodan(2024)]{LuxProdanCayley2024}
Fabian~R. Lux and Emil Prodan.
\newblock Spectral and combinatorial aspects of {Cayley}-crystals.
\newblock \emph{Ann. Henri Poincar\'e}, 25\penalty0 (8):\penalty0 3563--3602,
  2024.
\newblock \doi{10.1007/s00023-023-01373-3}.

\bibitem[Lux and Prodan(2023)]{LuxProdan2023}
Fabian~R. Lux and Emil Prodan.
\newblock Converging periodic boundary conditions and detection of topological
  gaps on regular hyperbolic tessellations.
\newblock \emph{Phys. Rev. Lett.}, 131\penalty0 (17):\penalty0 176603, 2023.
\newblock \doi{10.1103/PhysRevLett.131.176603}.

\bibitem[Shankar and Maciejko(2024)]{ShankarMaciejko2024}
G.~Shankar and Joseph Maciejko.
\newblock Hyperbolic lattices and two-dimensional {Yang--Mills} theory.
\newblock \emph{Phys. Rev. Lett.}, 133\penalty0 (14):\penalty0 146601, 2024.
\newblock \doi{10.1103/PhysRevLett.133.146601}.

\bibitem[Lenggenhager et~al.(2023)Lenggenhager, Maciejko, and
  Bzdu\v{s}ek]{LenggenhagerEtAl2023}
Patrick~M. Lenggenhager, Joseph Maciejko, and Tom\'a\v{s} Bzdu\v{s}ek.
\newblock {Non-Abelian} hyperbolic band theory from supercells.
\newblock \emph{Phys. Rev. Lett.}, 131\penalty0 (22):\penalty0 226401, 2023.
\newblock \doi{10.1103/PhysRevLett.131.226401}.

\bibitem[Ikeda et~al.(2021)Ikeda, Aoki, and Matsuki]{IkedaAokiMatsuki}
Kazuki Ikeda, Shoto Aoki, and Yoshiyuki Matsuki.
\newblock Hyperbolic band theory under magnetic field and {Dirac} cones on a
  higher genus surface.
\newblock \emph{J. Phys.: Condens. Matter}, 33\penalty0 (48):\penalty0 485602,
  2021.
\newblock \doi{10.1088/1361-648X/ac24c4}.

\bibitem[Urwyler et~al.(2022)Urwyler, Lenggenhager, Boettcher, Thomale,
  Neupert, and Bzdu\v{s}ek]{UrwylerEtAl2022}
David~M. Urwyler, Patrick~M. Lenggenhager, Igor Boettcher, Ronny Thomale, Titus
  Neupert, and Tom\'a\v{s} Bzdu\v{s}ek.
\newblock Hyperbolic topological band insulators.
\newblock \emph{Phys. Rev. Lett.}, 129\penalty0 (24):\penalty0 246402, 2022.
\newblock \doi{10.1103/PhysRevLett.129.246402}.

\bibitem[Tummuru et~al.(2024)Tummuru, Chen, Lenggenhager, Neupert, Maciejko,
  and Bzdu\v{s}ek]{TummuruEtAl2024}
Tarun Tummuru, Anffany Chen, Patrick~M. Lenggenhager, Titus Neupert, Joseph
  Maciejko, and Tom\'a\v{s} Bzdu\v{s}ek.
\newblock Hyperbolic non-{Abelian} semimetal.
\newblock \emph{Phys. Rev. Lett.}, 132\penalty0 (20):\penalty0 206601, 2024.
\newblock \doi{10.1103/PhysRevLett.132.206601}.

\bibitem[Koll\'ar et~al.(2019)Koll\'ar, Fitzpatrick, and
  Houck]{KollarFitzpatrickHouck}
Alicia~J. Koll\'ar, Mattias Fitzpatrick, and Andrew~A. Houck.
\newblock Hyperbolic lattices in circuit quantum electrodynamics.
\newblock \emph{Nature}, 571:\penalty0 45--50, 2019.
\newblock \doi{10.1038/s41586-019-1348-3}.

\bibitem[Yu et~al.(2020)Yu, Piao, and Park]{YuPiaoPark}
Sunkyu Yu, Xianji Piao, and Namkyoo Park.
\newblock Topological hyperbolic lattices.
\newblock \emph{Phys. Rev. Lett.}, 125\penalty0 (5):\penalty0 053901, 2020.
\newblock \doi{10.1103/PhysRevLett.125.053901}.

\bibitem[Zhang et~al.(2022)Zhang, Yuan, Sun, Sun, and Zhang]{ZhangEtAl2022}
Weixuan Zhang, Hao Yuan, Na~Sun, Houjun Sun, and Xiangdong Zhang.
\newblock Observation of novel topological states in hyperbolic lattices.
\newblock \emph{Nat. Commun.}, 13:\penalty0 2937, 2022.
\newblock \doi{10.1038/s41467-022-30631-x}.

\bibitem[Xu et~al.(2025)Xu, Mahmoud, Gorgichuk, Thomale, Rayan, and
  Mariantoni]{XuEtAl2025}
Xicheng Xu, Ahmed~Adel Mahmoud, Noah Gorgichuk, Ronny Thomale, Steven Rayan,
  and Matteo Mariantoni.
\newblock A scalable superconducting circuit framework for emulating physics in
  hyperbolic space.
\newblock \href{https://arxiv.org/abs/2510.23827} {arXiv:2510.23827}, 2025.
\newblock Preprint.

\bibitem[L\"uck(1994)]{Luck1994}
Wolfgang L\"uck.
\newblock Approximating {$L^2$}-invariants by their finite-dimensional
  analogues.
\newblock \emph{Geom. Funct. Anal.}, 4\penalty0 (4):\penalty0 455--481, 1994.
\newblock \doi{10.1007/BF01896404}.

\bibitem[Ab\'ert et~al.(2014)Ab\'ert, Thom, and Vir\'ag]{AbertThomVirag2014}
Mikl\'os Ab\'ert, Andreas Thom, and B\'alint Vir\'ag.
\newblock {Benjamini--Schramm} convergence and pointwise convergence of the
  spectral measure.
\newblock Preprint, 2014.
\newblock URL \url{https://www.renyi.hu/~abert/luckapprox.pdf}.

\bibitem[Hulanicki(1964)]{Hulanicki1964}
Andrzej Hulanicki.
\newblock Groups whose regular representation weakly contains all unitary
  representations.
\newblock \emph{Studia Math.}, 24\penalty0 (1):\penalty0 37--59, 1964.
\newblock \doi{10.4064/sm-24-1-27-59}.

\bibitem[Katz et~al.(2007)Katz, Schaps, and Vishne]{KatzSchapsVishne2007}
Mikhail~G. Katz, Mary Schaps, and Uzi Vishne.
\newblock Logarithmic growth of systole of arithmetic {Riemann} surfaces along
  congruence subgroups.
\newblock \emph{J. Differential Geom.}, 76\penalty0 (3):\penalty0 399--422,
  2007.
\newblock \doi{10.4310/jdg/1180135693}.

\bibitem[Roy(2014)]{Roy2014}
Rahul Roy.
\newblock Band geometry of fractional topological insulators.
\newblock \emph{Phys. Rev. B}, 90\penalty0 (16):\penalty0 165139, 2014.
\newblock \doi{10.1103/PhysRevB.90.165139}.

\bibitem[Mera and Ozawa(2021{\natexlab{a}})]{MeraOzawaKahler}
Bruno Mera and Tomoki Ozawa.
\newblock {K}\"ahler geometry and {Chern} insulators: Relations between
  topology and the quantum metric.
\newblock \emph{Phys. Rev. B}, 104\penalty0 (4):\penalty0 045104,
  2021{\natexlab{a}}.
\newblock \doi{10.1103/PhysRevB.104.045104}.

\bibitem[Mera and Ozawa(2021{\natexlab{b}})]{MeraOzawaFlat}
Bruno Mera and Tomoki Ozawa.
\newblock Engineering geometrically flat {Chern} bands with {Fubini--Study}
  {K}\"ahler structure.
\newblock \emph{Phys. Rev. B}, 104\penalty0 (11):\penalty0 115160,
  2021{\natexlab{b}}.
\newblock \doi{10.1103/PhysRevB.104.115160}.

\bibitem[Farkas and Kra(1992)]{FarkasKra}
Hershel~M. Farkas and Irwin Kra.
\newblock \emph{Riemann Surfaces}, volume~71 of \emph{Graduate Texts in
  Mathematics}.
\newblock Springer, New York, second edition, 1992.
\newblock \doi{10.1007/978-1-4612-2034-3}.

\bibitem[Birkenhake and Lange(2004)]{BirkenhakeLange}
Christina Birkenhake and Herbert Lange.
\newblock \emph{Complex Abelian Varieties}, volume 302 of \emph{Grundlehren der
  mathematischen Wissenschaften}.
\newblock Springer, Berlin, second edition, 2004.
\newblock \doi{10.1007/978-3-662-06307-1}.

\bibitem[Atiyah and Bott(1983)]{AtiyahBott}
Michael~F. Atiyah and Raoul Bott.
\newblock The {Yang--Mills} equations over {Riemann} surfaces.
\newblock \emph{Philos. Trans. Roy. Soc. London Ser. A}, 308\penalty0
  (1505):\penalty0 523--615, 1983.
\newblock \doi{10.1098/rsta.1983.0017}.

\bibitem[Goldman(1984)]{GoldmanSymplectic}
William~M. Goldman.
\newblock The symplectic nature of fundamental groups of surfaces.
\newblock \emph{Adv. Math.}, 54\penalty0 (2):\penalty0 200--225, 1984.
\newblock \doi{10.1016/0001-8708(84)90040-9}.

\bibitem[Atiyah and Singer(1971)]{AtiyahSingerFamilies}
Michael~F. Atiyah and Isadore~M. Singer.
\newblock The index of elliptic operators. {IV}.
\newblock \emph{Ann. of Math. (2)}, 93\penalty0 (1):\penalty0 119--138, 1971.
\newblock \doi{10.2307/1970756}.

\bibitem[Selberg(1956)]{Selberg}
Atle Selberg.
\newblock Harmonic analysis and discontinuous groups in weakly symmetric
  {Riemannian} spaces with applications to {Dirichlet} series.
\newblock \emph{J. Indian Math. Soc. (N.S.)}, 20:\penalty0 47--87, 1956.

\bibitem[Buser(1992)]{Buser}
Peter Buser.
\newblock \emph{Geometry and Spectra of Compact Riemann Surfaces}, volume 106
  of \emph{Progress in Mathematics}.
\newblock Birkh\"auser, Boston, 1992.
\newblock \doi{10.1007/978-0-8176-4992-0}.

\bibitem[Attar and Boettcher(2022)]{AttarBoettcher2022}
Alexander Attar and Igor Boettcher.
\newblock {Selberg} trace formula in hyperbolic band theory.
\newblock \emph{Phys. Rev. E}, 106\penalty0 (3):\penalty0 034114, 2022.
\newblock \doi{10.1103/PhysRevE.106.034114}.

\bibitem[R\"uckriemen(2013)]{RueckriemenBloch}
Ralf R\"uckriemen.
\newblock Recovering quantum graphs from their {Bloch} spectrum.
\newblock \emph{Ann. Inst. Fourier (Grenoble)}, 63\penalty0 (3):\penalty0
  1149--1176, 2013.
\newblock \doi{10.5802/aif.2786}.

\bibitem[Garc\'ia-Raboso and Rayan(2015)]{GarciaRabosoRayan}
Alberto Garc\'ia-Raboso and Steven Rayan.
\newblock Introduction to nonabelian {Hodge} theory: Flat connections, {Higgs}
  bundles and complex variations of {Hodge} structure.
\newblock In Radu Laza, Matthias Sch\"utt, and Noriko Yui, editors,
  \emph{Calabi--Yau Varieties: Arithmetic, Geometry and Physics}, volume~34 of
  \emph{Fields Institute Monographs}, pages 131--171. Springer, New York, 2015.
\newblock \doi{10.1007/978-1-4939-2830-9_5}.

\bibitem[Hitchin(1987)]{HitchinSelfDuality}
Nigel~J. Hitchin.
\newblock The self-duality equations on a {Riemann} surface.
\newblock \emph{Proc. London Math. Soc. (3)}, 55\penalty0 (1):\penalty0
  59--126, 1987.
\newblock \doi{10.1112/plms/s3-55.1.59}.

\bibitem[Simpson(1992)]{SimpsonHiggsLocal}
Carlos~T. Simpson.
\newblock {Higgs} bundles and local systems.
\newblock \emph{Inst. Hautes \'Etudes Sci. Publ. Math.}, 75:\penalty0 5--95,
  1992.
\newblock \doi{10.1007/BF02699491}.

\bibitem[Beauville et~al.(1989)Beauville, Narasimhan, and Ramanan]{BNR}
Arnaud Beauville, M.~S. Narasimhan, and S.~Ramanan.
\newblock Spectral curves and the generalised theta divisor.
\newblock \emph{J. Reine Angew. Math.}, 398:\penalty0 169--179, 1989.
\newblock \doi{10.1515/crll.1989.398.169}.

\bibitem[Ramanan(1973)]{Ramanan1973}
S.~Ramanan.
\newblock The moduli spaces of vector bundles over an algebraic curve.
\newblock \emph{Math. Ann.}, 200:\penalty0 69--84, 1973.
\newblock \doi{10.1007/BF01578292}.

\bibitem[Biswas and Sengupta(2018)]{BiswasSengupta2018}
Indranil Biswas and Tathagata Sengupta.
\newblock Brauer group of the moduli spaces of stable vector bundles of fixed
  determinant over a smooth curve.
\newblock \emph{Bull. Sci. Math.}, 144:\penalty0 55--63, 2018.
\newblock \doi{10.1016/j.bulsci.2018.02.001}.

\bibitem[Azam and Rayan(2026{\natexlab{a}})]{AzamRayanQuiver2026}
Mahmud Azam and Steven Rayan.
\newblock Moduli stacks of quiver bundles with applications to {Higgs} bundles.
\newblock \emph{J. Algebra}, 708:\penalty0 333--379, 2026{\natexlab{a}}.
\newblock \doi{10.1016/j.jalgebra.2026.05.018}.

\bibitem[Azam and Rayan(2026{\natexlab{b}})]{AzamRayan2026}
Mahmud Azam and Steven Rayan.
\newblock A diffeological perspective on non-{Abelian} {Hodge} theory.
\newblock \href{https://arxiv.org/abs/2606.16772} {arXiv:2606.16772},
  2026{\natexlab{b}}.
\newblock Preprint.

\bibitem[Ikeda and Rayan(2026)]{IkedaRayan2026}
Kazuki Ikeda and Steven Rayan.
\newblock Quantum entanglement, stratified spaces, and topological matter:
  Towards entanglement-sensitive {Langlands} data.
\newblock \emph{Rep. Prog. Phys.}, 89\penalty0 (6):\penalty0 067601, 2026.
\newblock \doi{10.1088/1361-6633/ae73b6}.

\bibitem[Kato(1995)]{Kato}
Tosio Kato.
\newblock \emph{Perturbation Theory for Linear Operators}.
\newblock Classics in Mathematics. Springer, Berlin, second edition, 1995.
\newblock \doi{10.1007/978-3-642-66282-9}.

\bibitem[Provost and Vall\'ee(1980)]{ProvostVallee}
J.~P. Provost and G.~Vall\'ee.
\newblock {Riemannian} structure on manifolds of quantum states.
\newblock \emph{Comm. Math. Phys.}, 76\penalty0 (3):\penalty0 289--301, 1980.
\newblock \doi{10.1007/BF02193559}.

\bibitem[Berry(1984)]{Berry}
M.~V. Berry.
\newblock Quantal phase factors accompanying adiabatic changes.
\newblock \emph{Proc. Roy. Soc. London Ser. A}, 392\penalty0 (1802):\penalty0
  45--57, 1984.
\newblock \doi{10.1098/rspa.1984.0023}.

\bibitem[Simon(1983)]{SimonHolonomy}
Barry Simon.
\newblock Holonomy, the quantum adiabatic theorem, and {Berry}'s phase.
\newblock \emph{Phys. Rev. Lett.}, 51\penalty0 (24):\penalty0 2167--2170, 1983.
\newblock \doi{10.1103/PhysRevLett.51.2167}.

\bibitem[Liu et~al.(2025)Liu, Qiang, Lu, and Xie]{LiuQiangLuXie}
Tianyu Liu, Xiao-Bin Qiang, Hai-Zhou Lu, and X.~C. Xie.
\newblock Quantum geometry in condensed matter.
\newblock \emph{Natl. Sci. Rev.}, 12\penalty0 (3):\penalty0 nwae334, 2025.
\newblock \doi{10.1093/nsr/nwae334}.

\bibitem[Xiao et~al.(2010)Xiao, Chang, and Niu]{XiaoChangNiu}
Di~Xiao, Ming-Che Chang, and Qian Niu.
\newblock {Berry} phase effects on electronic properties.
\newblock \emph{Rev. Mod. Phys.}, 82\penalty0 (3):\penalty0 1959--2007, 2010.
\newblock \doi{10.1103/RevModPhys.82.1959}.

\bibitem[Sun et~al.(2024)Sun, Chen, Bzdu\v{s}ek, and Maciejko]{SunEtAl2024}
Canon Sun, Anffany Chen, Tom\'a\v{s} Bzdu\v{s}ek, and Joseph Maciejko.
\newblock Topological linear response of hyperbolic {Chern} insulators.
\newblock \emph{SciPost Phys.}, 17\penalty0 (5):\penalty0 124, 2024.
\newblock \doi{10.21468/SciPostPhys.17.5.124}.

\bibitem[Marzari and Vanderbilt(1997)]{MarzariVanderbilt}
Nicola Marzari and David Vanderbilt.
\newblock Maximally localized generalized {Wannier} functions for composite
  energy bands.
\newblock \emph{Phys. Rev. B}, 56\penalty0 (20):\penalty0 12847--12865, 1997.
\newblock \doi{10.1103/PhysRevB.56.12847}.

\bibitem[Marzari et~al.(2012)Marzari, Mostofi, Yates, Souza, and
  Vanderbilt]{MarzariReview}
Nicola Marzari, Arash~A. Mostofi, Jonathan~R. Yates, Ivo Souza, and David
  Vanderbilt.
\newblock Maximally localized {Wannier} functions: Theory and applications.
\newblock \emph{Rev. Mod. Phys.}, 84\penalty0 (4):\penalty0 1419--1475, 2012.
\newblock \doi{10.1103/RevModPhys.84.1419}.

\bibitem[Ozawa and Goldman(2018)]{OzawaGoldman}
Tomoki Ozawa and Nathan Goldman.
\newblock Extracting the quantum metric tensor through periodic driving.
\newblock \emph{Phys. Rev. B}, 97\penalty0 (20):\penalty0 201117, 2018.
\newblock \doi{10.1103/PhysRevB.97.201117}.

\bibitem[Peotta and T\"orm\"a(2015)]{PeottaTorma}
Sebastiano Peotta and P\"aivi T\"orm\"a.
\newblock Superfluidity in topologically nontrivial flat bands.
\newblock \emph{Nat. Commun.}, 6:\penalty0 8944, 2015.
\newblock \doi{10.1038/ncomms9944}.

\bibitem[Thouless et~al.(1982)Thouless, Kohmoto, Nightingale, and den
  Nijs]{TKNN}
David~J. Thouless, Mahito Kohmoto, M.~Peter Nightingale, and Marcel den Nijs.
\newblock Quantized {Hall} conductance in a two-dimensional periodic potential.
\newblock \emph{Phys. Rev. Lett.}, 49\penalty0 (6):\penalty0 405--408, 1982.
\newblock \doi{10.1103/PhysRevLett.49.405}.

\bibitem[Niu et~al.(1985)Niu, Thouless, and Wu]{NiuThoulessWu}
Qian Niu, David~J. Thouless, and Yong-Shi Wu.
\newblock Quantized {Hall} conductance as a topological invariant.
\newblock \emph{Phys. Rev. B}, 31\penalty0 (6):\penalty0 3372--3377, 1985.
\newblock \doi{10.1103/PhysRevB.31.3372}.

\bibitem[Mathai and Wilkin(2019)]{MathaiWilkin2019}
Varghese Mathai and Graeme Wilkin.
\newblock Fractional quantum numbers via complex orbifolds.
\newblock \emph{Lett. Math. Phys.}, 109\penalty0 (11):\penalty0 2473--2484,
  2019.
\newblock \doi{10.1007/s11005-019-01190-y}.

\bibitem[Mathai and Thiang(2019)]{MathaiThiang2019}
Varghese Mathai and Guo~Chuan Thiang.
\newblock Topological phases on the hyperbolic plane: fractional bulk--boundary
  correspondence.
\newblock \emph{Adv. Theor. Math. Phys.}, 23\penalty0 (3):\penalty0 803--840,
  2019.
\newblock \doi{10.4310/ATMP.2019.v23.n3.a5}.

\bibitem[Marcolli and Seipp(2017)]{MarcolliSeipp2017}
Matilde Marcolli and Kyle Seipp.
\newblock Twisted index theory on orbifold symmetric products and the
  fractional quantum {Hall} effect.
\newblock \emph{Adv. Theor. Math. Phys.}, 21\penalty0 (2):\penalty0 451--501,
  2017.
\newblock \doi{10.4310/ATMP.2017.v21.n2.a3}.

\bibitem[Zhang et~al.(2023)Zhang, Di, Zheng, Sun, and
  Zhang]{ZhangEtAlSecondChern}
Weixuan Zhang, Fengxiao Di, Xingen Zheng, Houjun Sun, and Xiangdong Zhang.
\newblock Hyperbolic band topology with non-trivial second {Chern} numbers.
\newblock \emph{Nat. Commun.}, 14:\penalty0 1083, 2023.
\newblock \doi{10.1038/s41467-023-36767-8}.

\bibitem[Ho\v{r}ava(2005)]{HoravaFermiK}
Petr Ho\v{r}ava.
\newblock Stability of {Fermi} surfaces and {$K$}-theory.
\newblock \emph{Phys. Rev. Lett.}, 95\penalty0 (1):\penalty0 016405, 2005.
\newblock \doi{10.1103/PhysRevLett.95.016405}.

\bibitem[Varma et~al.(1989)Varma, Littlewood, Schmitt-Rink, Abrahams, and
  Ruckenstein]{VarmaMarginalFermi}
Chandra~M. Varma, Peter~B. Littlewood, Stephan Schmitt-Rink, Elihu Abrahams,
  and Allan~E. Ruckenstein.
\newblock Phenomenology of the normal state of {Cu--O} high-temperature
  superconductors.
\newblock \emph{Phys. Rev. Lett.}, 63\penalty0 (18):\penalty0 1996--1999, 1989.
\newblock \doi{10.1103/PhysRevLett.63.1996}.

\bibitem[Bellissard(1986)]{BellissardKTheory}
Jean Bellissard.
\newblock K-theory of {$C^*$}-algebras in solid state physics.
\newblock In T.~C. Dorlas, N.~M. Hugenholtz, and M.~Winnink, editors,
  \emph{Statistical Mechanics and Field Theory: Mathematical Aspects}, volume
  257 of \emph{Lecture Notes in Physics}, pages 99--156. Springer, Berlin,
  1986.
\newblock \doi{10.1007/3-540-16777-3_74}.

\bibitem[Bellissard et~al.(1994)Bellissard, van Elst, and
  Schulz-Baldes]{BellissardVanElstSchulzBaldes}
Jean Bellissard, Andreas van Elst, and Hermann Schulz-Baldes.
\newblock The noncommutative geometry of the quantum {Hall} effect.
\newblock \emph{J. Math. Phys.}, 35\penalty0 (10):\penalty0 5373--5451, 1994.
\newblock \doi{10.1063/1.530758}.

\bibitem[Connes(1994)]{ConnesNCG}
Alain Connes.
\newblock \emph{Noncommutative Geometry}.
\newblock Academic Press, San Diego, 1994.

\bibitem[Baum et~al.(1994)Baum, Connes, and Higson]{BaumConnesHigson}
Paul Baum, Alain Connes, and Nigel Higson.
\newblock Classifying space for proper actions and {$K$}-theory of group
  {$C^*$}-algebras.
\newblock In Robert~S. Doran, editor, \emph{{$C^*$}-Algebras: 1943--1993},
  volume 167 of \emph{Contemporary Mathematics}, pages 240--291. American
  Mathematical Society, Providence, RI, 1994.
\newblock \doi{10.1090/conm/167/1292018}.

\bibitem[Higson and Kasparov(2001)]{HigsonKasparov}
Nigel Higson and Gennadi Kasparov.
\newblock {$E$}-theory and {$KK$}-theory for groups which act properly and
  isometrically on {Hilbert} space.
\newblock \emph{Invent. Math.}, 144\penalty0 (1):\penalty0 23--74, 2001.
\newblock \doi{10.1007/s002220000118}.

\bibitem[Green(1978)]{GreenImprimitivity}
Philip Green.
\newblock The local structure of twisted covariance algebras.
\newblock \emph{Acta Math.}, 140:\penalty0 191--250, 1978.
\newblock \doi{10.1007/BF02392308}.

\bibitem[Pimsner and Voiculescu(1980)]{PimsnerVoiculescu1980}
Mihai Pimsner and Dan Voiculescu.
\newblock Exact sequences for {$K$}-groups and {Ext}-groups of certain
  cross-product {$C^*$}-algebras.
\newblock \emph{J. Operator Theory}, 4\penalty0 (1):\penalty0 93--118, 1980.

\bibitem[Rieffel(1981)]{RieffelRotation1981}
Marc~A. Rieffel.
\newblock {$C^*$}-algebras associated with irrational rotations.
\newblock \emph{Pacific J. Math.}, 93\penalty0 (2):\penalty0 415--429, 1981.
\newblock \doi{10.2140/pjm.1981.93.415}.

\bibitem[Carey et~al.(1998)Carey, Hannabuss, Mathai, and
  McCann]{CareyHannabussMathaiMcCann1998}
Alan~L. Carey, Keith~C. Hannabuss, Varghese Mathai, and Paul McCann.
\newblock Quantum {Hall} effect on the hyperbolic plane.
\newblock \emph{Comm. Math. Phys.}, 190\penalty0 (3):\penalty0 629--673, 1998.
\newblock \doi{10.1007/s002200050255}.

\bibitem[Bellissard et~al.(1992)Bellissard, Bovier, and
  Ghez]{BellissardBovierGhez}
Jean Bellissard, Anton Bovier, and Jean-Michel Ghez.
\newblock Gap labelling theorems for one-dimensional discrete {Schr}\"odinger
  operators.
\newblock \emph{Rev. Math. Phys.}, 4\penalty0 (1):\penalty0 1--37, 1992.
\newblock \doi{10.1142/S0129055X92000029}.

\bibitem[Kellendonk and Putnam(2000)]{KellendonkPutnam}
Johannes Kellendonk and Ian~F. Putnam.
\newblock Tilings, {$C^*$}-algebras, and {$K$}-theory.
\newblock In Michael Baake and Robert~V. Moody, editors, \emph{Directions in
  Mathematical Quasicrystals}, volume~13 of \emph{CRM Monograph Series}, pages
  177--206. American Mathematical Society, Providence, RI, 2000.
\newblock \doi{10.1090/crmm/013/07}.

\bibitem[Freed and Lott(2010)]{FreedLott}
Daniel~S. Freed and John Lott.
\newblock An index theorem in differential {$K$}-theory.
\newblock \emph{Geom. Topol.}, 14\penalty0 (2):\penalty0 903--966, 2010.
\newblock \doi{10.2140/gt.2010.14.903}.

\bibitem[Baake and Grimm(2013)]{BaakeGrimm}
Michael Baake and Uwe Grimm.
\newblock \emph{Aperiodic Order, Volume 1: A Mathematical Invitation}, volume
  149 of \emph{Encyclopedia of Mathematics and its Applications}.
\newblock Cambridge University Press, Cambridge, 2013.
\newblock \doi{10.1017/CBO9781139025256}.

\bibitem[Atiyah(1967)]{AtiyahKTheory}
Michael~F. Atiyah.
\newblock \emph{{K}-Theory}.
\newblock W. A. Benjamin, New York, 1967.

\bibitem[Karoubi(1978)]{KaroubiKTheory}
Max Karoubi.
\newblock \emph{{K}-Theory: An Introduction}, volume 226 of \emph{Grundlehren
  der mathematischen Wissenschaften}.
\newblock Springer, Berlin, 1978.
\newblock \doi{10.1007/978-3-540-79890-3}.

\end{thebibliography}
\endgroup

\end{document}